\PassOptionsToPackage{table,dvipsnames}{xcolor}
\documentclass[12pt,a4paper]{article}

\makeatletter
\def\@seccntformat#1{\@ifundefined{#1@cntformat}%
   {\csname the#1\endcsname\quad}%
   {\csname #1@cntformat\endcsname}%
}
\let\oldappendix\appendix
\renewcommand\appendix{%
    \oldappendix
    \newcommand{\section@cntformat}{\appendixname~\thesection:\,\,}
}
\makeatother

\usepackage[natbibapa]{apacite}

\usepackage[utf8]{inputenc}
\usepackage[T1]{fontenc}
\usepackage{amsmath}
\usepackage{amsfonts}
\usepackage{amssymb}
\usepackage{amsthm}
\usepackage{mathtools}
\usepackage{dsfont}
\usepackage{afterpage}
\usepackage[flushleft]{threeparttable}
\usepackage{pdflscape}
\usepackage{rotating}
\usepackage{changepage}
\usepackage{graphicx}
\graphicspath{{./Figure/}}
\usepackage{relsize}
\usepackage{url}
\usepackage{setspace}
\usepackage{multirow}
\usepackage{colortbl}
\usepackage{tabulary}
\usepackage{etoolbox}
\usepackage{booktabs}
\usepackage{siunitx}
\usepackage{array}
\usepackage{caption}
\usepackage{adjustbox}
\usepackage{tabularx}
\usepackage[section]{placeins}
\usepackage[normalem]{ulem}
\usepackage{xcolor}
\usepackage{geometry}
\usepackage{hyperref}

\newtheorem{proposition}{Proposition}
\hypersetup{
  colorlinks,
  citecolor=violet,
  linkcolor=red,
  urlcolor=blue}
\newtheorem{remark}{Remark}
\newtheorem{theorem}{Theorem}
\newtheorem{definition}[theorem]{Definition}

\newcommand{\R}{\mathbb{R}}
\newcommand{\E}{\mathbb{E}}
\theoremstyle{plain}
\newtheorem{stheorem}{Theorem}
\newtheorem{slemma}{Lemma}
\newtheorem{scorollary}{Corollary}
\theoremstyle{remark}
\newtheorem{sremark}{Remark}
\theoremstyle{plain}

\begin{document}

\title{
\textbf{Asymmetric $q$-Gaussian Distributions from Tsallis Maximum Entropy}}
\author{Francesco Cesarone$^{1,*}$, Massimiliano Corradini$^1$, Nicol\`o Giunta$^1$, Serena Rocci$^1$\\
	{\small $^1$\emph{Roma Tre University - Department of Business Studies}}\\
	{\footnotesize francesco.cesarone@uniroma3.it, massimiliano.corradini@uniroma3.it,}\\
	{\footnotesize nicolo.giunta@uniroma3.it, serena.rocci@uniroma3.it}\\
	{\footnotesize $^*$Corresponding author: francesco.cesarone@uniroma3.it}
}
\date{}
\maketitle

\begin{abstract}
Within the framework of non-extensive statistical mechanics, $q$-Gaussian distributions arise as maximum-entropy solutions of Tsallis entropy under suitable escort moment constraints. Although widely used in physics, biology, geology, finance, and related fields, standard $q$-Gaussians are intrinsically symmetric and cannot capture asymmetric fluctuations often observed in empirical data.

We ask whether asymmetry can arise from the Tsallis maximum-entropy principle, without externally skewing or joining distributions and we show that this is achieved by retaining the Tsallis entropy and imposing a global escort mean together with distinct downside and upside escort semi-variances.

The key step is to reformulate the problem of maximum Tsallis entropy in terms of the escort density. The nonlinear escort constraints then become affine, while maximizing Tsallis entropy becomes equivalent to maximizing a strictly concave functional over a convex feasible set. Consequently, any admissible stationary solution is the unique global maximizer of the corresponding variational problem.
The finite-sample approximation properties of the proposed asymmetric $q$-Gaussian are studied through Monte Carlo experiments using synthetic data generated from Azzalini skew-Student distributions. Compared with the classical symmetric $q$-Gaussian, the asymmetric specification systematically provides a closer approximation to synthetic asymmetric heavy-tailed distributions.
Finally, we fit the model to return series from major equity indices, a commodity, and cryptocurrencies, and compare its in-sample goodness of fit with that of a symmetric $q$-Gaussian. The results show that the asymmetric model offers a closer in-sample description of unconditional return distributions, most clearly in the tails.

\bigskip \noindent \textbf{Keywords}: Tsallis entropy; asymmetric $q$-Gaussian; escort semi-variance; financial returns; heavy tails; goodness-of-fit.
\end{abstract}

\section{Introduction}
Since the pioneering work of \cite{tsallis88}, non-extensive statistical mechanics has provided a powerful framework for describing a wide variety of complex systems
characterized, for instance, by long-range interactions, memory effects, multifractality, and non-Gaussian fluctuations, see \cite{Tsallis2009} and references therein. Within this framework, $q$-Gaussian distributions emerge as maximizers of Tsallis entropy under suitable generalized  mean and  variance constraints and naturally generate power-law tails. Moreover, $q$-Gaussians play a distinguished theoretical role through generalized central limit theorems in non-extensive statistical mechanics \citep{umarov2008qclt}. Over the last decades $q$-Gaussians have been successfully applied in a broad variety of contexts, including turbulence \citep{beck2001dynamical,beck2003superstatistics}, anomalous diffusion \citep{tsallis1996anomalous,lutz2003anomalous}, plasmas, granular systems, biological and geophysical phenomena (\citealp{Tsallis2009}, and references therein), and financial markets \citep{borland2002option,queiros2005nonextensive}. In many of these systems, empirical distributions exhibit significant deviations from Gaussian statistics, typically characterized by heavy tails, intermittency, and asymmetric fluctuations.

Despite their broad applicability, standard $q$-Gaussian distributions remain intrinsically symmetric around their location parameter. However, many empirical systems exhibit asymmetric fluctuations, often reflecting distinct mechanisms governing large positive and negative deviations. In financial markets, for example, downside fluctuations are typically stronger and more persistent than upside fluctuations, especially during periods of market stress and volatility amplification. More generally, asymmetry naturally arises in several non-equilibrium systems characterized by intermittent dynamics and heterogeneous fluctuation regimes.

Among these applications, financial markets provide a particularly important example of complex interacting systems exhibiting strong non-Gaussian behaviour. Empirical return distributions are well known to display heavy tails, excess kurtosis, asymmetry, volatility clustering, and intermittent episodes of extreme fluctuations \citep{mantegna1999introduction,sornette03,Cont2001}. Such stylized facts are difficult to reconcile with traditional Gaussian-based approaches, which systematically underestimate the probability of large market movements and extreme events.

Most existing studies have focused on symmetric $q$-Gaussian distributions, typically derived under constraints on the escort mean and escort variance. Various asymmetric heavy-tailed constructions have been proposed in the literature, including skewed Student-t families \citep{fernandez1998fat,jones2003skewt} and related skewing mechanisms. These constructions address the statistical problem of representing skewness, but do not in general address the more specific question considered here: how to generate asymmetry while retaining a single global Tsallis maximum-entropy problem.

Recent contributions have sought to extend the Tsallis framework itself to asymmetric settings. 
In particular, \cite{devi21asymmetric} proposed a piecewise model obtained by joining separately normalized half $q$-Gaussians with side-specific non-extensivity and scale parameters, motivated by separate sign-dependent Tsallis and superstatistical arguments. That construction highlighted the relevance of treating positive and negative fluctuations differently. The objective pursued here is distinct: we retain the standard Tsallis entropy with a single entropic index and introduce asymmetry only through joint escort constraints on one globally normalized density.

In this work, we formulate and solve this asymmetric Tsallis maximum-entropy problem by replacing the single escort-variance constraint of the symmetric construction with distinct downside and upside escort semi-variance constraints. The use of escort semi-variances allows positive and negative fluctuations to be treated separately, thereby providing a natural entropy-based mechanism for generating asymmetry without changing the entropy functional or introducing side-specific entropic indices. The resulting family accommodates both heavy tails and asymmetric fluctuation patterns through distinct left and right dispersion structures.

The central methodological step is to use the escort density $p=f^q/\int f^q\,dx$ as the optimization variable\footnote{Hereafter, the standard Lebesgue integral of $g$ over $\mathbb{R}$ will be denoted by $\int g\,dx$ or $\int g(x)\,dx$.}. In the original variable $f$, each escort constraint depends on $f$ both pointwise and through the normalizing factor $\int f^q\,dx$. Passing to $p$ removes this coupling: the constraints become affine, and maximizing $S_q[f]$ becomes equivalent to maximizing a strictly concave functional $J[p]$ over a convex feasible set. Thus, the reformulation turns the original nonlinear variational problem into a standard concave optimization problem. Strict concavity then provides a variational characterization: any feasible stationary candidate is a global maximizer, and such a maximizer is unique. As a consequence, each admissible member satisfying the escort constraints is the unique Tsallis maximum-entropy density for the corresponding escort mean and semi-variance constraint set.

To assess the finite-sample approximation capabilities of the proposed asymmetric $q$-Gaussian, we perform extensive Monte Carlo experiments using synthetic asymmetric heavy-tailed data generated from Azzalini skew-Student distributions \citep{azzalini2003distributions}. The proposed asymmetric $q$-Gaussian is systematically compared with a symmetric $q$-Gaussian benchmark through likelihood-based criteria and goodness-of-fit distance measures, namely the Kolmogorov--Smirnov and the tail-sensitive Anderson--Darling statistics. The model is then fitted to real financial logarithmic returns, daily, 7-day, and 20-day series of nine markets comprising major equity indices (BIST, FTSE, NASDAQ, DJIA, S\&P\,500, BOVESPA), a commodity (Brent), and two cryptocurrencies (Bitcoin and Ethereum), and compared in sample with the symmetric $q$-Gaussian benchmark, with sampling variability assessed by a nonparametric bootstrap. This empirical analysis is intended as a controlled approximation study for unconditional heavy-tailed asymmetric distributions rather than as a model for conditional volatility dynamics or out-of-sample forecasting.

The remainder of the paper is organized as follows. Section \ref{sec:TheoreticalFramework} recalls the Shannon and Tsallis maximum-entropy constructions and the symmetric $q$-Gaussian. Section \ref{sec:asym_construction} introduces the asymmetric $q$-Gaussian family and proves that any feasible member is the unique global maximizer of the Tsallis maximum-entropy problem under the corresponding escort mean and escort semi-variance constraints. A sharp characterization of the existence of multipliers for a prescribed escort mean and prescribed escort semi-variances is stated in Proposition~\ref{prop:inverse-moment-existence}; its proof is given in Section~S1 of the Supplementary Material. Section \ref{sec:empirical} reports the synthetic and real-market experiments, and Section \ref{sec:conclusions} concludes. The full set of result tables is collected in Appendices~\ref{app:mc_tables} and~\ref{app:real_tables}.

\section{Theoretical framework}\label{sec:TheoreticalFramework}

This section recalls the classical Shannon maximum-entropy construction leading to the Gaussian distribution and the standard non-extensive maximum-entropy framework introduced by Tsallis, in which $q$-Gaussian distributions arise naturally under escort mean--variance constraints. This variational structure provides a natural foundation for extending the $q$-Gaussian family to asymmetric settings through escort semi-variance constraints.

\subsection{Maximum Shannon entropy}\label{sec:MaxShannon}
Let $X$ be a continuous random variable with probability density function (pdf) $g$. The Shannon entropy of  $g$ is defined as
\begin{equation}\label{eq:shannon_entropy}
S[g]=-\int g(x)\ln g(x)\,dx.
\end{equation}
Assuming that $g:\mathbb{R}\rightarrow \mathbb{R}$ 
belongs to $\mathcal{G}$, with
\[
\mathcal{G}=\{ g\geq 0 \,\text{a.e.}, \int g(x)\,dx=1, \int (x-\mu)g(x)\,dx=0, \int (x-\mu)^2 g(x)\, dx=\sigma^2\},
\]
with $\mu\in\mathbb{R}$ and $\sigma^2>0$, the maximizer $g^\star$ of Shannon entropy, as is well known, is the Gaussian density
\begin{equation}\label{eq:gaussian}
g^\star(x)=\frac{1}{\sqrt{2\pi\sigma^2}}
e^{-\frac{(x-\mu)^2}{2\sigma^2}}.
\end{equation}
The Gaussian distribution (\ref{eq:gaussian}) therefore provides the natural maximum-entropy baseline when only first- and second-order moment information is prescribed.

Although the Shannon maximizer provides a rigorous benchmark under mean--variance constraints, its exponentially decaying tails are inconsistent with the power-law--like behaviour commonly observed in many empirical datasets \citep{Tsallis2009}.

\subsection{Maximum Tsallis entropy}

A natural extension of the Shannon maximum-entropy framework is obtained by replacing Shannon entropy with a generalized entropy. In particular, non-extensive Tsallis entropy leads to $q$-Gaussian maximizers,  capable of capturing heavy-tailed behaviour while preserving analytical tractability \citep{tsallis88,Tsallis2009, tsallis2004nonextensive,deSouzaTsallis1997}.

 The Tsallis entropy $S_q[f]$ of a pdf $f$ is defined, for $q \neq 1$, as 
\begin{equation}\label{eq:tsallis-entropy}
    S_q[f]=\frac{1-\int f(x)^q\,dx}{q-1},
\end{equation}
provided $\int f(x)^q \, dx <+\infty$, and   it recovers the standard Shannon entropy in the limit $q\rightarrow 1$.

\begin{definition}[Escort probabilities]
Following the standard escort formalism used in non-extensive Tsallis entropy, see \citep{Tsallis2009}, we define the escort probability $f_q(x)$ (at fixed $q>1$)  associated with the pdf $f(x)$ as
\begin{equation}\label{eq:p_vs_f}
    f_q(x)=\frac{f(x)^q}{\|f^q\|},
\end{equation}
where $\|g\|=\int g(x)\,dx$, and the escort $q$-expectation as
\[
\mathbb{E}_q[h(X)] := \int h(x)\, f_q(x)\,dx.
\]
\end{definition}
\noindent
Note that $f_q(x)>0$ almost everywhere whenever $f(x)$ is and that $f_q(x)$ is a pdf since $\int f_q(x)\, dx =1$.
\begin{remark}\label{rem:escort_inversion}
The escort transformation \eqref{eq:p_vs_f} can be inverted. In fact raising Eq.~\eqref{eq:p_vs_f} to the power $1/q$ gives
\begin{equation}\label{eq:a}
f_q(x)^{1/q}=\frac{f(x)}{\|f^q\|^{1/q}}.    
\end{equation}
Integrating Eq. (\ref{eq:a}), using $\|f\|=1$, yields 
\begin{equation}\label{eq:b}
\|f_q^{1/q}\|=\frac{1}{\|f^q\|^{1/q}}.    
\end{equation}
From Eq. (\ref{eq:a}) we have
\begin{equation}\label{eq:f_vs_p}
f(x)=f_q(x)^{1/q}\|f^q\|^{1/q}.
\end{equation}
\end{remark} 
\noindent
Finally, inserting Eq. (\ref{eq:b}) in Eq. (\ref{eq:f_vs_p}) we obtain
\begin{equation}\label{eq:c}
f(x)=\frac{f_q(x)^{1/q}}{\|f_q^{1/q}\|}.    
\end{equation}

For \(\mu_q\in\mathbb{R}\), \(\sigma_q^2>0\) let
\begin{equation}\label{eq:F}
    \mathcal{F}
=
\left\{
f\geq 0\ \text{a.e.} :
\int f(x)\,dx=1,
\int x f_q(x)\,dx=\mu_q,
\int (x-\mu_q)^2 f_q(x)\,dx=\sigma_q^2
\right\}
\end{equation}
be the admissible set of pdfs $f$ such that $\|f^q\|<+\infty$.

\begin{remark}\label{rem:symmetric_qrange}
As is well known, see \citep{Tsallis2009}, for $1<q<3$, the solution of the non-extensive maximum-entropy problem
\[
\max_{f\in \mathcal{F}} S_q[f],
\]
where $S_q[f]$ is given by Eq.~(\ref{eq:tsallis-entropy}) and $\mathcal{F}$ is defined in Eq.~(\ref{eq:F}), is the symmetric $q$-Gaussian distribution
\begin{equation}\label{eq:q-gaussiana_symm}
f^\star(x)=\frac{1}{Z_q}\Bigl[1+\frac{q-1}{3-q}\frac{(x-\mu_q)^2}{\sigma^2_q}\Bigr]^{-\frac{1}{q-1}},
\end{equation}
where \(Z_q\) is the normalizing constant. On the full real line, the range \(1<q<3\) is the natural admissible range for the normalized $q$-Gaussian with finite escort variance. If one requires the ordinary variance of \(f^\star\) to be finite, the stricter condition \(1<q<5/3\) is needed.
\end{remark}
\begin{remark}
As is well known, see \citep{Tsallis2009}, in the limit \(q\to 1\), the Tsallis entropy converges to the Shannon entropy, and the $q$-Gaussian in Eq.~\eqref{eq:q-gaussiana_symm} converges to the standard Gaussian in Eq.~\eqref{eq:gaussian}, with $\mu_1=\mu$ and $\sigma^2_1=\sigma^2$.
\end{remark}

\section{Generalized Maximum Tsallis entropy approach: asymmetric $q$-Gaussian distributions}
\label{sec:asym_construction}

Despite their ability to reproduce fat tails, symmetric $q$-Gaussian distributions remain structurally limited in capturing asymmetries. For instance, in financial applications, empirical return distributions are typically asymmetric, with heavier downside tails and different dispersion patterns for gains and losses \citep{Cont2001}.

 The question addressed here is not how to construct an arbitrary asymmetric heavy-tailed density, but whether asymmetry can be generated by the standard Tsallis maximum-entropy principle itself. The minimal modification of the symmetric formulation is to retain the same entropy and a common entropic index, while replacing the single escort-variance constraint with separate downside and upside escort semi-variance constraints. Because the asymmetry then enters through the constraints rather than through an external modification of the density, the resulting family captures skewed heavy-tailed behaviour while retaining a maximum-entropy interpretation.

We consider a continuous random variable $X$ with pdf $f$ and the Tsallis non-extensive entropy of $f$, with entropic index $q>1$,\footnote{In this paper, we focus on pdfs with support on $\mathbb{R}$.}
\begin{equation}\label{eq:tsallis_entropy_qgt1}
    S_q[f]
    =
    \frac{1-\int f(x)^q\,dx}{q-1}, \qquad q>1.
\end{equation} 

For $\mu_q \in \mathbb{R}$, we impose the escort mean constraint as in the symmetric case,
\begin{equation}\label{eq:escort_mean_original}
    \int(x-\mu_q) f_q(x)\,dx=0,
\end{equation}
where the escort pdf $f_q$ is given by Eq.~(\ref{eq:p_vs_f}). 
To account for asymmetric distributions, we impose different quadratic dispersion constraints on the two sides of the escort mean $\mu_q$:
\begin{equation}\label{eq:escort_semivariances_original}
    \int[(x-\mu_q)_-]^2 f_q(x)\,dx
    =
    \sigma_{q,-}^2,
    \qquad
    \int[(x-\mu_q)_+]^2 f_q(x)\,dx
    =
    \sigma_{q,+}^2.
\end{equation}
In Eq.~(\ref{eq:escort_semivariances_original})
\[
    (t)_-:=\min\{t,0\}
    \qquad\text{and}\qquad
    (t)_+:=\max\{t,0\},
\]
so that the two constraints correspond to downside and upside quadratic deviations with respect to $\mu_q$, respectively.

Thus, the asymmetric non-extensive maximum-entropy problem is
\begin{equation}\label{eq:max_S_q_in_Fpm}
    \max_{f\in \mathcal{F}_\pm} S_q[f], \qquad q>1
\end{equation}
where $S_q[f]$ is defined in Eq.~(\ref{eq:tsallis_entropy_qgt1}), provided that
$
    \int f(x)^q\,dx < \infty,
$
with admissible set $\mathcal{F}_\pm$  given by
\begin{equation}\label{eq:F_pm}
\begin{aligned}
\mathcal{F}_\pm=\Bigg\{ f(x)\geq 0 \ \text{a.e.} \ :\ 
&\int f(x)\,dx=1, \\
&\int(x-\mu_q)f_q(x)\,dx=0, \\
&\int[(x-\mu_q)_-]^2 f_q(x)\,dx=\sigma^2_{q,-}, \\
&\int[(x-\mu_q)_+]^2 f_q(x)\,dx=\sigma^2_{q,+}
\Bigg\}.
\end{aligned}
\end{equation}

\subsection{Escort reformulation of the non-extensive maximum-entropy problem}

A direct treatment in the original variable $f$ is complicated by the normalized escort ratio $f^q/\int f^q\,dx$, which makes the constraints nonlinear and couples the first-order conditions through the unknown integral $\int f^q\,dx$. Since the escort transformation is invertible, we instead take the escort density itself as the primary optimization variable. This preserves the original Tsallis problem while removing the normalization coupling from the constraints. We therefore reformulate problem~(\ref{eq:max_S_q_in_Fpm}) in terms of
\begin{equation}\label{eq:p_escort}
    p(x)
    \equiv
    f_q(x)
    =
    \frac{f(x)^q}{\int f(u)^q\,du}.
\end{equation}
\begin{proposition}
The asymmetric non-extensive maximum-entropy problem~(\ref{eq:max_S_q_in_Fpm}) can be reformulated in terms of the escort density $p$ as
\begin{equation}\label{eq:max_J_in_Ppm}
    \max_{p \in \mathcal{P}_\pm} J[p], \qquad q>1
\end{equation}
with
\begin{equation}\label{eq:J_functional}
    J[p]=\int p(x)^{1/q}\,dx,
\end{equation}
provided $\int p(x)^{1/q}\,dx < +\infty$,
and
\begin{equation}\label{eq:P_pm}
\begin{aligned}
\mathcal{P}_\pm=\Bigg\{ p(x)\geq 0 \ \text{a.e.} \ :\ 
&\int p(x)\,dx=1, \\
&\int(x-\mu_q)p(x)\,dx=0, \\
&\int[(x-\mu_q)_-]^2 p(x)\,dx=\sigma^2_{q,-}, \\
&\int[(x-\mu_q)_+]^2 p(x)\,dx=\sigma^2_{q,+}
\Bigg\}.
\end{aligned}
\end{equation}
\end{proposition}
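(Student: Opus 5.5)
The plan is to show that the escort map $T:f\mapsto f_q=f^q/\|f^q\|$ is a bijection between $\mathcal{F}_\pm$ and $\mathcal{P}_\pm$ (restricted to densities with $\|f^q\|<\infty$ and $\|p^{1/q}\|<\infty$ respectively). We then show that, along this bijection, $S_q[f]$ is a strictly increasing function of $J[p]$. Maximizing one problem is then equivalent to maximizing the other, and maximizers correspond through $T$ and its inverse.

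\textbf{Step 1 (forward map).} Take $f\in\mathcal{F}_\pm$ with $0<\|f^q\|<\infty$. Then $p=T f$ is a.e.\ nonnegative and integrates to one. The three escort constraints in \eqref{eq:F_pm} are, by definition, exactly the three moment constraints in \eqref{eq:P_pm} written for $p$, so $p\in\mathcal{P}_\pm$. By \eqref{eq:b} of Remark~\ref{rem:escort_inversion}, we also get $J[p]=\|p^{1/q}\|=\|f^q\|^{-1/q}<\infty$.

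\textbf{Step 2 (inverse map).} Take $p\in\mathcal{P}_\pm$ with $0<J[p]<\infty$, and define $f=p^{1/q}/J[p]$, motivated by \eqref{eq:c}. Then $f\ge 0$ and $\int f\,dx=1$. Moreover $f^q=p/J[p]^q$, so $\|f^q\|=J[p]^{-q}<\infty$ and $f_q=f^q/\|f^q\|=p$. Hence the escort constraints on $f$ are the affine constraints on $p$, and $f\in\mathcal{F}_\pm$. Together with \eqref{eq:c}, this shows the two maps are mutual inverses, so $T$ is a bijection of the admissible sets.

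\textbf{Step 3 (objective correspondence).} From Step 1, $\int f^q\,dx=J[p]^{-q}$. Therefore
\[
S_q[f]=\frac{1-J[p]^{-q}}{q-1}.
\]
Since $q>1$, the map $t\mapsto (1-t^{-q})/(q-1)$ is strictly increasing on $(0,\infty)$. It follows that $f^\star$ maximizes $S_q$ over $\mathcal{F}_\pm$ if and only if $Tf^\star$ maximizes $J$ over $\mathcal{P}_\pm$, and that the two suprema correspond through this map.

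\textbf{Main obstacle.} The delicate point is the finiteness and positivity bookkeeping behind the standing proviso ``$\int f^q<\infty$'' versus ``$\int p^{1/q}<\infty$''. In particular, we must verify that $\|f^q\|>0$ and $J[p]>0$, which holds because $f$ and $p$ are densities and so are nonzero on a set of positive measure. We also need the case $\|f^q\|=\infty$ to be excluded on both sides. I would handle this by stating the bijection explicitly between the subsets where these integrals are finite, using the identity $J[p]=\|f^q\|^{-1/q}$ to show that one integral is finite exactly when the other is positive and finite.
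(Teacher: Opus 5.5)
Your proposal is correct and follows essentially the same route as the paper: both use the inversion $f=p^{1/q}/J[p]$ from Eq.~\eqref{eq:c}, the identity $\int f^q\,dx=J[p]^{-q}$, and the strict monotonicity of $t\mapsto(1-t^{-q})/(q-1)$ for $q>1$. Your explicit inverse map and the positivity/finiteness bookkeeping make rigorous the paper's one-line claim that the constraints of $\mathcal{F}_\pm$ are ``mapped exactly'' onto those of $\mathcal{P}_\pm$, including the surjectivity onto $\mathcal{P}_\pm$ that the paper leaves implicit.
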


\begin{proof}
Let $p$ be the escort density associated with $f$, as defined in Eq.~(\ref{eq:p_escort}). Then, using Eq. (\ref{eq:c}) and Eq. (\ref{eq:J_functional}), we have
\begin{equation}\label{eq:J_functional_1}
    f(x)=\frac{p(x)^{1/q}}{\int p(u)^{1/q}\,du}
    =
    \frac{p(x)^{1/q}}{J[p]},
\end{equation}
and hence
\begin{equation}\label{eq:fq_vs_J}
    f(x)^q=\frac{p(x)}{J[p]^q}.
\end{equation}
Therefore,
\begin{equation}\label{eq:int_fq}
  \int f(x)^q \,dx=\frac{1}{J[p]^q},  
\end{equation}
since $\int p(x)\,dx=1$. 
Substituting Eq.~(\ref{eq:int_fq}) into Eq.~(\ref{eq:tsallis_entropy_qgt1}) gives
\begin{equation}\label{eq:Sq_vs_J}
  S_q[f]=\frac{1-J[p]^{-q}}{q-1}.  
\end{equation}
Since the map
\[
    t\mapsto \frac{1-t^{-q}}{q-1}
\]
is strictly increasing for $t>0$ and $q>1$, maximizing $S_q[f]$ is equivalent to maximizing $J[p]$. Finally, under the escort transformation (\ref{eq:p_escort}), the constraints defining $\mathcal{F}_\pm$ in Eq. (\ref{eq:F_pm}) are mapped exactly into those defining $\mathcal{P}_\pm$ in Eq. (\ref{eq:P_pm}), and the claim follows.
\end{proof}
\begin{remark}
Problem~(\ref{eq:max_J_in_Ppm}) differs substantially from
problem~(\ref{eq:max_S_q_in_Fpm}) in two main respects. First,
problem~(\ref{eq:max_S_q_in_Fpm}) is posed on the non-convex feasible
set \(\mathcal{F}_\pm\), defined in Eq.~(\ref{eq:F_pm}), whereas
problem~(\ref{eq:max_J_in_Ppm}) is posed on the convex feasible set
\(\mathcal{P}_\pm\), defined in Eq.~(\ref{eq:P_pm}). Second, unlike
problem~(\ref{eq:max_S_q_in_Fpm}), the present formulation involves a
single probability density, \(p\), and does not require the introduction
of an associated escort probability. Thus, the escort reparametrization is not merely notational: it is the step that converts the original constrained Tsallis problem into a tractable concave program. The same argument applies, as the symmetric special case, to the standard escort mean--variance formulation.
\end{remark}
\begin{remark}\label{rem:fstar_vs_pstar}
Whenever problem (\ref{eq:max_J_in_Ppm}) admits a maximizer $p=p^{*}$, it follows from 
Eq. (\ref{eq:J_functional_1}) that the corresponding maximizer $f^{*}(x)$ of problem (\ref{eq:max_S_q_in_Fpm}) is given by
\begin{equation}\label{eq:fstar_vs_pstar}
    f^*(x)
    =
    \frac{[p^*(x)]^{1/q}}{\int [p^*(u)]^{1/q}\,du}.
\end{equation}
The corresponding value of the Tsallis non-extensive entropy is given by
\begin{equation}
    S_q[f^*]
    = \frac{1-\int [f^*(x)]^q\,dx}{q-1}.
\end{equation}

\end{remark}

\begin{remark}
Once the problem is written in the escort variable, the first-order
conditions become local in the optimization density and the prescribed
quantities enter linearly. The stationary family can therefore be derived
directly from a single Lagrangian. It follows that problem~(\ref{eq:max_J_in_Ppm})--(\ref{eq:P_pm}) can be associated
with the Lagrangian
\begin{equation}\label{eq:lagrangian_p}
\begin{aligned}
\mathcal L[p]
=&
\int p(x)^{1/q}\,dx
-\lambda_0\left(\int p(x)\,dx-1\right)
\\
&-\lambda_1\int (x-\mu_q)p(x)\,dx
\\
&-\lambda_-
\left(
\int [(x-\mu_q)_-]^2p(x)\,dx
-\sigma_{q,-}^2
\right)
\\
&-\lambda_+
\left(
\int [(x-\mu_q)_+]^2p(x)\,dx
-\sigma_{q,+}^2
\right).
\end{aligned}
\end{equation}
Stationarity of \(\mathcal L\) with respect to \(p\) gives the
stationary candidate $\hat{p}$, whereas stationarity with respect to the Lagrange
multipliers \(\{\lambda_0,\lambda_1,\lambda_-,\lambda_+\}\) imposes the
normalization, centering, and semi-variance constraints.\end{remark}
\begin{proposition}\label{prop:stationary_escort_density}
The stationarity condition with respect to \(p\)  for the Lagrangian in
Eq.~(\ref{eq:lagrangian_p}) yields the following stationary family
\begin{equation}\label{eq:p_hat}
 \hat{p}(x)=\frac{1}{\mathcal{Z}_p}
 \left[
 1+\beta (x-\mu_q)
 +\gamma_- \bigl[(x-\mu_q)_-\bigr]^2
 +\gamma_+ \bigl[(x-\mu_q)_+\bigr]^2
 \right]^{-\frac{q}{q-1}},
\end{equation}
almost everywhere, with 
\begin{equation}\label{eq:Zp}
 \mathcal{Z}_p
 =
 \int
 \left[
 1+\beta (x-\mu_q)
 +\gamma_- \bigl[(x-\mu_q)_-\bigr]^2
 +\gamma_+ \bigl[(x-\mu_q)_+\bigr]^2
 \right]^{-\frac{q}{q-1}}\,dx .
\end{equation}
\end{proposition}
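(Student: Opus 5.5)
We will take the pointwise (Gateaux) first variation of the Lagrangian in Eq.~(\ref{eq:lagrangian_p}) with respect to $p$. Every term of $\mathcal L$ is an integral of a local integrand in $p(x)$, so we fix a perturbation $p+\varepsilon\eta$ with $\eta$ a bounded, compactly supported test function. We differentiate at $\varepsilon=0$ and require the derivative to vanish for all such $\eta$. This gives, almost everywhere on the set where $p>0$,
\begin{equation*}
\frac1q\,p(x)^{\frac1q-1}
=\lambda_0+\lambda_1(x-\mu_q)+\lambda_-\bigl[(x-\mu_q)_-\bigr]^2+\lambda_+\bigl[(x-\mu_q)_+\bigr]^2 .
\end{equation*}
By the fundamental lemma of the calculus of variations, this pointwise identity is the stationarity condition. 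The map $t\mapsto t^{1/q}$ is differentiable on $(0,\infty)$, and this justifies the differentiation under the integral sign, at least for perturbations supported where $p$ is bounded away from zero. Since the paper considers densities with support on $\mathbb{R}$, we treat $p>0$ a.e.

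The second step is to solve for $p$. Because $q>1$, the exponent satisfies $\frac1q-1=-\frac{q-1}{q}<0$. Hence the left side is positive, and the right side must be positive, in particular $\lambda_0>0$. Raising both sides to the power $-\frac{q}{q-1}$ yields
\begin{equation*}
p(x)=(q\lambda_0)^{-\frac{q}{q-1}}\left[1+\tfrac{\lambda_1}{\lambda_0}(x-\mu_q)+\tfrac{\lambda_-}{\lambda_0}\bigl[(x-\mu_q)_-\bigr]^2+\tfrac{\lambda_+}{\lambda_0}\bigl[(x-\mu_q)_+\bigr]^2\right]^{-\frac{q}{q-1}}.
\end{equation*}
We then set $\beta=\lambda_1/\lambda_0$ and $\gamma_\pm=\lambda_\pm/\lambda_0$.

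The remaining step is the prefactor. Stationarity with respect to $\lambda_0$ enforces $\int p\,dx=1$, so the constant $(q\lambda_0)^{-q/(q-1)}$ must equal $1/\mathcal Z_p$ with $\mathcal Z_p$ as in Eq.~(\ref{eq:Zp}). Equivalently, $\lambda_0$ is determined by the normalization once $\beta,\gamma_\pm$ are fixed, so the family can be parametrised by $(\beta,\gamma_-,\gamma_+)$. This gives Eq.~(\ref{eq:p_hat}).

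The main obstacle is rigor rather than algebra. First, the bracket must be positive on all of $\mathbb{R}$; this requires $\gamma_\pm>0$ together with a discriminant-type condition on $\beta$ on each half-line. Second, $\mathcal Z_p$ and the $1/q$-power integral must be finite, which needs $2q/(q-1)>1$; this always holds for $q>1$, while finiteness of the semi-variances needs $2q/(q-1)>3$, i.e.\ $q<3$. I would state these as admissibility conditions on the parameters and handle them as such. The statement only asserts the form of the stationary family, so existence of multipliers is deferred to Proposition~\ref{prop:inverse-moment-existence}.
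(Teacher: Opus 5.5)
Your proposal is correct and follows essentially the paper's route: the first variation of $\mathcal L$ gives the pointwise identity, which you invert with exponent $-q/(q-1)$, then you factor out $q\lambda_0$, set $\beta=\lambda_1/\lambda_0$ and $\gamma_\pm=\lambda_\pm/\lambda_0$, and fix the prefactor by normalization; the only difference is that you use compactly supported test functions where $p$ is bounded away from zero, while the paper uses the class $|h|\le C\hat p$. Two small refinements: a.e.\ positivity of the right-hand side gives only $\lambda_0\ge 0$ by continuity, so $\lambda_0=0$ must be excluded separately (it forces $\lambda_1=0$ and a non-integrable singularity $|x-\mu_q|^{-2q/(q-1)}$ at $\mu_q$, a point the paper also leaves tacit), and in your side remark, finiteness of $\int\hat p^{1/q}\,dx$ needs $2/(q-1)>1$, i.e.\ $q<3$, not merely $2q/(q-1)>1$.
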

\begin{proof}
 Let \(\hat p \in \mathcal{P}_\pm\) be a stationary solution of the Lagrangian
\(\mathcal L\) in Eq.~(\ref{eq:lagrangian_p}). 
For an arbitrary test function \(h\) with \(|h|\le C\hat p\) for some
constant \(C>0\), so that the first variation is well defined, set
\[
p_\varepsilon=\hat p+\varepsilon h .
\]
We assume \(|\varepsilon|\) small enough so that \(p_\varepsilon \geq 0\) almost everywhere  
and all the integrals appearing in \(\mathcal L[p_\varepsilon]\) are finite.
Since the equality constraints are incorporated through Lagrange
multipliers, the perturbation \(h\) is not required to satisfy the affine
normalization, centering, or semi-variance constraints. Stationarity of
\(\mathcal L\) with respect to \(p\) requires
\[
\left.
\frac{d}{d\varepsilon}
\mathcal L[\hat p+\varepsilon h]
\right|_{\varepsilon=0}
=0
\qquad \text{for all admissible } h .
\]
Using Eq.~(\ref{eq:lagrangian_p}), this gives
\begin{equation}
\int
\left\{
 \frac{1}{q}\hat p(x)^{\frac{1}{q}-1}
 -\lambda_0
 -\lambda_1(x-\mu_q)
 -\lambda_+[(x-\mu_q)_+]^2
 -\lambda_-[(x-\mu_q)_-]^2
\right\}
h(x)\,dx=0 .
\end{equation}
Since \(h\) is arbitrary, the term in braces must vanish almost
everywhere. Hence, almost everywhere,
\begin{equation}\label{eq:kkt_p}
    \frac{1}{q}\hat p(x)^{\frac{1}{q}-1}
    =
    \lambda_0
    +
    \lambda_1(x-\mu_q)
    +
    \lambda_+[(x-\mu_q)_+]^2
    +
    \lambda_-[(x-\mu_q)_-]^2 .
\end{equation}
Equivalently,
\begin{equation}\label{eq:p_kkt}
    \hat p(x)
    =
    \left[
    q\left(
    \lambda_0
    +
    \lambda_1(x-\mu_q)
    +
    \lambda_+[(x-\mu_q)_+]^2
    +
    \lambda_-[(x-\mu_q)_-]^2
    \right)
    \right]^{-\frac{q}{q-1}} .
\end{equation}
Factoring out the constant term \(q\lambda_0\), we obtain
\begin{equation}
\hat p(x)
    =
    C \left[
    1
    +
    \frac{\lambda_1}{\lambda_0}(x-\mu_q)
    +
    \frac{\lambda_+}{\lambda_0}[(x-\mu_q)_+]^2
    +
    \frac{\lambda_-}{\lambda_0}[(x-\mu_q)_-]^2
    \right]^{-\frac{q}{q-1}},
\end{equation}
where
\[
C=(q\lambda_0)^{-\frac{q}{q-1}}.
\]
Introducing the reparametrization
\[
\beta=\frac{\lambda_1}{\lambda_0},\qquad
\gamma_+=\frac{\lambda_+}{\lambda_0},\qquad
\gamma_-=\frac{\lambda_-}{\lambda_0},
\]
and absorbing the multiplicative constant into the normalization constant
\(\mathcal Z_p\), we obtain Eq.~(\ref{eq:p_hat}), with
\(\mathcal Z_p\) given by Eq.~(\ref{eq:Zp}).
\end{proof}

\begin{remark}\label{remark:f_hat}
 Given $\hat{p}(x)$ in Eq. (\ref{eq:p_hat}), the original density $\hat{f}(x)$ is recovered by Eq. (\ref{eq:J_functional_1})  as
 \begin{equation}\label{eq:f_hat}
 \hat{f}(x)= \frac{1}{\mathcal{Z}_f}
 \left[
 1+\beta (x-\mu_q)
 +\gamma_- \bigl[(x-\mu_q)_-\bigr]^2
 +\gamma_+ \bigl[(x-\mu_q)_+\bigr]^2
 \right]^{-\frac{1}{q-1}},
 \end{equation}
almost everywhere, where
\begin{equation}\label{eq:Z_f}
\mathcal{Z}_f= \int \left[
 1+\beta (x-\mu_q)
 +\gamma_- \bigl[(x-\mu_q)_-\bigr]^2
 +\gamma_+ \bigl[(x-\mu_q)_+\bigr]^2
 \right]^{-\frac{1}{q-1}} \, dx.
\end{equation}
Equation~\eqref{eq:f_hat} is the density in the original
Tsallis problem, rather than an auxiliary fit on the escort scale. It is a
single globally normalized law: the common entropic index $q$ determines the
power-law exponent on both sides, while $\gamma_-$ and $\gamma_+$ allow the
directional dispersions to differ and $\beta$ enforces the global
escort-centering constraint. Thus, asymmetry is generated by the joint
constraints, not by joining two separately normalized half-distributions.
Moreover, the two pieces of the kernel have the same value and first
derivative at $x=\mu_q$, so that $\hat f$ is continuously differentiable at
its escort location.
\end{remark}
\begin{remark}
The stationary point \(\hat{p}\) in Eq.~(\ref{eq:p_hat}) is well
defined provided that the kernel 
\begin{equation}\label{eq:H_positive}
H(x)=
1+\beta (x-\mu_q)
+\gamma_- \bigl[(x-\mu_q)_-\bigr]^2
+\gamma_+ \bigl[(x-\mu_q)_+\bigr]^2
>0
\end{equation}
for all \(x\in\mathbb{R}\). It follows that $H(x)>0$ for all \(x\in\mathbb{R}\) if and only if $-2\sqrt{\gamma_+}<\beta<2\sqrt{\gamma_-}.$
 In this case
\(\hat{p}(x)>0\) for all \(x\in\mathbb{R}\).
\end{remark}
\begin{remark}
In order to guarantee the existence of the escort mean and escort semi-variances it
is enough to control the second semi-moments
\[
\int [(x-\mu_q)_\pm]^2 \hat p(x)\,dx .
\]
Under the condition \(\gamma_+>0\) and \(\gamma_->0\), the density
\(\hat p\) has power-law tails 
\[
\hat{p}(x)\sim |x|^{-\frac{2q}{q-1}}
\qquad \text{as } |x|\to +\infty.
\]
Hence
\[
[(x-\mu_q)_\pm]^2 \hat{p}(x)
\sim
|x|^{2-\frac{2q}{q-1}}
\qquad \text{as } |x|\to +\infty.
\]
Therefore the semi-variance moments are finite whenever
\[
2-\frac{2q}{q-1}<-1,
\]
which, for \(q>1\), is equivalent to \(1<q<3\) --- the same admissible
range as in the symmetric case recalled in
Remark~\ref{rem:symmetric_qrange}.
\end{remark}

The calculation above identifies a stationary family, but
stationarity alone would not establish the maximum-entropy interpretation.
The next two propositions use precisely the concavity and affine structure
created by the escort reformulation to upgrade every feasible stationary
member to the unique global solution.

\begin{proposition}\label{prop:global_optimizer}\label{prop:asym}
Let $q>1$. Any feasible stationary solution $\hat{p}\in \mathcal{P}_\pm$ of the form~(\ref{eq:p_hat}) with $\hat{p}>0$ a.e. and
\(J[\hat p]<+\infty\) is a
global maximizer of problem~(\ref{eq:max_J_in_Ppm}).\end{proposition}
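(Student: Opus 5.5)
The argument is a standard concavity (supporting-hyperplane) argument applied to the escort formulation \eqref{eq:max_J_in_Ppm}. The two ingredients are that $t\mapsto t^{1/q}$ is concave on $[0,\infty)$ for $q>1$, and that $\mathcal{P}_\pm$ is cut out by affine constraints.

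Fix an arbitrary feasible $p\in\mathcal{P}_\pm$ with $J[p]<+\infty$; if $J[p]=+\infty$ is excluded by the proviso in the reformulation, there is nothing else to check. Write $\phi(t)=t^{1/q}$. Since $\hat p>0$ a.e., $\phi$ is differentiable at $\hat p(x)$ for a.e. $x$. The tangent-line inequality for the concave function $\phi$ then gives, pointwise a.e.,
\[
p(x)^{1/q}\le \hat p(x)^{1/q}+\tfrac1q\hat p(x)^{\frac1q-1}\bigl(p(x)-\hat p(x)\bigr).
\]
This holds also where $p(x)=0$, because $\phi$ is concave on the closed half-line.

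Next, substitute the stationarity identity \eqref{eq:kkt_p}. By Proposition~\ref{prop:stationary_escort_density}, the factor $\frac1q\hat p^{\frac1q-1}$ equals the affine combination $\Lambda(x)=\lambda_0+\lambda_1(x-\mu_q)+\lambda_-[(x-\mu_q)_-]^2+\lambda_+[(x-\mu_q)_+]^2$. Integrating the pointwise inequality yields
\[
J[p]\le J[\hat p]+\int \Lambda(x)\bigl(p(x)-\hat p(x)\bigr)\,dx .
\]
Since $p$ and $\hat p$ both belong to $\mathcal{P}_\pm$, they share the same normalization, centering and semi-variances. The last integral is therefore $\lambda_0\cdot 0+\lambda_1\cdot 0+\lambda_-\cdot 0+\lambda_+\cdot 0=0$, and hence $J[p]\le J[\hat p]$.

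The main obstacle is justifying the integration step, namely splitting $\int\Lambda(p-\hat p)$ into separately finite pieces. The moments of $p$ and $\hat p$ against $1$, $|x-\mu_q|$ and $(x-\mu_q)_\pm^2$ are all finite. For the first moment this follows from finiteness of the second semi-moments together with the probability constraint, via Cauchy--Schwarz. So $\Lambda p$ and $\Lambda\hat p$ are integrable, and the right-hand side is finite. The left-hand side $p^{1/q}\ge 0$ is measurable, so integrating an a.e. inequality between a nonnegative function and an integrable function is legitimate, giving $J[p]\le J[\hat p]+0<\infty$. This makes the proviso $J[p]<\infty$ automatic.

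A secondary point is the sign of $\lambda_0$ and of $\Lambda$. Because $\hat p>0$, we have $\Lambda=\frac1q\hat p^{(1-q)/q}>0$ a.e., which is consistent with the reparametrization $\beta=\lambda_1/\lambda_0$, $\gamma_\pm=\lambda_\pm/\lambda_0$ used in \eqref{eq:p_hat}. No further sign condition on the multipliers is needed.

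Uniqueness would follow in the same way from strict concavity of $\phi$ (equality forces $p=\hat p$ a.e.), though that is not required for the present statement.
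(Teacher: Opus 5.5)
Your proposal is correct and follows the paper's proof essentially step for step: you apply the tangent inequality for $u\mapsto u^{1/q}$ at $\hat p(x)$, substitute the stationarity identity \eqref{eq:kkt_p}, and cancel the resulting affine integral using the shared constraints of $\mathcal{P}_\pm$. Your added integrability bookkeeping is a worthwhile rigor point that the paper leaves implicit: you obtain finiteness of $\int|x-\mu_q|\,p\,dx$ via Cauchy--Schwarz, and hence $J[p]\le J[\hat p]<\infty$ automatically.
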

\begin{proof}
Let \(\hat p\in\mathcal P_\pm\) be a feasible stationary solution of
the form~(\ref{eq:p_hat}) with $\hat{p}>0$ a.e., and $J[\hat{p}]<+\infty$, and let \(p\in\mathcal P_\pm\) be any admissible density with $J[p]<+\infty$.
Since \(q>1\), the function \(\phi(u)= u^{1/q}\) is concave on
\([0,+\infty)\). Therefore, for every \(u\ge 0\) and every \(v>0\), the tangent inequality holds:
\[
\phi(u)\leq \phi(v)+\phi'(v)(u-v),
\]
or
\begin{equation}\label{eq:tangent_ineq_J}
u^{1/q}
\le
v^{1/q}
+
\frac{1}{q}v^{\frac{1}{q}-1}(u-v).
\end{equation}
Applying \eqref{eq:tangent_ineq_J} pointwise with
\(u=p(x)\geq 0\)  and \(v=\hat p(x)>0\), and integrating, gives
\[
\int p(x)^{1/q}\,dx
\le \int \hat{p}(x)^{1/q}\,dx+
\int
\frac{1}{q}\hat p(x)^{\frac{1}{q}-1}
\bigl(p(x)-\hat p(x)\bigr)\,dx 
\]
or
\begin{equation}\label{eq:J_tangent_bound}
J[p]
\le J[\hat p]+
\int
\frac{1}{q}\hat p(x)^{\frac{1}{q}-1}
\bigl(p(x)-\hat p(x)\bigr)\,dx .
\end{equation}
Since \(\hat p\) is a stationary solution of the Lagrangian,
Eq.~(\ref{eq:kkt_p}) gives
\begin{equation}
\frac{1}{q}\hat p(x)^{\frac{1}{q}-1}
=
\lambda_0
+\lambda_1(x-\mu_q)
+\lambda_+[(x-\mu_q)_+]^2
+\lambda_-[(x-\mu_q)_-]^2 .
\end{equation}
Hence Eq. (\ref{eq:J_tangent_bound}) becomes
\begin{equation}\label{eq:J_optimum}
J[p]-J[\hat{p}]\leq \int  \{\lambda_0
+\lambda_1(x-\mu_q)
+\lambda_+[(x-\mu_q)_+]^2
+\lambda_-[(x-\mu_q)_-]^2 \}(p(x)-\hat{p}(x))\,dx
\end{equation}
But, since both \(p\) and \(\hat p\) belong to
\(\mathcal P_\pm\), their difference satisfies
\begin{equation}
\int (p-\hat p)\,dx=0,
\end{equation}
\begin{equation}
\int (x-\mu_q)(p-\hat p)\,dx=0,
\end{equation}
and
\begin{equation}
\int [(x-\mu_q)_\pm]^2(p-\hat p)\,dx=0 .
\end{equation}
Hence the right-hand side of \eqref{eq:J_optimum} is zero.
Therefore
\begin{equation}\label{eq:J_max}
J[p]\le J[\hat p]
\qquad
\forall p\in\mathcal P_\pm,
\end{equation}
thus \(\hat p\) is a global maximizer $p^*$ of
problem~(\ref{eq:max_J_in_Ppm}).
\end{proof}
\begin{proposition}\label{prop:unique_optimizer}
Problem~(\ref{eq:max_J_in_Ppm}) admits at most one global maximizer.
\end{proposition}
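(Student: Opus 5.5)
The plan is to argue by contradiction using strict concavity of $J$ on the convex set $\mathcal P_\pm$. Suppose $p_1,p_2\in\mathcal P_\pm$ are both global maximizers with $J[p_1]=J[p_2]=J^*<+\infty$, and assume $p_1\neq p_2$ on a set of positive Lebesgue measure. First I would check convexity: since every constraint defining $\mathcal P_\pm$ in Eq.~(\ref{eq:P_pm}) is affine in $p$ and nonnegativity is preserved under convex combinations, the midpoint $\bar p=\tfrac12(p_1+p_2)$ again belongs to $\mathcal P_\pm$. Moreover $J[\bar p]<+\infty$, because $(a+b)^{1/q}\le a^{1/q}+b^{1/q}$ for $a,b\ge0$ and $q>1$, so $\bar p^{1/q}\le p_1^{1/q}+p_2^{1/q}$.

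Next I would use strict concavity of $\phi(u)=u^{1/q}$ on $[0,+\infty)$ for $q>1$. Pointwise,
\[
\bar p(x)^{1/q}\ \ge\ \tfrac12 p_1(x)^{1/q}+\tfrac12 p_2(x)^{1/q},
\]
with strict inequality exactly at those $x$ where $p_1(x)\neq p_2(x)$. This includes points where one of the two densities vanishes, since $\phi$ is strictly concave on the closed half-line. Integrating gives $J[\bar p]\ge J^*$. Because the pointwise inequality is strict on a set of positive measure and all integrands are nonnegative and integrable, the integrated inequality is strict as well, so $J[\bar p]>J^*$. This contradicts the maximality of $J^*$. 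Hence $p_1=p_2$ almost everywhere, which is the claimed uniqueness (as densities, i.e.\ up to null sets).

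The only delicate step is upgrading the pointwise strict inequality to a strict integral inequality. I would handle it by writing
\[
J[\bar p]-J^*=\int\Bigl(\bar p^{1/q}-\tfrac12 p_1^{1/q}-\tfrac12 p_2^{1/q}\Bigr)dx .
\]
The integrand is nonnegative, integrable, and strictly positive on a set of positive measure, so the integral is strictly positive. No finiteness issue arises, because $J^*<+\infty$ is part of the admissibility requirement stated with problem~(\ref{eq:max_J_in_Ppm}).

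Finally, I would note that this uniqueness combines with Proposition~\ref{prop:global_optimizer}: any feasible stationary $\hat p$ is \emph{the} unique maximizer. Via Eq.~(\ref{eq:fstar_vs_pstar}) and the bijectivity of the escort map, the corresponding $\hat f$ is then the unique maximizer of problem~(\ref{eq:max_S_q_in_Fpm}).
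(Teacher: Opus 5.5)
Your proof is correct and follows the same route as the paper: a convex combination of two distinct maximizers remains in the convex set $\mathcal P_\pm$ and, by strict concavity of $u\mapsto u^{1/q}$, has strictly larger $J$, which contradicts maximality. The paper uses a general $t\in(0,1)$ and simply asserts that $J$ is strictly concave, whereas you take $t=1/2$ and write out two steps the paper leaves implicit: the finiteness of $J[\bar p]$, and the passage from the pointwise strict inequality to a strict inequality of integrals.
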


\begin{proof}
Suppose, by contradiction, that there exist two distinct global
maximizers \(p_1^*,p_2^*\in\mathcal P_\pm\), differing on a set of positive
Lebesgue measure. Then
\[
J[p_1^*]=J[p_2^*]=J^*,
\]
where \(J^*\) denotes the maximum value of \(J[p]\) over
\(\mathcal P_\pm\).

For \(0<t<1\), define
\[
\tilde{p}=tp_1^*+(1-t)p_2^* .
\]
Since \(\mathcal P_\pm\) is convex, we have \(\tilde{p}\in\mathcal P_\pm\).
Moreover, since \(q>1\), the function \(\phi(u)= u^{1/q}\) is strictly
concave on \([0,+\infty)\). Hence \(J[p]=\int p(x)^{1/q}\,dx\) is strictly concave, and we obtain
\[
J[tp_1^*+(1-t)p_2^*]
>
tJ[p_1^*]+(1-t)J[p_2^*]
=
tJ^*+(1-t)J^*
=
J^*.
\]
This contradicts the definition of \(J^*\) as the maximum value of
\(J[p]\) over \(\mathcal P_\pm\). Therefore the global maximizer is unique.
\end{proof}
\begin{remark}\label{rem:inverse-moment-problem}
Proposition~\ref{prop:stationary_escort_density} identifies the functional
form of a stationary density $\hat{p}$, whereas
Propositions~\ref{prop:global_optimizer}--\ref{prop:unique_optimizer}
establish global optimality and uniqueness only after feasibility has been
verified. Hence, for prescribed values of $q$, $\mu_q$,
$\sigma^2_{q,-}$, and $\sigma^2_{q,+}$, one must still determine whether
parameters $(\beta,\gamma_-,\gamma_+)$ can be chosen so that the density in
Eq.~\eqref{eq:p_hat} satisfies all the constraints in
Eq.~\eqref{eq:P_pm}. This is the inverse moment problem associated with the
escort mean and escort semi-variances.  In other words,
Propositions~\ref{prop:stationary_escort_density}--
\ref{prop:unique_optimizer} solve the variational problem conditionally on
feasibility; the remaining question is whether the prescribed directional
moments are compatible with a member of the stationary family.
\footnote{The prescribed value of $\mu_q$ creates no additional compatibility
condition because the support is the whole real line:  setting
$y=x-\mu_q$ gives a bijection between densities centered at $\mu_q$ and
densities $g$ satisfying
\[
    \int y\, g(y)\,dy=0,
    \qquad
    \int y_\pm^2\,g(y)\,dy=\sigma^2_{q,\pm}.
\]}.

\end{remark}

\begin{proposition}\label{prop:inverse-moment-existence}
Let $\mu_q\in\mathbb{R}$, $1<q<3$, and
$\sigma^2_{q,-}>0$ and $\sigma^2_{q,+}>0$, and set
\[
    \rho=\frac{\sigma^2_{q,+}}{\sigma^2_{q,-}}.
\]
Consider the stationary escort density $\hat{p} $ in
Eq.~\eqref{eq:p_hat}, with kernel $H$ given by
Eq.~\eqref{eq:H_positive}.

A non-degenerate triple
\[
    \beta\in\mathbb{R},
    \qquad \gamma_->0,
    \qquad \gamma_+>0,
    \qquad -2\sqrt{\gamma_+}<\beta<2\sqrt{\gamma_-},
\]
for which $\hat{p} \in\mathcal{P}_{\pm}$ satisfies the prescribed
centering and escort semi-variance constraints exists precisely as follows:
\begin{enumerate}
    \item if $\frac{3}{2}\le q<3$, it exists for every pair
    $\sigma^2_{q,-}>0$ and $\sigma^2_{q,+}>0$;

    \item if $1<q<\frac{3}{2}$, it exists if and only if
    \[
        \frac{3-2q}{3-q}
        <\rho
        <\frac{3-q}{3-2q}.
    \]
\end{enumerate}

In the second case, the endpoint ratios yield unique degenerate boundary
solutions. More precisely,
\[
    \rho=\frac{3-2q}{3-q}
    \quad\Longrightarrow\quad
    \gamma_-=0,\ \gamma_+>0,
\]
whereas
\[
    \rho=\frac{3-q}{3-2q}
    \quad\Longrightarrow\quad
    \gamma_+=0,\ \gamma_->0.
\]
Outside the corresponding closed interval, no stationary density of the
form~\eqref{eq:p_hat}, with a globally positive kernel of the form
\eqref{eq:H_positive} and $\gamma_-\geq0$ and $\gamma_+\geq0$, satisfies all
three prescribed escort-moment constraints.

\end{proposition}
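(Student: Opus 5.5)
The plan is to reduce the statement to a two-parameter inverse problem, using the translation in the footnote to Remark~\ref{rem:inverse-moment-problem} and the scaling invariance of the family~\eqref{eq:p_hat}. Then I would solve that problem by a monotonicity and intermediate-value argument.

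\textbf{Reduction.} Set $y=x-\mu_q$ and $m=q/(q-1)$, so that $1<q<3$ corresponds to $m>3/2$, and $q<3/2$ corresponds to $m>3$. Write $H(y)=1+\beta y+\gamma_- y_-^2+\gamma_+ y_+^2$. The substitution $y\mapsto y/s$ sends $(\beta,\gamma_-,\gamma_+)\mapsto(\beta s,\gamma_- s^2,\gamma_+ s^2)$ and multiplies both escort semi-variances by $s^2$. Hence the absolute level of $\sigma^2_{q,\pm}$ can always be matched, and only two conditions remain: the centering condition $\int yH^{-m}\,dy=0$ and the ratio condition $\int y_+^2H^{-m}/\int y_-^2H^{-m}=\rho$.

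I would normalise $\gamma_-+\gamma_+=1$ and write $t=\gamma_+\in[0,1]$. For fixed $t$, I regard $c(\beta)=\int yH^{-m}\,dy$ on the admissible interval $-2\sqrt{t}<\beta<2\sqrt{1-t}$. I expect three facts:
\begin{itemize}
\item $c'(\beta)=-m\int y^2H^{-m-1}<0$, so $c$ is strictly decreasing.
\item As $\beta\uparrow 2\sqrt{1-t}$, $H$ develops a double zero on $y<0$. Since $H^{-m}$ with $m>1$ is then non-integrable near that zero, the normalised centering moment tends to a negative value, and symmetrically at the other endpoint.
\item Consequently $c$ has a unique zero $\beta(t)$, which depends continuously on $t$ by the implicit function theorem.
\end{itemize}
This leaves a continuous ratio function $R(t)$ on $(0,1)$, and the statement reduces to identifying its range.

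\textbf{Identities and boundary behaviour.} Useful identities come from integrating exact derivatives, where boundary terms vanish because $H^{1-m}\to0$ at infinity and $H(0)=1$. Integrating $(H^{1-m})'$ over $\mathbb{R}$ gives $\beta+2\gamma_-\,\E_p[y_-]+2\gamma_+\,\E_p[y_+]=0$. Integrating $(yH^{1-m})'$ over each half-line separately gives $\int_{\pm}H^{1-m}=(m-1)\int_{\pm}yH'H^{-m}$. Combined with the centering condition, the full-line version yields $(2m-3)(\gamma_-\sigma^2_{q,-}+\gamma_+\sigma^2_{q,+})=1$.

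These relations let me evaluate the limits of $R$ as $t\to0$ or $t\to1$. At $t=0$ the right kernel becomes $1+\beta y$ with $\beta\ge0$, decaying like $y^{-m}$, so $\int_0^\infty y^2H^{-m}<\infty$ if and only if $m>3$, i.e.\ $q<3/2$. Similarly, at $t=1$ the left kernel becomes $1+\beta y$ on $y<0$, which requires $\beta\le0$ and again needs $m>3$.
\begin{itemize}
\item If $3/2\le q<3$, the corresponding semi-variance blows up while the other stays bounded. Then $R(t)\to0$ as $t\to1$ and $R(t)\to+\infty$ as $t\to0$, and continuity gives every $\rho>0$. This is case~1.
\item If $q<3/2$, the degenerate boundary problems are well posed. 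Their one-sided integrals are Beta functions, $\int_0^\infty y^k(1+\beta y)^{-m}dy=\beta^{-k-1}B(k+1,m-k-1)$, and together with the identities above they should produce the closed-form endpoint values $R(1)=(3-2q)/(3-q)$ and $R(0)=(3-q)/(3-2q)$. These two values are reciprocal, as the reflection $y\mapsto -y$ demands.
\end{itemize}

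\textbf{Main obstacle.} The ``if and only if'' in case~2, and the uniqueness of the degenerate boundary solutions, need $R$ to be strictly monotone in $t$, not merely continuous. The ``only if'' direction also has to rule out every parameter triple, not just those on my normalised path. Since $(t,\beta(t))$ already parametrises all centred triples with $\gamma_\pm\ge0$ up to scale, the only-if part again reduces to monotonicity of $R$. To prove it, I would first differentiate $R$ along the curve, using $\beta'(t)$ from the implicit function theorem. I would then try to sign the resulting combination of weighted moments of $H^{-m-1}$ using a Cauchy--Schwarz or Chebyshev-type covariance inequality, treating $y_+^2$ and $y_-^2$ as oppositely monotone weights. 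If that fails, a fallback is to show that $\log R$ is monotone in the single variable $\beta$ after eliminating $t$ via the centering identity.
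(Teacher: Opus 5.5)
Your reduction is sound. The scaling disposes of the level of $\sigma^2_{q,\pm}$, and the path $t\mapsto(t,\beta(t))$ covers every centred triple with $\gamma_\pm\ge0$ up to scale. Your integration identities are also correct: on the face $\gamma_+=0$ with $m>3$ they do force the ratio $(2m-3)/(m-3)=(3-q)/(3-2q)$, so $R(0)$ and, by reflection, $R(1)$ come out right. Continuity of $R$ on $[0,1]$ together with the intermediate value theorem then gives the ``if'' half of case~2.

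The gap is everything that makes the proposition sharp. This covers the ``only if'' half of case~2 and the non-existence of any solution with $\gamma_\pm\ge0$ outside the closed interval. It also covers the claim that at an endpoint ratio the degenerate solution is the only one. All three need $R\bigl((0,1)\bigr)\subseteq\bigl(R(1),R(0)\bigr)$, and you leave exactly this as the ``main obstacle''. Implicitly differentiating and then hoping a Chebyshev or Cauchy--Schwarz inequality signs the result is not yet an argument. $R'(t)$ involves $\beta'(t)$ and mixed moments of $H^{-m-1}$ on both half-lines with no evident sign. As it stands, the proposal proves existence but not sharpness.

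The missing idea, which removes the need for any monotonicity, is to recognise the moment residuals as the gradient of a strictly convex potential. For the prescribed $s_\pm=\sigma^2_{q,\pm}$, gauge the kernel as $Q_\theta(y)=1+by+c_+(y_+^2-s_{+})+c_-(y_-^2-s_{-})$. Set $\Phi(\theta)=\int Q_\theta^{1-m}\,dy$ on the compact convex set $\{c_\pm\ge0,\ Q_\theta\ge0\}$. Because $u\mapsto u^{1-m}$ is strictly convex and $Q_\theta$ is affine in $\theta$, $\Phi$ is strictly convex, with $\nabla\Phi=(1-m)Z_\theta\bigl(\E[y],\,\E[y_+^2]-s_{+},\,\E[y_-^2]-s_{-}\bigr)$. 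A non-degenerate solution is therefore exactly an interior critical point, hence the unique global minimizer.

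For $m>3$, the relative minimizer $\bar\theta_+$ on the face $c_+=0$ has the ratio you computed, so its inward $c_+$-derivative is $(1-m)\bar Z_+\bigl(\frac{2m-3}{m-3}s_{-}-s_{+}\bigr)$. When $\rho\ge(2m-3)/(m-3)$ this is non-negative, and convexity of $\Phi$ along segments leaving $\bar\theta_+$ makes $\bar\theta_+$ a global minimizer. That excludes any interior critical point; reflection treats the other face. For $m\le3$ the inward derivative on the faces is $-\infty$, so they never carry the minimizer. This is the paper's route, and it delivers existence, uniqueness and necessity at once.

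Separately, your case~1 limit is misdescribed for $2\le q<3$ ($m\le2$). There the centering forces $\beta(t)\to2\sqrt{1-t}$ as $t\to0$; otherwise $\int_0^\infty yH^{-m}$ diverges while the left side stays bounded. So \emph{both} semi-variances blow up, and you need a rate comparison (roughly $t^{m-3}$ against $t^{m-2}$) to conclude $R(t)\to\infty$. Continuity of $\beta(t)$ at $t\in\{0,1\}$ likewise needs more than the implicit function theorem.
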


\begin{proof}
The proof is given in Section~S1 of the Supplementary Material, where the statement appears as Corollary~S1. 

\begin{remark}
When \(q=1\), from Eq. \ref{eq:p_escort}, the escort density $p(x)$ coincides with the original density $f(x)=f_1(x)$ and
Tsallis entropy reduces to Shannon entropy. Therefore, the asymmetric
maximum-entropy problem becomes
\[
\max_{f_1\in\mathcal{F}_{\pm}^{(1)}}
S[f_1],
\qquad
S[f_1]=-\int f_1(x)\ln f_1(x)\,dx,
\]
where
\begin{equation}\label{eq:F_pm^1}
\begin{aligned}
\mathcal{F}^{(1)}_\pm=\Bigg\{ f_1(x)\geq 0 \ \text{a.e.} \ :\ 
&\int f_1(x)\,dx=1, \\
&\int(x-\mu)f_1(x)\,dx=0, \\
&\int[(x-\mu)_-]^2 f_1(x)\,dx=\sigma^2_{-}, \\
&\int[(x-\mu)_+]^2 f_1(x)\,dx=\sigma^2_{+}
\Bigg\}.
\end{aligned}
\end{equation}The stationarity conditions associated with this problem yield the
piecewise-quadratic exponential family
\begin{equation}\label{eq:f_1}
f_1(x)
=
\frac{1}{Z_{1}}
\exp\left\{
-\beta^{'}(x-\mu)
-\gamma_{-}^{'}\bigl[(x-\mu)_{-}\bigr]^2
-\gamma_{+}^{'}\bigl[(x-\mu)_{+}\bigr]^2
\right\},
\end{equation}
where
\begin{equation}\label{eq:Z_1}
Z_{1}
=
\int
\exp\left\{
-\beta^{'}(x-\mu)
-\gamma_{-}^{'}\bigl[(x-\mu)_{-}\bigr]^2
-\gamma_{+}^{'}\bigl[(x-\mu)_{+}\bigr]^2
\right\}dx.
\end{equation}
In particular, \(\gamma_{-}^{'}>0\) and \(\gamma_{+}^{'}>0\) guarantee
integrability on the whole real line. Since Shannon entropy is strictly
concave and the constraints are affine, any feasible stationary density
of the above form is the unique global maximizer of the corresponding
Shannon maximum-entropy problem.
\end{remark}
\begin{remark}
The asymmetric Shannon density in Eq. (\ref{eq:f_1}) and (\ref{eq:Z_1})  can be recovered as a
non-degenerate limit of the asymmetric \(q\)-Gaussian family in
Eq. (\ref{eq:f_hat}) and (\ref{eq:Z_f}). In fact, with
\[
\begin{aligned}
\beta
   &= (q-1)\beta^{'}+o(q-1),\\
\gamma_{-}
   &= (q-1)\gamma_{-}^{'}+o(q-1),\\
\gamma_{+}
   &= (q-1)\gamma_{+}^{'}+o(q-1),
\end{aligned}
\]
with \(\mu_q=\mu\), the kernel $H(x)$ in Eq. (\ref{eq:H_positive}) becomes
\[
H(x)
=
1+(q-1)
\left\{
\beta^{'}(x-\mu)
+\gamma_{-}^{'}[(x-\mu)_{-}]^{2}
+\gamma_{+}^{'}[(x-\mu)_{+}]^{2}
\right\}
+o(q-1).
\]
Consequently,
\[
\begin{aligned}
\lim_{q\to1^{+}}
H(x)^{-\frac{1}{q-1}}
&=
\exp\left\{
-\beta^{'}(x-\mu)
-\gamma_{-}^{'}[(x-\mu)_{-}]^{2}
-\gamma_{+}^{'}[(x-\mu)_{+}]^{2}
\right\}
\end{aligned}
\]
by the standard exponential limit
\[
\lim_{\varepsilon\to0}
\bigl(1+\varepsilon a\bigr)^{-1/\varepsilon}
=e^{-a}.
\]
Hence, it
follows that
\[
f_q(x)\longrightarrow f_{1}(x),
\]
where \(f_{1}\) is the asymmetric Shannon maximum-entropy density given
in Eq. (\ref{eq:f_1}) and (\ref{eq:Z_1}).

\end{remark}
\end{proof}

\begin{remark}\label{rem:forward-parametrization}
Proposition~\ref{prop:inverse-moment-existence} gives the sharp answer to the
inverse moment problem when $q$ and the escort constraints are prescribed.
Its role is to complete the theoretical characterization of the
maximum-entropy construction, rather than to introduce an additional stage
in statistical estimation. In the statistical implementation, the
construction is used in the opposite direction. The parameter vector
\[
    \theta=(q,\mu_q,\beta,\gamma_-,\gamma_+)
\]
is estimated from the data, subject to the positivity and integrability
conditions and to the escort-centering condition
\[
    \int_{\mathbb{R}}(x-\mu_q)p_\theta(x)\,dx=0.
\]
The fitted escort density then induces its own escort semi-variances,
\[
    \sigma^2_{q,\pm}(\theta)
    =\int_{\mathbb{R}}\bigl[(x-\mu_q)_{\pm}\bigr]^2
      p_\theta(x)\,dx.
\]
Thus, likelihood estimation does not require the numerical inversion of the
moment map. In particular, when $1<q<3/2$, $\gamma_->0$, and
$\gamma_+>0$, the ratio of the induced escort semi-variances automatically
lies in the open compatibility interval of
Proposition~\ref{prop:inverse-moment-existence}. For these induced
constraints, Propositions~\ref{prop:global_optimizer}--
\ref{prop:unique_optimizer} show that $p_\theta$ is the unique maximizer of
the escort problem, while Eq.~\eqref{eq:J_functional_1} yields the associated
Tsallis maximum-entropy density $f_\theta$, which coincides with the density
in Eq.~\eqref{eq:f_hat}.
\end{remark}

\section{Empirical analysis}\label{sec:empirical}

We assess the asymmetric Tsallis model introduced in
Section~\ref{sec:asym_construction} on synthetic data, to study how it
responds to controlled departures from symmetry, and on real
financial returns, to evaluate its descriptive fit on empirical series.

Throughout this section the density \(f_\theta\), equivalently the asymmetric
$q$-Gaussian of Section~\ref{sec:asym_construction}, is the asymmetric
Tsallis density, with parameter vector
\[
\theta=(q,\mu_q,\beta,\gamma_-,\gamma_+).
\]
Its symmetric restriction, obtained by setting \(\beta=0\) and
\(\gamma_-=\gamma_+\), is the symmetric Tsallis benchmark, and the two models
are fitted by maximum likelihood on the same samples.

\subsection{Data}\label{subsec:data}

\paragraph{Synthetic data.}
Synthetic samples are generated from the Azzalini skew-\(t\) distribution
\citep{azzalini2003distributions}, with location \(\xi=0\) and scale
\(\omega=1\), over the grid
\[
\alpha\in\{-2,-1,0,1,2\},
\qquad
\nu\in\{3.1,3.5,3.8,4.1,5,10,15,30\}.
\]
The parameter \(\alpha\) controls asymmetry, with \(\alpha=0\) giving the
symmetric case, whereas \(\nu\) controls tail thickness in the
data-generating distribution, so that the grid ranges from nearly symmetric
to skewed samples and from heavy-tailed to lighter-tailed ones.

The asymmetric Tsallis model is not nested in the Azzalini family, so the
experiment is an approximation exercise, which tests how well the
entropy-based family adapts to controlled skewness and tail thickness
generated outside the Tsallis class. Random variates are generated using the
stochastic representation in \citet{AzzaliniSN}.

\paragraph{Real-market data.}
The data are daily closing prices, adjusted for dividends and stock splits,
downloaded from LSEG Data \& Analytics (formerly Refinitiv--Datastream,
see \citealp{LSEG}); the dataset is publicly available in \citet{Cesarone2025DS}.
We consider non-overlapping logarithmic returns
\(r_t=\log(P_{t+\Delta}/P_t)\), at three
horizons, \(\Delta=1\), \(\Delta=7\) and \(\Delta=20\), where \(\Delta\)
counts \emph{trading} days throughout and the horizon-\(\Delta\) series is
obtained by subsampling the daily closing-price series every \(\Delta\) rows,
so that no observation is used twice. In calendar terms \(\Delta=7\)
corresponds to about a week and a half and \(\Delta=20\) to about one month.
The sample includes nine markets, six equity indices (BIST, FTSE, NASDAQ,
DJIA, S\&P~500, and BOVESPA), one commodity (BRENT), and two cryptocurrencies
(Bitcoin and Ethereum).

The sample periods differ across series, depending on data availability.
BRENT starts in 1990, the equity indices at the end of 2009, Bitcoin in 2014
and Ethereum in 2018, and all series end in late 2025 or early 2026. The first
and last dates, together with the number of observations at each frequency,
are reported in Table~\ref{tab:descriptive_markets}.

Before estimation, returns are standardised to zero mean and unit variance,
so that the analysis is focused on the shape of the unconditional return
distribution: tail thickness, summarised by the raw (non-excess) kurtosis,
whose Gaussian reference value is \(3\), and the balance between downside and
upside dispersion, measured by the semi-variances
\(\sigma^2_{\pm}=\mathbb E[(r-\mathbb E[r])^2\,\mathbf 1_{r\gtrless\mathbb E[r]}]\)
reported in the same table.

\subsection{Parameter estimation via constrained maximum likelihood}
\label{subsec:estimation}

The observed returns are treated as an i.i.d.\ sample from \(f_\theta\). For
\(z=x-\mu_q\), let
\[
H_\theta(x)
=
1+\beta z+\gamma_-(z)_-^2+\gamma_+(z)_+^2,
\]
denote the kernel of Eq.~\eqref{eq:H_positive}, so that in the original
density scale
\begin{equation}\label{eq:f_density_q}
    f_\theta(x)
    =
    \frac{H_\theta(x)^{-\frac{1}{q-1}}}{Z_f(\theta)},
    \qquad
    Z_f(\theta)
    =
    \int_{\mathbb R}H_\theta(u)^{-\frac{1}{q-1}}\,du .
\end{equation}
The entropic index is left free over the whole admissible range \(1<q<3\) in
which \(Z_f(\theta)\) is finite, as specified in
Eq.~\eqref{eq:theta_admissible} below. The stricter restriction
\(q<5/3\), which would in addition guarantee a finite ordinary variance
(Remark~\ref{rem:symmetric_qrange}), is not imposed.

In the numerical implementation, positivity of \(H_\theta\) on the whole of
\(\mathbb R\) is enforced through the sufficient conditions
\[
\beta^2<4\gamma_-,
\qquad
\beta^2<4\gamma_+ .
\]
On each half-line the kernel is the quadratic
\(H_\theta=1+\beta z+\gamma_\pm z^2\) in \(z=x-\mu_q\): the inequalities make
both discriminants negative and \(H_\theta=1\) at \(z=0\), so both pieces are
positive on all of \(\mathbb R\). Positivity alone would constrain only the
piece whose vertex lies inside its own half-line; the symmetric pair is kept
because it is smooth in \(\theta\), hence suited to the interior-point solver,
and because the closed-form partition function, moments and CDF require
\(4\gamma_\pm-\beta^2>0\) on both branches. Normalization is carried by
\(Z_f(\theta)\), continuity at \(x=\mu_q\) by the kernel itself.

Let $\varepsilon=10^{-8}$, $q_{\min}=1.001$, $q_{\max}=2.999$, and
$\gamma_0=10^{-10}$, and define the (closed) admissible set
\begin{equation}\label{eq:theta_admissible}
\Theta_\varepsilon
=
\left\{
\theta=(q,\mu_q,\beta,\gamma_-,\gamma_+)\in\mathbb{R}^5
\;\middle|\;
\begin{aligned}
&q_{\min}\le q\le q_{\max},\quad
\gamma_-\ge\gamma_0,\ \gamma_+\ge\gamma_0,\\
&\beta^2\le 4(1-\varepsilon)\gamma_-,\quad
\beta^2\le 4(1-\varepsilon)\gamma_+,\\
&\int_{\mathbb R}(x-\mu_q)\,H_\theta(x)^{-\frac{q}{q-1}}\,dx=0
\end{aligned}
\right\} .
\end{equation}

The parameter vector is estimated by constrained maximum likelihood over
\(\Theta_\varepsilon\),
\begin{equation}\label{eq:loglik_f_q}
    \widehat\theta
    \in
    \arg\max_{\theta\in\Theta_\varepsilon}\ \ell_f(\theta),
    \qquad
    \ell_f(\theta)
    =
    -\frac{1}{q-1}\sum_{i=1}^N\log H_\theta(x_i)
    -
    N\log Z_f(\theta),
\end{equation}
where the escort-centering constraint makes \(\mu_q\) the escort mean of the
fitted density.

The escort semi-variances are not estimated as free parameters, but recovered
a posteriori from the fitted escort density
\[
p_\theta(x)
=
\frac{H_\theta(x)^{-\frac{q}{q-1}}}{Z_p(\theta)},
\qquad
Z_p(\theta)
=
\int_{\mathbb R}H_\theta(u)^{-\frac{q}{q-1}}\,du,
\]
as
\begin{equation}\label{eq:escort_semivariances_fitted}
\widehat\sigma^2_{q,-}
=
\int_{\mathbb R}
(x-\widehat\mu_q)_-^2\,p_{\widehat\theta}(x)\,dx,
\qquad
\widehat\sigma^2_{q,+}
=
\int_{\mathbb R}
(x-\widehat\mu_q)_+^2\,p_{\widehat\theta}(x)\,dx .
\end{equation}
For comparison with the sample moments, we also report the \emph{ordinary}
semi-variances
\[
\sigma_{f,-}^2
=
\int_{\mathbb R}(x-\mu_f)_-^2\,f_\theta(x)\,dx,
\qquad
\sigma_{f,+}^2
=
\int_{\mathbb R}(x-\mu_f)_+^2\,f_\theta(x)\,dx,
\]
where \(\mu_f=\mathbb E_f[X]\). These are ordinary moments of the fitted
density, not the escort semi-variances appearing in the maximum-entropy
constraints.

This parametrization reverses the point of view of the maximum-entropy
derivation, where the escort semi-variances are fixed and
\((\beta,\gamma_-,\gamma_+)\) are Lagrange multipliers. By
Remark~\ref{rem:forward-parametrization}, each admissible fitted density
therefore retains its Tsallis maximum-entropy interpretation for the
constraints that it induces.

The likelihood is not a concave function of \(\theta\). This does not
contradict the variational result of Section~\ref{sec:asym_construction},
which concerns the entropy functional at fixed escort constraints. Estimation
is therefore carried out by constrained numerical optimization, in
\textsc{MATLAB} with \texttt{fmincon} (interior-point algorithm), fitting
first the two-parameter symmetric restriction and then using that estimate as
the starting point for the full asymmetric problem. The symmetric benchmark is
fitted by the built-in location--scale $t$ routine of the same environment.

For BOVESPA the fitted asymmetry is weakest at all three horizons, and the
likelihood is correspondingly flattest along the asymmetric directions. As a
numerical check we ran a multistart procedure from dispersed asymmetric
starting values, with \(\beta\) of both signs and \(\gamma_-\neq\gamma_+\). No
asymmetric start improved on the warm-start solution, which was recovered in
every run up to optimizer tolerance; the corresponding estimates are reported
in Section~S2 of the Supplementary Material (Table~S1). Over the
whole Monte Carlo design the solver returned a feasible stationary point in
all but one of the $400\,000$ asymmetric fits, and that replication was
discarded, with no effect on the reported medians.

The distribution-based goodness-of-fit statistics are computed from the
cumulative distribution function associated with
Eq.~\eqref{eq:f_density_q}. Under the positivity conditions above, this CDF
admits a real-valued closed form in terms of incomplete beta functions. Set
\[
a=\frac{1}{q-1},
\qquad
z_y=y-\mu_q,
\qquad
D_\pm=4\gamma_\pm-\beta^2>0.
\]
For \(s\in\{-,+\}\), define
\[
t_s(z)=\frac{2\gamma_s z+\beta}{\sqrt{D_s}},
\qquad
C_s=
\left(\frac{D_s}{4\gamma_s}\right)^{-a}
\frac{\sqrt{D_s}}{2\gamma_s}.
\]
Moreover, let
\[
B_a=B\!\left(\frac12,a-\frac12\right),
\]
and define
\[
G_a(t)
=
\frac12 B_a
+
\frac12\,\operatorname{sgn}(t)\,
B_{\frac{t^2}{1+t^2}}
\!\left(\frac12,a-\frac12\right),
\]
where \(B_x(\cdot,\cdot)\) denotes the incomplete beta function and
\(\operatorname{sgn}(0)=0\). Equivalently,
\[
G_a(t)=\int_{-\infty}^{t}(1+u^2)^{-a}\,du .
\]
Using
\[
1+\beta z+\gamma_s z^2
=
\frac{D_s}{4\gamma_s}\bigl(1+t_s(z)^2\bigr),
\]
the normalizing constant can be written as
\begin{equation}\label{eq:Zf_closed_form}
Z_f(\theta)
=
C_-G_a(t_-(0))
+
C_+\bigl[B_a-G_a(t_+(0))\bigr].
\end{equation}
Therefore,
\begin{equation}\label{eq:cdf_closed_form_empirical}
F_\theta(y)
=
\begin{cases}
\displaystyle
\frac{C_-G_a(t_-(z_y))}{Z_f(\theta)},
& y\le \mu_q,
\\[14pt]
\displaystyle
\frac{
C_-G_a(t_-(0))
+
C_+\bigl[G_a(t_+(z_y))-G_a(t_+(0))\bigr]
}{Z_f(\theta)},
& y>\mu_q .
\end{cases}
\end{equation}

\subsection{Goodness-of-fit and benchmark model}\label{subsec:gof}

Goodness-of-fit at the maximum-likelihood estimate \(\widehat{\theta}\) is
assessed through two empirical-distribution statistics, the
Kolmogorov--Smirnov statistic \(D_n\) \citep{an1933sulla} and the
Anderson--Darling statistic \(A_n^2\)
\citep{anderson1952asymptotic,anderson1954test,stephens1974edf}, both
computed from the CDF in Eq.~\eqref{eq:cdf_closed_form_empirical}. The first
measures the largest uniform distance between empirical and fitted
distribution functions, whereas the second gives more weight to tail
discrepancies. Since the parameters are estimated
on the same sample, the standard distribution-free critical values do not
apply \citep{lilliefors1967kolmogorov,lehmann2005testing}, and we therefore
use \(D_n\) and \(A_n^2\) as in-sample diagnostic measures.

We report the log-likelihood difference \(\Delta\ell\) and the AIC and BIC
differences, throughout taken as asymmetric minus symmetric Tsallis: under the
escort-centering constraint the two models have four and three free
parameters, so the criteria charge the gain from relaxing symmetry against a
single degree of freedom. The nesting supports a formal test,
\(\mathrm{LR}=2(\widehat\ell_{\mathrm{asym}}-\widehat\ell_{\mathrm{sym}})
\sim\chi^2_1\) \citep{wilks1938}, whose outcome is marked in the tables by
one, two and three stars on \(\Delta\ell\) at the \(5\%\), \(1\%\) and
\(0.1\%\) levels. The test uses the point estimates, whereas the
\(\Delta\ell\) displayed alongside is the bootstrap median.

Monte Carlo statistics are summarised across the \(R=10000\) replications,
whereas for the real-market series sampling variability is assessed by a
nonparametric bootstrap \citep{efron1993introduction,davison1997bootstrap}:
each of the \(B=999\) samples is drawn with replacement from the \(n\)
standardised returns, both models are re-estimated on it and the statistics
recomputed. In either case we report the median and the \([p_5,p_{95}]\)
interval, the value \(B=999\) following the usual convention of choosing
\(B\) so that \(\alpha(B+1)\) is an integer at the nominal levels of interest
\citep{davison1997bootstrap}.

Resampling individual observations preserves the unconditional marginal law
but, under volatility clustering \citep{Cont2001}, understates sampling
variability: the intervals are descriptive, not exact confidence intervals.
Block bootstrap schemes \citep{politis1994stationary,ledoit2008robust} are
left for future work.

\subsection{Synthetic experiments: results}\label{subsec:synthetic}

Each of the \(40\) scenarios of the grid in Section~\ref{subsec:data} is
replicated \(R=10000\) times with samples of size \(n=3000\).

Table~\ref{tab:mc_excerpt} crosses all five values of \(\alpha\) with
\(\nu\in\{3.1,5,30\}\). Each scenario is described by the entropic index \(\widehat q\), the
asymmetry parameter \(\widehat\beta\), the two ordinary semi-variances
\(\widehat\sigma^2_{f,-}\) and \(\widehat\sigma^2_{f,+}\), and the
log-likelihood difference \(\widetilde{\Delta\ell}\) with the KS and AD
statistics of each fit.
All entries are medians across replications.

Table~\ref{tab:mc_excerpt} exhibits four regularities. First, for every
\(\alpha\) the estimate \(\widehat q\) decreases monotonically in \(\nu\), from
\(1.31\)--\(1.49\) at \(\nu=3.1\) to \(1.01\)--\(1.06\) at \(\nu=30\), the
level at each \(\nu\) depending on \(\alpha\), and therefore tracks the tail
weight of the generating process. Second, the estimated asymmetry follows the
shape parameter in sign and in magnitude: \(\widehat\beta\) is \(0.000\) at
\(\alpha=0\), grows to \(0.254\) at \(\alpha=1\) and to \(0.350\) at
\(\alpha=2\) for \(\nu=3.1\), and reverses exactly under
\(\alpha\mapsto-\alpha\), as the design requires. The fitted semi-variances
carry the same information, the heavier side taking the larger component,
\(\widehat\sigma^2_{f,+}=1.115\) against
\(\widehat\sigma^2_{f,-}=0.439\) at \(\alpha=2\), \(\nu=3.1\), a ratio of about
\(2.5\), the two exchanging roles with the sign of \(\alpha\) and agreeing to
within one unit in the third decimal under symmetry, \(1.408\) against
\(1.407\) at \(\nu=3.1\). Both the imbalance and the overall level shrink as
\(\nu\) grows and the tails become lighter.

\begin{table}[!htbp]
  \centering
  \footnotesize
  \setlength{\tabcolsep}{4.5pt}
  \begin{tabular}{rr rrrr r rr rr}
    \toprule
    \multicolumn{2}{c}{True param.} & \multicolumn{4}{c}{Fitted asymmetric Tsallis} & & \multicolumn{2}{c}{KS} & \multicolumn{2}{c}{AD} \\
    \cmidrule(lr){1-2}\cmidrule(lr){3-6}\cmidrule(lr){8-9}\cmidrule(lr){10-11}
    $\alpha$ & $\nu$ & $\widehat{q}$ & $\widehat{\beta}$ & $\widehat{\sigma}^2_{f,-}$ & $\widehat{\sigma}^2_{f,+}$ & $\widetilde{\Delta\ell}$ & $\widetilde{D}_n^{\,\mathrm{asym}}$ & $\widetilde{D}_n^{\,\mathrm{sym}}$ & $\widetilde{A}_n^{2,\mathrm{asym}}$ & $\widetilde{A}_n^{2,\mathrm{sym}}$ \\
    \midrule
    \multirow{3}{*}{$2$} & 3.1 & 1.310 & $0.350$ & 0.439 & 1.115 & 154.8 & 0.0161 & 0.0606 & 0.829 & 30.613 \\
     & 5 & 1.197 & $0.193$ & 0.322 & 0.590 & 114.7 & 0.0138 & 0.0451 & 0.569 & 17.725 \\
     & 30 & 1.009 & $0.006$ & 0.228 & 0.307 & 57.7 & 0.0113 & 0.0321 & 0.343 & 7.738 \\
    \addlinespace
    \multirow{3}{*}{$1$} & 3.1 & 1.437 & $0.254$ & 0.709 & 1.307 & 53.8 & 0.0110 & 0.0347 & 0.327 & 10.749 \\
     & 5 & 1.299 & $0.134$ & 0.497 & 0.697 & 33.4 & 0.0105 & 0.0240 & 0.287 & 5.203 \\
     & 30 & 1.062 & $0.012$ & 0.348 & 0.388 & 7.9 & 0.0100 & 0.0170 & 0.252 & 1.278 \\
    \addlinespace
    \multirow{3}{*}{$0$} & 3.1 & 1.487 & $0.000$ & 1.408 & 1.407 & 0.2 & 0.0091 & 0.0098 & 0.208 & 0.283 \\
     & 5 & 1.331 & $0.000$ & 0.832 & 0.832 & 0.2 & 0.0093 & 0.0098 & 0.216 & 0.279 \\
     & 30 & 1.062 & $0.000$ & 0.536 & 0.536 & 0.2 & 0.0096 & 0.0105 & 0.233 & 0.295 \\
    \addlinespace
    \multirow{3}{*}{$-1$} & 3.1 & 1.437 & $-0.254$ & 1.309 & 0.710 & 53.8 & 0.0110 & 0.0347 & 0.327 & 10.745 \\
     & 5 & 1.299 & $-0.134$ & 0.697 & 0.497 & 33.3 & 0.0105 & 0.0241 & 0.287 & 5.197 \\
     & 30 & 1.062 & $-0.012$ & 0.388 & 0.348 & 7.9 & 0.0100 & 0.0171 & 0.253 & 1.284 \\
    \addlinespace
    \multirow{3}{*}{$-2$} & 3.1 & 1.310 & $-0.350$ & 1.115 & 0.440 & 154.7 & 0.0161 & 0.0606 & 0.818 & 30.632 \\
     & 5 & 1.198 & $-0.193$ & 0.589 & 0.322 & 114.8 & 0.0138 & 0.0450 & 0.562 & 17.721 \\
     & 30 & 1.008 & $-0.006$ & 0.307 & 0.228 & 57.8 & 0.0112 & 0.0323 & 0.341 & 7.753 \\
    \bottomrule
  \end{tabular}
  \caption{Monte Carlo results: representative scenarios, medians across replications.}
  \label{tab:mc_excerpt}
\end{table}

Third, the likelihood gain is essentially nil under symmetry and large
otherwise. At \(\alpha=0\) the median \(\widetilde{\Delta\ell}\) equals \(0.2\)
at every \(\nu\), whereas at \(|\alpha|=2\) it ranges from \(57.7\) to
\(154.8\), the values at \(\alpha\) and \(-\alpha\) agreeing to within the
last reported digit. Fourth, the two goodness-of-fit statistics separate the
models differently. Under symmetry the fits are indistinguishable on both,
whereas
under asymmetry the KS discrepancy of the benchmark grows up to three or four
times that of the asymmetric model, \(0.0606\) against \(0.0161\) at
\(\alpha=2\), \(\nu=3.1\), and the tail-weighted AD discrepancy up to more
than an order of magnitude, \(30.613\) against \(0.829\) in the same cell.
Since KS is governed by the centre and AD weights the tails, the gap between
the two ratios localises the advantage of the asymmetric specification in the
tail region, not in the bulk.

The same pattern holds over the whole grid, where the ordering of the two
goodness-of-fit statistics never reverses: the asymmetric escort semi-variance
constraints capture systematic skewness that the symmetric benchmark cannot
represent, on data generated outside the Tsallis class. The complete grid,
with the Monte Carlo intervals, the remaining structural parameters and the
exact moments of the generating process, is reported in
Appendix~\ref{app:mc_tables}.

The last rows of each block call for a caveat. At \(|\alpha|=2\) and
\(\nu=30\) the estimate \(\widehat q=1.009\) sits at the Gaussian limit of the
family and \(\widehat\beta=0.006\) looks negligible, yet the fitted
semi-variances still differ, \(0.228\) against \(0.307\). As \(q\to1\) the
kernel converges to a two-piece Gaussian,
\(H^{-1/(q-1)}\to\exp\{-\beta'z-\gamma'_\pm z^2\}\), so the asymmetry is
carried by the rescaled \(\beta/(q-1)\) and \(\gamma_\pm/(q-1)\), which remain
of order one. Therefore \(\widehat\beta\) is small in absolute terms, not in
the terms that shape the density.

\paragraph{Graphical evidence.}
Figure~\ref{fig:skewt_semilog_grid} shows four representative fits, for
\(\alpha\in\{0,2\}\) and \(\nu\in\{3.5,5\}\), each panel reporting the binned
empirical density of one simulated sample with the two fitted densities on a
semi-logarithmic scale. In the symmetric panels the two curves are
indistinguishable. For \(\alpha=2\)
the asymmetric fit follows the empirical shape in the bulk and in both tails,
with different decay on the two sides, whereas the benchmark misses the
directional tail behaviour.

\begin{figure}[!htbp]
  \centering
  \includegraphics[width=0.85\textwidth]{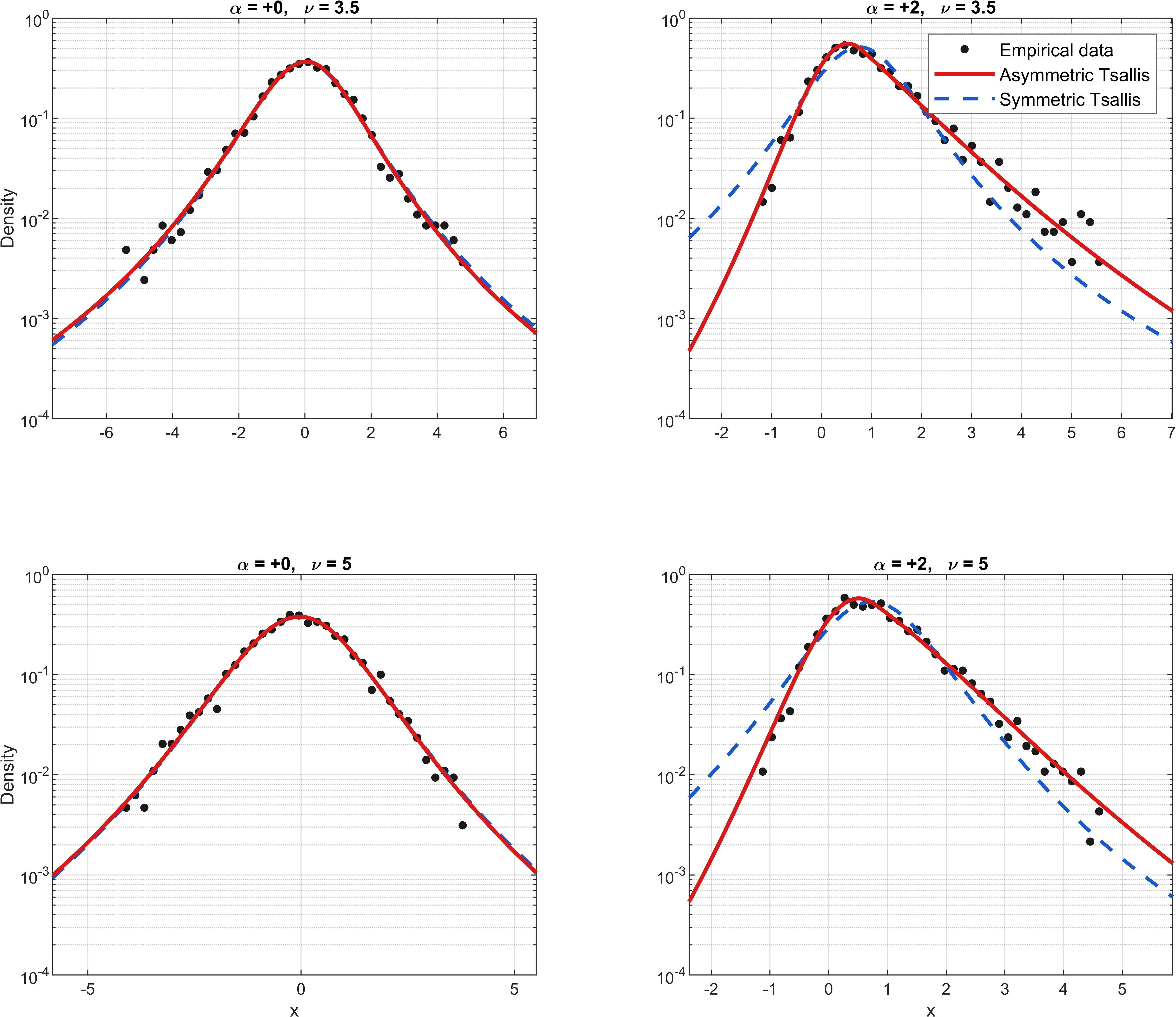}
  \caption{Four representative Monte Carlo fits. Rows: $\nu\in\{3.5,5\}$; columns: $\alpha\in\{0,2\}$.}
  \label{fig:skewt_semilog_grid}
\end{figure}

\subsection{Real-market experiments: results}\label{subsec:real}

Table~\ref{tab:descriptive_markets} reports the descriptive statistics of the
nine markets at the three frequencies. The series depart from Gaussianity in
two respects, asymmetry and tail weight. The skewness is negative for every
equity index and for the commodity at all three horizons, the only exception
being BIST at $\Delta=20$, and correspondingly $\sigma^2_->\sigma^2_+$ in all
those cases, whereas the two cryptocurrencies are close to symmetric
($|\text{Skew}|\le0.5$). The kurtosis is largest at the daily frequency for
most series, reaching $22.5$ for DJIA and $24.7$ for BRENT, and decreases under
aggregation, falling close to the Gaussian value $3$ at the lowest frequency
(NASDAQ $3.51$, BIST $3.58$, ETH $3.96$); it stays high only where a few
isolated extremes dominate (FTSE at $\Delta=20$, $22.4$, and DJIA at
$\Delta=7$, $22.6$). At $\Delta=1$ the cryptocurrencies are not the
heaviest-tailed series by sample kurtosis (BTC $9.78$ and ETH $10.90$, against
$22.5$ for DJIA), yet they carry the two largest fitted $\widehat q$ ($1.62$
and $1.54$). The two descriptors therefore respond to different features of the
distribution, a point taken up below.

\begin{table}[!htbp]
  \centering
  \adjustbox{max width=\textwidth}{\scriptsize
    \begin{tabular}{ccccrrrrrr}
      \toprule
      Market & First date & Last date & $\Delta$ & Obs.\ & $\bar r$ & $\sigma_{-}^{2}$ & $\sigma_{+}^{2}$ & Skew & Kurt \\
      \midrule
      \multirow{3}{*}{BIST} & \multirow{3}{*}{31/12/2009} & \multirow{3}{*}{31/12/2025} & 1 & 4017 & 0.076\% & 1.3328 & 1.0823 & -0.62 & 7.79 \\
       &  &  & 7 & 573 & 0.534\% & 9.8334 & 7.8150 & -0.49 & 4.37 \\
       &  &  & 20 & 200 & 1.527\% & 25.2391 & 26.0167 & 0.06 & 3.58 \\
      \midrule
      \multirow{3}{*}{FTSE} & \multirow{3}{*}{31/12/2009} & \multirow{3}{*}{31/12/2025} & 1 & 4041 & 0.015\% & 0.5255 & 0.4298 & -0.69 & 12.93 \\
       &  &  & 7 & 577 & 0.104\% & 3.3343 & 2.5798 & -0.70 & 7.20 \\
       &  &  & 20 & 202 & 0.301\% & 11.6568 & 5.7035 & -2.73 & 22.41 \\
      \midrule
      \multirow{3}{*}{NASDAQ} & \multirow{3}{*}{31/12/2009} & \multirow{3}{*}{31/12/2025} & 1 & 4024 & 0.065\% & 0.9398 & 0.7871 & -0.37 & 10.67 \\
       &  &  & 7 & 574 & 0.456\% & 5.2984 & 3.9060 & -0.64 & 4.88 \\
       &  &  & 20 & 201 & 1.305\% & 17.5698 & 11.0869 & -0.69 & 3.51 \\
      \midrule
      \multirow{3}{*}{DJIA} & \multirow{3}{*}{01/01/2010} & \multirow{3}{*}{31/12/2025} & 1 & 4048 & 0.038\% & 0.6003 & 0.4935 & -0.76 & 22.51 \\
       &  &  & 7 & 578 & 0.266\% & 4.0362 & 2.4209 & -2.14 & 22.59 \\
       &  &  & 20 & 202 & 0.755\% & 8.4666 & 5.1300 & -1.07 & 6.04 \\
      \midrule
      \multirow{3}{*}{S\&P500} & \multirow{3}{*}{31/12/2009} & \multirow{3}{*}{31/12/2025} & 1 & 4034 & 0.045\% & 0.6565 & 0.5379 & -0.62 & 16.59 \\
       &  &  & 7 & 576 & 0.317\% & 4.1410 & 2.4808 & -1.22 & 7.68 \\
       &  &  & 20 & 201 & 0.906\% & 11.4279 & 6.5890 & -0.97 & 5.13 \\
      \midrule
      \multirow{3}{*}{BOVESPA} & \multirow{3}{*}{30/12/2009} & \multirow{3}{*}{30/12/2025} & 1 & 3968 & 0.022\% & 1.1526 & 0.9833 & -0.79 & 15.08 \\
       &  &  & 7 & 566 & 0.147\% & 7.5130 & 5.6855 & -1.14 & 10.98 \\
       &  &  & 20 & 198 & 0.423\% & 20.5811 & 17.3381 & -0.39 & 4.06 \\
      \midrule
      \multirow{3}{*}{BRENT} & \multirow{3}{*}{02/01/1990} & \multirow{3}{*}{09/01/2026} & 1 & 9212 & 0.012\% & 2.8897 & 2.3736 & -1.18 & 24.71 \\
       &  &  & 7 & 1316 & 0.086\% & 19.4652 & 14.1694 & -1.03 & 10.11 \\
       &  &  & 20 & 460 & 0.241\% & 59.0907 & 44.9922 & -1.13 & 12.83 \\
      \midrule
      \multirow{3}{*}{BTC} & \multirow{3}{*}{17/07/2014} & \multirow{3}{*}{31/12/2025} & 1 & 4152 & 0.119\% & 6.5976 & 5.9036 & -0.47 & 9.78 \\
       &  &  & 7 & 593 & 0.835\% & 43.0459 & 44.8476 & 0.09 & 6.48 \\
       &  &  & 20 & 207 & 2.390\% & 157.1413 & 154.0742 & -0.04 & 5.50 \\
      \midrule
      \multirow{3}{*}{ETH} & \multirow{3}{*}{12/02/2018} & \multirow{3}{*}{31/12/2025} & 1 & 2877 & 0.043\% & 10.6772 & 9.5902 & -0.34 & 10.90 \\
       &  &  & 7 & 411 & 0.301\% & 76.9579 & 68.7903 & -0.36 & 5.40 \\
       &  &  & 20 & 143 & 0.890\% & 239.5047 & 239.7775 & -0.07 & 3.96 \\
      \bottomrule
    \end{tabular}}
  \caption{Descriptive statistics of the logarithmic returns.}
  \label{tab:descriptive_markets}
\end{table}

\begin{table}[!htbp]
  \centering
  \small
  \begin{tabular}{l r rr rr rr}
    \toprule
    Market & $\Delta\ell$ & $\widetilde{D}_n^{\,\mathrm{asym}}$ & $\widetilde{D}_n^{\,\mathrm{sym}}$ & $\widetilde{A}_n^{2,\mathrm{asym}}$ & $\widetilde{A}_n^{2,\mathrm{sym}}$ & $\widehat{\beta}$ & $\widehat{q}$ \\
    \midrule
    BIST    & $4.1^{**}$ & 0.0303 & 0.0380 & 0.440 & 1.101 & $-0.062$ & 1.19 \\
    FTSE    & $1.3$ & 0.0296 & 0.0306 & 0.491 & 0.657 & $-0.079$ & 1.35 \\
    NASDAQ  & $7.1^{***}$ & 0.0302 & 0.0449 & 0.413 & 1.543 & $-0.106$ & 1.24 \\
    DJIA    & $3.0^{*}$ & 0.0297 & 0.0317 & 0.558 & 0.944 & $-0.127$ & 1.36 \\
    S\&P500 & $11.0^{***}$ & 0.0328 & 0.0472 & 0.435 & 2.483 & $-0.273$ & 1.38 \\
    BOVESPA & $1.0$ & 0.0293 & 0.0297 & 0.430 & 0.546 & $-0.025$ & 1.19 \\
    BRENT   & $5.4^{**}$ & 0.0186 & 0.0259 & 0.393 & 1.273 & $-0.109$ & 1.36 \\
    BTC     & $0.7$ & 0.0322 & 0.0342 & 0.477 & 0.695 & $+0.067$ & 1.40 \\
    ETH     & $0.6$ & 0.0419 & 0.0471 & 0.815 & 1.052 & $+0.108$ & 1.51 \\
    \bottomrule
  \end{tabular}
  \caption{Real markets, weekly horizon ($\Delta=7$): bootstrap medians.}
  \label{tab:real_excerpt}
\end{table}

Table~\ref{tab:real_excerpt} summarises the results at the weekly horizon, one
row per market. The first column reports the log-likelihood difference
$\Delta\ell$, asymmetric minus symmetric Tsallis, as a bootstrap median, with
stars marking the likelihood-ratio test of the symmetric restriction; the
median KS and AD statistics of the two fitted models are then paired market by
market, followed by the asymmetry parameter $\widehat\beta$ and the entropic
index $\widehat q$. The same statistics for all three horizons, with bootstrap
intervals, fitted moments and the remaining structural parameters, are reported
in Appendix~\ref{app:real_tables}.

Since the symmetric benchmark is nested, $\Delta\ell\ge0$ by construction in
all $27$ market--horizon combinations: the informative quantity is its
magnitude, not its sign. The largest gains arise where the fitted
\emph{structural} asymmetry is strongest, which need not be where the
descriptive skewness is largest. S\&P500 at $\Delta=7$ attains
$\Delta\ell=11.0$, the highest value of the panel together with NASDAQ at
$\Delta=1$, and the largest $|\widehat\beta|$ of the horizon ($0.27$), whereas
DJIA, whose sample skewness at $\Delta=7$ is by far the largest in absolute
value ($-2.14$ against $-1.22$), reaches only $3.0$ with $|\widehat\beta|=0.13$.
Besides S\&P500 and NASDAQ ($7.1$), the symmetric restriction is rejected at
the $0.1\%$ level in five further configurations: NASDAQ at $\Delta=1$ ($11.0$)
and $\Delta=20$ ($9.9$), S\&P500 at $\Delta=20$ ($9.4$), DJIA at $\Delta=20$
($7.2$) and BRENT at $\Delta=1$ ($8.2$).

The improvement concentrates in the tail-sensitive statistic: at $\Delta=7$ the
asymmetric model reduces $\widetilde A_n^2$ from $2.48$ to $0.44$ for S\&P500
and from $1.54$ to $0.41$ for NASDAQ, and
$\widetilde A_n^{2,\mathrm{asym}}<\widetilde A_n^{2,\mathrm{sym}}$ in all $27$
combinations, whereas the KS comparison is mixed across the panel and its
bootstrap intervals overlap.
The penalised criteria order the evidence accordingly. $\Delta$AIC favours the
asymmetric model in $16$ of the $27$ combinations, at all three horizons for
FTSE, NASDAQ, DJIA and S\&P500; the stricter $\Delta$BIC, which charges the
extra parameter a $\log n$ penalty, in the $11$ where the gain clears it
($2\Delta\ell\gtrsim\log n$). The symmetric model remains preferred for
BOVESPA and the two cryptocurrencies at every horizon.

\paragraph{Structural reading.}
The structural parameters of Table~\ref{tab:real_excerpt} separate the two
groups of markets. Every equity index and the commodity carry a negative
$\widehat\beta$, S\&P500 $-0.273$, DJIA $-0.127$, BRENT $-0.109$, NASDAQ
$-0.106$, FTSE $-0.079$, BIST $-0.062$ and BOVESPA $-0.025$, that is a heavier
downside, together with an entropic index between $1.19$ and $1.38$. The two
cryptocurrencies instead show a small positive $\widehat\beta$ (ETH $+0.108$,
BTC $+0.067$), the largest $\widehat q$ of the sample (ETH $1.51$ and BTC
$1.40$, both above the range spanned by the indices) and a negligible
$\Delta\ell$ ($0.6$ and $0.7$), that is heavy tails without directional
imbalance.

This resolves the contrast with the sample kurtosis. Kurtosis is a
fourth-moment functional, dominated by a handful of extreme observations,
whereas $\widehat q$ summarises the decay of the whole tail. At
$\widehat q=1.509$ the fitted ETH density has tail index
$\nu_f=(3-\widehat q)/(\widehat q-1)\approx2.9$, hence infinite theoretical
kurtosis, although its sample kurtosis at this horizon is only $5.40$.

The bootstrap intervals, reported in Appendix~\ref{app:real_tables}, are
consistent with the distinction. At $\Delta=1$ they exclude zero for every
equity index except BOVESPA, and for the commodity, whereas they cover it for
both cryptocurrencies. The fitted semi-variances there reproduce the
directional imbalance of the sample ones wherever the asymmetry is
significant, that is in $21$ of the $27$ market--horizon combinations. The
model therefore recovers the directional information where it is present and
collapses onto its symmetric restriction where it is not, rather than imposing
asymmetry a priori.

The entropic index also declines under temporal aggregation in eight of the
nine series, as partial central-limit behaviour would suggest, FTSE
$1.47\to1.35\to1.29$ and NASDAQ $1.48\to1.24\to1.18$ from $\Delta=1$ to
$\Delta=20$. The exception is S\&P500, whose $\widehat q$ is flat between
$\Delta=7$ and $\Delta=20$, $1.38$ against $1.39$. At the lowest frequency,
however, the sample size ($n\approx143$--$460$) widens the bootstrap intervals
to the point where the tail estimate is no longer sharp.

\paragraph{Descriptive skewness versus structural asymmetry.}
BOVESPA separates descriptive skewness from structural asymmetry. Its sample
skewness is $-0.79$ at $\Delta=1$ and $-1.14$ at $\Delta=7$, whereas the
fitted $\widehat\beta$ lies between $-0.02$ and $-0.03$ at every frequency,
with bootstrap intervals covering zero and a likelihood gain that is never
significant ($\Delta\ell\le1.0$); the multistart check reported in Section~S2
of the Supplementary Material excludes an optimizer artifact. The likelihood
therefore detects no systematic shape asymmetry in BOVESPA, and both criteria
favour the symmetric model, which suggests that its sample skewness is driven
by a small number of extreme observations rather than by a directional
imbalance of the whole distribution.

\paragraph{The tail mechanism.}
Figure~\ref{fig:pdf_qq} shows the mechanism on two series at the same horizon,
S\&P\,500 and NASDAQ at $\Delta=7$ ($n=576$ and $n=574$), both rejecting the
symmetric restriction at the $0.1\%$ level ($\Delta\ell=11.0$ and $7.1$) but
with asymmetries of different size ($\widehat\beta=-0.273$ and $-0.106$). In
both rows the asymmetric fit lies above the symmetric benchmark in the left
tail and below it in the right one, and follows the empirical bins through the
bulk and the left tail. With a single scale parameter the benchmark cannot do
both: it is too light where the empirical tail is heavier and too heavy on the
opposite side. In the right tail of the S\&P\,500 panel the few available bins
lie above both fits, where the sample carries little information.

The QQ-plots read the same discrepancy in the quantiles. The asymmetric model
lies on the $45^{\circ}$ line over the bulk and most of the tails, the residual
departure being confined to the few most extreme observations, whereas the
symmetric benchmark departs in the same direction but by more and over a wider
range of quantiles. Hence the Anderson--Darling gaps of
Table~\ref{tab:bootstrap_markets}, from $2.48$ to $0.44$ and from $1.54$ to
$0.41$, the tail-sensitive statistic weighting exactly the region where a
single scale fails.

\begin{figure}[!htbp]
  \centering
  \includegraphics[width=\textwidth]{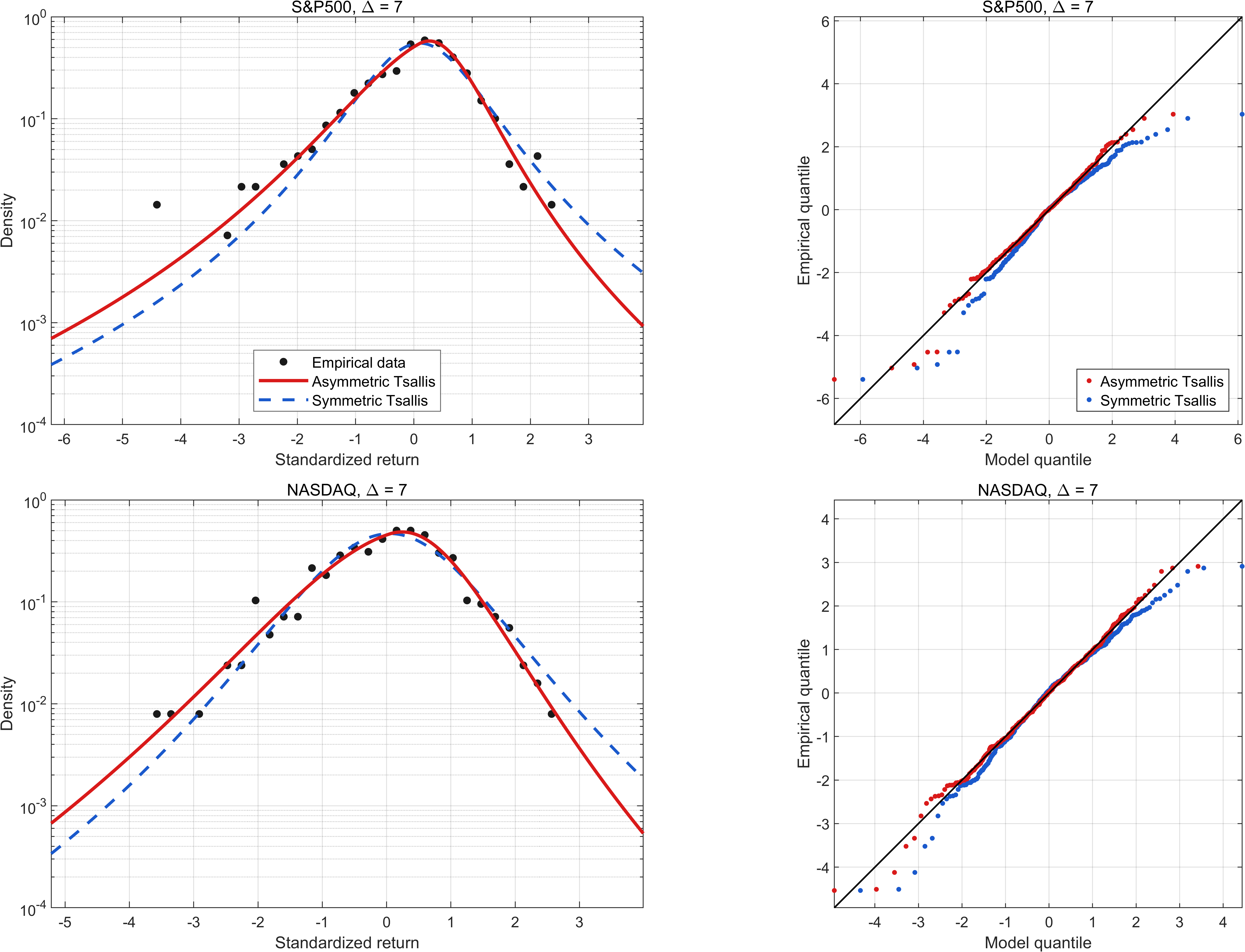}
  \caption{S\&P\,500 (top) and NASDAQ (bottom) at $\Delta=7$: fitted densities on a semi-logarithmic scale (left) and QQ-plots (right).}
  \label{fig:pdf_qq}
\end{figure}

The same pair of diagnostics, the fitted density on a semi-logarithmic scale
and the QQ-plot, is reported for all nine markets at the three horizons in
Section~S3 of the Supplementary Material (Figures~S1--S27).

\section{Conclusions}\label{sec:conclusions}

This paper develops a class of asymmetric $q$-Gaussian distributions obtained from the stationarity conditions of a Tsallis maximum-entropy problem under escort mean and escort semi-variance constraints. The proposed construction extends the standard symmetric entropy-based framework by assigning distinct directional scale parameters \(\gamma_-\) and \(\gamma_+\) to negative and positive deviations while retaining a single entropic index \(q\). In this way, tail heaviness, governed by the common asymptotic exponent, and directional tail balance, governed by the relative scale of the two sides, are controlled separately. From a theoretical perspective, the model preserves the entropy-based foundation of the $q$-Gaussian family: through the escort-density reformulation, any admissible stationary density in the proposed two-piece quadratic-kernel family is the unique global maximizer of Tsallis entropy for the escort constraints that it satisfies. At the same time, the parametrization remains parsimonious: the CDF is available in closed form through incomplete beta functions, and likelihood estimation reduces to a low-dimensional constrained optimization problem.

The analysis combines the theoretical derivation with an empirical assessment. First, a Monte Carlo experiment (\(R=10000\) replications, \(n=3000\), and a grid of \(40\) skew-\(t\) configurations), based on synthetic data generated from Azzalini skew-\(t\) distributions, shows that the proposed asymmetric specification adapts to skewed and heavy-tailed features across the whole grid. In every asymmetric scenario (\(|\alpha|\ge1\)) the 5th percentile of \(\Delta\ell\) across replications stays above the \(\chi^2_1\) threshold, its smallest value being \(2.8\) at \(|\alpha|=1\) and \(\nu=30\) against \(1.92\), so the likelihood-ratio test rejects the symmetric restriction in at least \(95\%\) of the replications. Median KS and AD discrepancies are several-fold smaller, by up to \(3.8\times\) and \(37\times\) at \(|\alpha|=2\) and \(\nu=3.1\). Under symmetry (\(\alpha=0\)), the two models are empirically indistinguishable, and the information criteria correctly favour the more parsimonious symmetric Tsallis specification. Since the data-generating process is itself a skew-\(t\), this evidence speaks to the gain from relaxing symmetry within the entropic family, not to competitiveness against the generating skew-\(t\) family itself.

Second, the application to nine real logarithmic return series, equity indices (BIST, FTSE, NASDAQ, DJIA, S\&P\,500, BOVESPA), Brent crude oil, and cryptocurrencies (BTC, ETH), at daily, \(7\)-day, and \(20\)-day horizons, illustrates the framework in practice. The series display substantial deviations from Gaussianity, with high kurtosis and non-negligible differences between negative and positive semi-variances. As in the synthetic study, the asymmetric Tsallis model improves on the symmetric benchmark chiefly in the tail-sensitive Anderson--Darling statistic, which is smaller in all \(27\) market--horizon combinations, and most clearly where empirical asymmetry is salient. The Kolmogorov--Smirnov comparison is instead mixed. On the penalised criteria the evidence is more selective: $\Delta$AIC favours the asymmetric specification in $16$ of the $27$ combinations and the stricter $\Delta$BIC in $11$, the symmetric restriction being preferred for BOVESPA and for the two cryptocurrencies at every horizon. The evidence should therefore be interpreted as an in-sample assessment of descriptive fit, not as evidence of out-of-sample or predictive performance.

Overall, the results show that, within the Tsallis maximum-entropy construction, asymmetric escort semi-variance constraints improve the descriptive fit when downside and upside dispersion differ markedly, as is often the case for financial returns.

Several directions for further research remain open. On the theoretical side, it would be useful to investigate multivariate extensions of the framework and their implications for dependence modelling and portfolio construction. On the empirical side, future work may extend the benchmark set to alternative asymmetric heavy-tailed distributions, such as skew-\(t\) and generalized hyperbolic families, and move beyond in-sample diagnostics to formal out-of-sample and predictive assessments. A particularly important step, given volatility clustering, is to replace the iid bootstrap with dependence-robust resampling and to re-estimate the model on volatility-filtered, for example GARCH-standardised, residuals, so that uncertainty bands remain valid under serial dependence; see Section~\ref{subsec:gof}. Out-of-sample evaluation via predictive log scores and value-at-risk/expected-shortfall backtesting would then assess the distribution directly in risk-measurement and forecasting applications.

\section*{Data and code availability}

The market price data used in Section~\ref{sec:empirical} are daily closing
prices from LSEG Data \& Analytics \citep{LSEG}, publicly available in
\citet{Cesarone2025DS}. Data and code will be made available
on request.

\appendix
\setcounter{table}{0}
\renewcommand{\thetable}{\thesection.\arabic{table}}
\setcounter{figure}{0}
\renewcommand{\thefigure}{\thesection.\arabic{figure}}

\section{Full result tables: artificial data}\label{app:mc_tables}

In this appendix we report the complete versions of the Monte Carlo tables
summarised by Table~\ref{tab:mc_excerpt} in Section~\ref{subsec:synthetic}.
Each table covers the full $5\times8$ scenario grid,
$\alpha\in\{2,1,0,-1,-2\}$ and $\nu\in\{3.1,3.5,3.8,4.1,5,10,15,30\}$, for a
total of $40$ configurations, each replicated $R=10000$ times with samples of
size $n=3000$. Except where stated below, every entry is the Monte Carlo median
across replications, followed in square brackets by the $[p_5,p_{95}]$ interval
of the same quantity.
Table~\ref{tab:mc_comparison_f} reports, after the two columns identifying
the scenario, the log-likelihood difference $\Delta\ell$, the AIC and BIC
differences, and the KS and AD statistics of the asymmetric fit and of the
symmetric benchmark, in this order.
Table~\ref{tab:mc_params_f} reports the fitted entropic index $\widehat q$,
the fitted mean $\widehat\mu_f$, the downside and upside semi-variances
$\widehat\sigma^2_{f,-}$ and $\widehat\sigma^2_{f,+}$ and the total variance
$\widehat\sigma^2_{f,\mathrm{tot}}$; its last three columns are not estimates
but the exact ordinary moments $\mu_{\mathrm{DGP}}$ and
$\sigma^2_{\mathrm{DGP},\pm}$ of the Azzalini skew-$t$ generating process,
reported as benchmark values.
Table~\ref{tab:mc_structural_f} reports the five structural
maximum-likelihood parameters, in the order
$\widehat q$, $\widehat\mu_q$, $\widehat\beta$, $\widehat\gamma_-$ and
$\widehat\gamma_+$, of the kernel $H_\theta$.

\begin{table}[!h]
  \centering
  \makebox[\textwidth]{%
    \resizebox{\textwidth}{!}{%
    \begin{tabular}{rrrrrrrrr}
\toprule
\multicolumn{2}{c}{True param.} & \multicolumn{3}{c}{Model comparison: median $[p_5,p_{95}]$} & \multicolumn{2}{c}{KS: median $[p_5,p_{95}]$} & \multicolumn{2}{c}{AD: median $[p_5,p_{95}]$} \\
\cmidrule(lr){1-2}\cmidrule(lr){3-5}\cmidrule(lr){6-7}\cmidrule(lr){8-9}
$\alpha$ & $\nu$ & $\widetilde{\Delta\ell}$ & $\widetilde{\Delta\mathrm{AIC}}$ & $\widetilde{\Delta\mathrm{BIC}}$ & $\widetilde{D}_n^{\,\mathrm{asym}}$ & $\widetilde{D}_n^{\,\mathrm{sym}}$ & $\widetilde{A}_n^{2,\mathrm{asym}}$ & $\widetilde{A}_n^{2,\mathrm{sym}}$ \\
\midrule
\multirow{8}{*}{$2$} & 3.1 & 154.8 [129.5, 181.6] & -307.6 [-361.2, -257.0] & -301.6 [-355.2, -251.0] & 0.0161 [0.0111, 0.0228] & 0.0606 [0.0543, 0.0669] & 0.829 [0.380, 1.510] & 30.613 [25.334, 36.508] \\
 & 3.5 & 143.3 [119.1, 168.9] & -284.5 [-335.8, -236.2] & -278.5 [-329.8, -230.2] & 0.0154 [0.0105, 0.0217] & 0.0559 [0.0499, 0.0621] & 0.731 [0.331, 1.390] & 26.247 [21.576, 31.465] \\
 & 3.8 & 135.8 [112.6, 161.6] & -269.6 [-321.2, -223.2] & -263.6 [-315.2, -217.2] & 0.0150 [0.0102, 0.0211] & 0.0531 [0.0472, 0.0593] & 0.692 [0.306, 1.332] & 23.765 [19.439, 28.713] \\
 & 4.1 & 129.6 [106.0, 154.7] & -257.2 [-307.5, -209.9] & -251.2 [-301.5, -203.9] & 0.0146 [0.0100, 0.0205] & 0.0507 [0.0448, 0.0567] & 0.652 [0.294, 1.252] & 21.818 [17.644, 26.462] \\
 & 5 & 114.7 [92.8, 138.9] & -227.3 [-275.9, -183.6] & -221.3 [-269.9, -177.6] & 0.0138 [0.0094, 0.0197] & 0.0451 [0.0393, 0.0510] & 0.569 [0.244, 1.139] & 17.725 [14.173, 21.801] \\
 & 10 & 80.4 [62.1, 100.5] & -158.9 [-199.1, -122.2] & -152.9 [-193.1, -116.2] & 0.0121 [0.0082, 0.0173] & 0.0333 [0.0284, 0.0386] & 0.408 [0.179, 0.848] & 10.746 [8.188, 13.699] \\
 & 15 & 69.1 [52.1, 87.9] & -136.2 [-173.9, -102.2] & -130.2 [-167.9, -96.2] & 0.0116 [0.0079, 0.0165] & 0.0317 [0.0262, 0.0387] & 0.364 [0.165, 0.778] & 9.135 [6.699, 11.878] \\
 & 30 & 57.7 [42.4, 74.5] & -113.5 [-147.1, -82.8] & -107.5 [-141.1, -76.8] & 0.0113 [0.0078, 0.0161] & 0.0321 [0.0254, 0.0400] & 0.343 [0.162, 0.737] & 7.738 [5.562, 10.288] \\
\addlinespace
\multirow{8}{*}{$1$} & 3.1 & 53.8 [38.8, 71.8] & -105.6 [-141.6, -75.7] & -99.6 [-135.6, -69.7] & 0.0110 [0.0076, 0.0159] & 0.0347 [0.0290, 0.0407] & 0.327 [0.157, 0.672] & 10.749 [7.737, 14.376] \\
 & 3.5 & 47.7 [33.3, 64.2] & -93.4 [-126.3, -64.6] & -87.4 [-120.3, -58.6] & 0.0109 [0.0076, 0.0157] & 0.0314 [0.0260, 0.0372] & 0.316 [0.151, 0.664] & 8.807 [6.170, 11.918] \\
 & 3.8 & 43.9 [30.2, 60.0] & -85.7 [-118.0, -58.4] & -79.7 [-112.0, -52.4] & 0.0108 [0.0075, 0.0156] & 0.0294 [0.0241, 0.0351] & 0.306 [0.150, 0.636] & 7.727 [5.329, 10.646] \\
 & 4.1 & 40.9 [27.6, 56.5] & -79.7 [-111.0, -53.2] & -73.7 [-105.0, -47.2] & 0.0107 [0.0075, 0.0153] & 0.0278 [0.0227, 0.0333] & 0.302 [0.147, 0.617] & 6.903 [4.653, 9.581] \\
 & 5 & 33.4 [21.6, 47.6] & -64.7 [-93.2, -41.2] & -58.7 [-87.2, -35.2] & 0.0105 [0.0074, 0.0151] & 0.0240 [0.0192, 0.0292] & 0.287 [0.144, 0.607] & 5.203 [3.383, 7.472] \\
 & 10 & 17.0 [9.0, 27.8] & -32.0 [-53.6, -16.0] & -26.0 [-47.6, -10.0] & 0.0102 [0.0072, 0.0145] & 0.0185 [0.0136, 0.0248] & 0.264 [0.136, 0.552] & 2.431 [1.323, 3.969] \\
 & 15 & 12.2 [5.6, 21.5] & -22.4 [-41.0, -9.2] & -16.4 [-35.0, -3.2] & 0.0101 [0.0072, 0.0144] & 0.0179 [0.0125, 0.0248] & 0.260 [0.135, 0.533] & 1.803 [0.878, 3.164] \\
 & 30 & 7.9 [2.8, 15.8] & -13.8 [-29.5, -3.6] & -7.8 [-23.5, 2.4] & 0.0100 [0.0071, 0.0143] & 0.0170 [0.0115, 0.0239] & 0.252 [0.134, 0.526] & 1.278 [0.549, 2.460] \\
\addlinespace
\multirow{8}{*}{$0$} & 3.1 & 0.2 [0.0, 1.9] & 1.6 [-1.9, 2.0] & 7.6 [4.1, 8.0] & 0.0091 [0.0067, 0.0127] & 0.0098 [0.0071, 0.0136] & 0.208 [0.118, 0.388] & 0.283 [0.144, 0.633] \\
 & 3.5 & 0.2 [0.0, 2.0] & 1.6 [-1.9, 2.0] & 7.6 [4.1, 8.0] & 0.0091 [0.0067, 0.0128] & 0.0097 [0.0071, 0.0136] & 0.209 [0.118, 0.393] & 0.280 [0.142, 0.623] \\
 & 3.8 & 0.2 [0.0, 1.9] & 1.5 [-1.9, 2.0] & 7.6 [4.1, 8.0] & 0.0092 [0.0067, 0.0129] & 0.0098 [0.0071, 0.0136] & 0.211 [0.118, 0.398] & 0.277 [0.145, 0.619] \\
 & 4.1 & 0.2 [0.0, 2.0] & 1.5 [-1.9, 2.0] & 7.6 [4.1, 8.0] & 0.0092 [0.0068, 0.0130] & 0.0097 [0.0071, 0.0137] & 0.212 [0.118, 0.402] & 0.276 [0.144, 0.611] \\
 & 5 & 0.2 [0.0, 1.9] & 1.5 [-1.8, 2.0] & 7.5 [4.2, 8.0] & 0.0093 [0.0068, 0.0130] & 0.0098 [0.0071, 0.0138] & 0.216 [0.120, 0.408] & 0.279 [0.144, 0.598] \\
 & 10 & 0.2 [0.0, 1.9] & 1.6 [-1.9, 2.0] & 7.6 [4.2, 8.0] & 0.0095 [0.0069, 0.0134] & 0.0101 [0.0072, 0.0147] & 0.224 [0.124, 0.437] & 0.282 [0.144, 0.614] \\
 & 15 & 0.2 [0.0, 2.0] & 1.6 [-1.9, 2.0] & 7.6 [4.1, 8.0] & 0.0096 [0.0069, 0.0135] & 0.0103 [0.0073, 0.0151] & 0.228 [0.125, 0.448] & 0.286 [0.146, 0.622] \\
 & 30 & 0.2 [0.0, 2.0] & 1.6 [-2.0, 2.0] & 7.6 [4.0, 8.0] & 0.0096 [0.0070, 0.0136] & 0.0105 [0.0074, 0.0156] & 0.233 [0.127, 0.452] & 0.295 [0.148, 0.656] \\
\addlinespace
\multirow{8}{*}{$-1$} & 3.1 & 53.8 [38.8, 71.6] & -105.6 [-141.2, -75.7] & -99.6 [-135.2, -69.6] & 0.0110 [0.0076, 0.0158] & 0.0347 [0.0291, 0.0406] & 0.327 [0.155, 0.665] & 10.745 [7.735, 14.381] \\
 & 3.5 & 47.6 [33.5, 64.4] & -93.2 [-126.8, -64.9] & -87.2 [-120.8, -58.9] & 0.0108 [0.0075, 0.0156] & 0.0314 [0.0260, 0.0372] & 0.311 [0.151, 0.640] & 8.785 [6.186, 11.976] \\
 & 3.8 & 44.1 [30.2, 60.4] & -86.1 [-118.8, -58.4] & -80.1 [-112.8, -52.4] & 0.0107 [0.0075, 0.0155] & 0.0295 [0.0243, 0.0351] & 0.302 [0.145, 0.634] & 7.783 [5.348, 10.700] \\
 & 4.1 & 40.6 [27.6, 56.3] & -79.3 [-110.5, -53.3] & -73.3 [-104.5, -47.3] & 0.0106 [0.0074, 0.0153] & 0.0278 [0.0228, 0.0332] & 0.299 [0.145, 0.629] & 6.894 [4.678, 9.583] \\
 & 5 & 33.3 [21.6, 47.4] & -64.6 [-92.8, -41.1] & -58.6 [-86.8, -35.1] & 0.0105 [0.0074, 0.0150] & 0.0241 [0.0193, 0.0292] & 0.287 [0.145, 0.598] & 5.197 [3.374, 7.445] \\
 & 10 & 17.1 [9.1, 27.8] & -32.2 [-53.6, -16.2] & -26.2 [-47.6, -10.2] & 0.0102 [0.0072, 0.0144] & 0.0185 [0.0136, 0.0246] & 0.265 [0.136, 0.545] & 2.431 [1.327, 3.958] \\
 & 15 & 12.3 [5.6, 21.5] & -22.6 [-41.1, -9.1] & -16.6 [-35.1, -3.1] & 0.0101 [0.0072, 0.0143] & 0.0179 [0.0124, 0.0247] & 0.260 [0.135, 0.529] & 1.814 [0.879, 3.160] \\
 & 30 & 7.9 [2.8, 15.6] & -13.8 [-29.3, -3.7] & -7.8 [-23.3, 2.3] & 0.0100 [0.0072, 0.0141] & 0.0171 [0.0114, 0.0239] & 0.253 [0.132, 0.515] & 1.284 [0.549, 2.439] \\
\addlinespace
\multirow{8}{*}{$-2$} & 3.1 & 154.7 [129.6, 182.3] & -307.4 [-362.6, -257.3] & -301.4 [-356.6, -251.2] & 0.0161 [0.0109, 0.0226] & 0.0606 [0.0543, 0.0670] & 0.818 [0.381, 1.520] & 30.632 [25.353, 36.666] \\
 & 3.5 & 142.8 [118.5, 169.3] & -283.7 [-336.6, -235.0] & -277.6 [-330.6, -229.0] & 0.0154 [0.0105, 0.0217] & 0.0560 [0.0498, 0.0621] & 0.737 [0.330, 1.399] & 26.207 [21.463, 31.463] \\
 & 3.8 & 135.7 [112.2, 162.2] & -269.5 [-322.5, -222.3] & -263.5 [-316.5, -216.3] & 0.0150 [0.0102, 0.0211] & 0.0531 [0.0472, 0.0593] & 0.687 [0.306, 1.322] & 23.784 [19.406, 28.774] \\
 & 4.1 & 129.5 [106.5, 154.7] & -256.9 [-307.5, -210.9] & -250.9 [-301.5, -204.9] & 0.0147 [0.0099, 0.0208] & 0.0506 [0.0448, 0.0567] & 0.649 [0.284, 1.264] & 21.777 [17.676, 26.401] \\
 & 5 & 114.8 [93.2, 138.5] & -227.6 [-275.0, -184.5] & -221.6 [-269.0, -178.5] & 0.0138 [0.0094, 0.0196] & 0.0450 [0.0394, 0.0509] & 0.562 [0.250, 1.121] & 17.721 [14.222, 21.731] \\
 & 10 & 80.5 [62.3, 100.5] & -159.0 [-199.0, -122.5] & -153.0 [-193.0, -116.5] & 0.0120 [0.0082, 0.0173] & 0.0332 [0.0284, 0.0386] & 0.404 [0.179, 0.841] & 10.768 [8.224, 13.700] \\
 & 15 & 69.1 [52.2, 88.1] & -136.1 [-174.1, -102.3] & -130.1 [-168.1, -96.3] & 0.0116 [0.0079, 0.0166] & 0.0318 [0.0264, 0.0388] & 0.363 [0.165, 0.772] & 9.120 [6.749, 11.871] \\
 & 30 & 57.8 [42.5, 74.8] & -113.5 [-147.6, -82.9] & -107.5 [-141.6, -76.9] & 0.0112 [0.0078, 0.0161] & 0.0323 [0.0255, 0.0399] & 0.341 [0.160, 0.732] & 7.753 [5.563, 10.311] \\
\bottomrule
    \end{tabular}%
    }%
  }%
  \caption{Monte Carlo model comparison, full grid: median $[p_5,p_{95}]$.}
  \label{tab:mc_comparison_f}
\end{table}

\begin{table}[h]
  \centering
  \makebox[\textwidth]{%
    \resizebox{\textwidth}{!}{%
    \begin{tabular}{rrrrrrrrrr}
\toprule
\multicolumn{2}{c}{True param.} & \multicolumn{5}{c}{Asymmetric Tsallis ($f_\theta$)} & \multicolumn{3}{c}{DGP (Azzalini)} \\
\cmidrule(lr){1-2}\cmidrule(lr){3-7}\cmidrule(lr){8-10}
$\alpha$ & $\nu$ & $\widehat{q}$ & $\widehat{\mu}_f$ & $\widehat{\sigma}^2_{f,-}$ & $\widehat{\sigma}^2_{f,+}$ & $\widehat{\sigma}^2_{f,\mathrm{tot}}$ & $\mu_{\mathrm{DGP}}$ & $\sigma^2_{\mathrm{DGP},-}$ & $\sigma^2_{\mathrm{DGP},+}$ \\
\midrule
\multirow{8}{*}{$2$} & 3.1 & 1.310 [1.234, 1.370] & $0.957$ [$0.919$, $0.998$] & 0.439 [0.409, 0.473] & 1.115 [0.964, 1.296] & 1.555 [1.380, 1.764] & $0.973$ & $0.452$ & $1.412$ \\
 & 3.5 & 1.277 [1.199, 1.337] & $0.920$ [$0.885$, $0.957$] & 0.398 [0.372, 0.426] & 0.906 [0.796, 1.040] & 1.303 [1.173, 1.460] & $0.931$ & $0.404$ & $1.062$ \\
 & 3.8 & 1.257 [1.180, 1.315] & $0.898$ [$0.865$, $0.934$] & 0.375 [0.351, 0.401] & 0.804 [0.713, 0.913] & 1.180 [1.069, 1.310] & $0.907$ & $0.379$ & $0.909$ \\
 & 4.1 & 1.240 [1.163, 1.297] & $0.881$ [$0.848$, $0.915$] & 0.358 [0.336, 0.381] & 0.729 [0.650, 0.823] & 1.087 [0.990, 1.197] & $0.889$ & $0.360$ & $0.803$ \\
 & 5 & 1.197 [1.128, 1.253] & $0.844$ [$0.815$, $0.874$] & 0.322 [0.304, 0.340] & 0.590 [0.535, 0.654] & 0.912 [0.843, 0.988] & $0.849$ & $0.322$ & $0.624$ \\
 & 10 & 1.090 [1.028, 1.145] & $0.772$ [$0.748$, $0.796$] & 0.259 [0.247, 0.272] & 0.389 [0.361, 0.418] & 0.648 [0.610, 0.687] & $0.773$ & $0.259$ & $0.393$ \\
 & 15 & 1.051 [1.001, 1.105] & $0.751$ [$0.728$, $0.774$] & 0.243 [0.231, 0.254] & 0.344 [0.322, 0.367] & 0.587 [0.556, 0.619] & $0.752$ & $0.242$ & $0.346$ \\
 & 30 & 1.009 [1.001, 1.062] & $0.732$ [$0.710$, $0.753$] & 0.228 [0.218, 0.239] & 0.307 [0.289, 0.325] & 0.535 [0.509, 0.561] & $0.732$ & $0.228$ & $0.307$ \\
\addlinespace
\multirow{8}{*}{$1$} & 3.1 & 1.437 [1.391, 1.478] & $0.757$ [$0.717$, $0.799$] & 0.709 [0.644, 0.791] & 1.307 [1.124, 1.549] & 2.016 [1.781, 2.327] & $0.769$ & $0.711$ & $1.508$ \\
 & 3.5 & 1.397 [1.352, 1.440] & $0.727$ [$0.689$, $0.765$] & 0.629 [0.579, 0.690] & 1.056 [0.930, 1.212] & 1.685 [1.519, 1.890] & $0.736$ & $0.627$ & $1.164$ \\
 & 3.8 & 1.373 [1.328, 1.416] & $0.710$ [$0.674$, $0.747$] & 0.589 [0.545, 0.640] & 0.937 [0.834, 1.061] & 1.526 [1.387, 1.687] & $0.717$ & $0.585$ & $1.012$ \\
 & 4.1 & 1.351 [1.304, 1.394] & $0.696$ [$0.661$, $0.732$] & 0.558 [0.518, 0.602] & 0.852 [0.764, 0.957] & 1.411 [1.291, 1.548] & $0.702$ & $0.553$ & $0.906$ \\
 & 5 & 1.299 [1.252, 1.341] & $0.667$ [$0.635$, $0.700$] & 0.497 [0.466, 0.531] & 0.697 [0.638, 0.766] & 1.194 [1.111, 1.288] & $0.671$ & $0.493$ & $0.724$ \\
 & 10 & 1.165 [1.116, 1.211] & $0.610$ [$0.583$, $0.639$] & 0.396 [0.376, 0.417] & 0.477 [0.448, 0.510] & 0.874 [0.828, 0.921] & $0.611$ & $0.394$ & $0.482$ \\
 & 15 & 1.115 [1.064, 1.161] & $0.594$ [$0.567$, $0.621$] & 0.371 [0.353, 0.390] & 0.429 [0.405, 0.455] & 0.799 [0.762, 0.840] & $0.595$ & $0.370$ & $0.431$ \\
 & 30 & 1.062 [1.010, 1.108] & $0.579$ [$0.553$, $0.605$] & 0.348 [0.332, 0.365] & 0.388 [0.368, 0.409] & 0.736 [0.703, 0.770] & $0.579$ & $0.348$ & $0.389$ \\
\addlinespace
\multirow{8}{*}{$0$} & 3.1 & 1.487 [1.448, 1.524] & $0.000$ [$-0.047$, $0.047$] & 1.408 [1.216, 1.675] & 1.407 [1.217, 1.680] & 2.817 [2.452, 3.326] & $0.000$ & $1.405$ & $1.405$ \\
 & 3.5 & 1.443 [1.403, 1.481] & $0.000$ [$-0.043$, $0.043$] & 1.164 [1.041, 1.329] & 1.165 [1.038, 1.330] & 2.330 [2.094, 2.635] & $0.000$ & $1.166$ & $1.166$ \\
 & 3.8 & 1.415 [1.375, 1.453] & $0.000$ [$-0.043$, $0.042$] & 1.054 [0.953, 1.184] & 1.054 [0.949, 1.183] & 2.109 [1.920, 2.350] & $0.000$ & $1.055$ & $1.055$ \\
 & 4.1 & 1.391 [1.350, 1.429] & $0.000$ [$-0.041$, $0.041$] & 0.976 [0.887, 1.080] & 0.974 [0.886, 1.080] & 1.950 [1.788, 2.144] & $0.000$ & $0.976$ & $0.976$ \\
 & 5 & 1.331 [1.289, 1.371] & $0.000$ [$-0.039$, $0.038$] & 0.832 [0.770, 0.903] & 0.832 [0.770, 0.905] & 1.665 [1.554, 1.795] & $0.000$ & $0.833$ & $0.833$ \\
 & 10 & 1.180 [1.133, 1.222] & $0.000$ [$-0.033$, $0.034$] & 0.625 [0.590, 0.663] & 0.625 [0.590, 0.662] & 1.250 [1.188, 1.316] & $0.000$ & $0.625$ & $0.625$ \\
 & 15 & 1.123 [1.075, 1.167] & $0.000$ [$-0.033$, $0.032$] & 0.577 [0.547, 0.608] & 0.577 [0.547, 0.608] & 1.154 [1.100, 1.210] & $0.000$ & $0.577$ & $0.577$ \\
 & 30 & 1.062 [1.012, 1.107] & $0.000$ [$-0.031$, $0.031$] & 0.536 [0.510, 0.563] & 0.536 [0.509, 0.563] & 1.072 [1.024, 1.120] & $0.000$ & $0.536$ & $0.536$ \\
\addlinespace
\multirow{8}{*}{$-1$} & 3.1 & 1.437 [1.391, 1.479] & $-0.757$ [$-0.799$, $-0.716$] & 1.309 [1.125, 1.552] & 0.710 [0.644, 0.794] & 2.019 [1.781, 2.332] & $-0.769$ & $1.508$ & $0.711$ \\
 & 3.5 & 1.398 [1.352, 1.440] & $-0.727$ [$-0.765$, $-0.689$] & 1.056 [0.929, 1.213] & 0.630 [0.579, 0.690] & 1.684 [1.520, 1.890] & $-0.736$ & $1.164$ & $0.627$ \\
 & 3.8 & 1.373 [1.327, 1.416] & $-0.710$ [$-0.747$, $-0.674$] & 0.938 [0.836, 1.062] & 0.589 [0.544, 0.640] & 1.527 [1.391, 1.692] & $-0.717$ & $1.012$ & $0.585$ \\
 & 4.1 & 1.351 [1.306, 1.394] & $-0.696$ [$-0.732$, $-0.661$] & 0.853 [0.766, 0.954] & 0.558 [0.518, 0.602] & 1.412 [1.292, 1.546] & $-0.702$ & $0.906$ & $0.553$ \\
 & 5 & 1.299 [1.252, 1.342] & $-0.667$ [$-0.700$, $-0.634$] & 0.697 [0.638, 0.765] & 0.497 [0.465, 0.531] & 1.193 [1.110, 1.289] & $-0.671$ & $0.724$ & $0.493$ \\
 & 10 & 1.166 [1.117, 1.210] & $-0.610$ [$-0.639$, $-0.582$] & 0.478 [0.448, 0.509] & 0.396 [0.376, 0.417] & 0.874 [0.828, 0.922] & $-0.611$ & $0.482$ & $0.394$ \\
 & 15 & 1.115 [1.065, 1.160] & $-0.594$ [$-0.621$, $-0.567$] & 0.429 [0.404, 0.454] & 0.371 [0.353, 0.389] & 0.799 [0.761, 0.839] & $-0.595$ & $0.431$ & $0.370$ \\
 & 30 & 1.062 [1.009, 1.108] & $-0.579$ [$-0.605$, $-0.553$] & 0.388 [0.368, 0.409] & 0.348 [0.332, 0.365] & 0.736 [0.703, 0.770] & $-0.579$ & $0.389$ & $0.348$ \\
\addlinespace
\multirow{8}{*}{$-2$} & 3.1 & 1.310 [1.235, 1.372] & $-0.957$ [$-0.997$, $-0.919$] & 1.115 [0.965, 1.297] & 0.440 [0.409, 0.474] & 1.556 [1.381, 1.763] & $-0.973$ & $1.412$ & $0.452$ \\
 & 3.5 & 1.278 [1.202, 1.337] & $-0.919$ [$-0.957$, $-0.884$] & 0.907 [0.795, 1.037] & 0.398 [0.372, 0.426] & 1.304 [1.172, 1.455] & $-0.931$ & $1.062$ & $0.404$ \\
 & 3.8 & 1.258 [1.181, 1.315] & $-0.898$ [$-0.933$, $-0.864$] & 0.805 [0.713, 0.913] & 0.375 [0.352, 0.401] & 1.180 [1.070, 1.307] & $-0.907$ & $0.909$ & $0.379$ \\
 & 4.1 & 1.240 [1.165, 1.298] & $-0.880$ [$-0.915$, $-0.848$] & 0.729 [0.650, 0.821] & 0.357 [0.336, 0.381] & 1.087 [0.991, 1.196] & $-0.889$ & $0.803$ & $0.360$ \\
 & 5 & 1.198 [1.129, 1.254] & $-0.843$ [$-0.874$, $-0.814$] & 0.589 [0.535, 0.653] & 0.322 [0.303, 0.341] & 0.911 [0.842, 0.990] & $-0.849$ & $0.624$ & $0.322$ \\
 & 10 & 1.090 [1.028, 1.144] & $-0.772$ [$-0.797$, $-0.747$] & 0.389 [0.362, 0.418] & 0.259 [0.246, 0.272] & 0.647 [0.611, 0.687] & $-0.773$ & $0.393$ & $0.259$ \\
 & 15 & 1.051 [1.001, 1.103] & $-0.751$ [$-0.775$, $-0.728$] & 0.344 [0.322, 0.367] & 0.243 [0.231, 0.254] & 0.586 [0.557, 0.618] & $-0.752$ & $0.346$ & $0.242$ \\
 & 30 & 1.008 [1.001, 1.061] & $-0.731$ [$-0.754$, $-0.709$] & 0.307 [0.289, 0.325] & 0.228 [0.217, 0.239] & 0.535 [0.509, 0.562] & $-0.732$ & $0.307$ & $0.228$ \\
\bottomrule
    \end{tabular}%
    }%
  }%
  \caption{Monte Carlo fitted moments and skew-$t$ benchmark, full grid: median $[p_5,p_{95}]$.}
  \label{tab:mc_params_f}
\end{table}

\begin{table}[h]
  \centering
  \adjustbox{max width=\textwidth}{\scriptsize
    \begin{tabular}{rrrrrrr}
\toprule
\multicolumn{2}{c}{True param.} & \multicolumn{5}{c}{Asymmetric Tsallis ($f_\theta$)} \\
\cmidrule(lr){1-2}\cmidrule(lr){3-7}
$\alpha$ & $\nu$ & $\widehat{q}$ & $\widehat{\mu}_q$ & $\widehat{\beta}$ & $\widehat{\gamma}_-$ & $\widehat{\gamma}_+$ \\
\midrule
\multirow{8}{*}{$2$} & 3.1 & 1.310 [1.234, 1.370] & $0.792$ [$0.754$, $0.838$] & $0.350$ [$0.269$, $0.419$] & 0.626 [0.430, 0.816] & 0.057 [0.021, 0.104] \\
 & 3.5 & 1.277 [1.199, 1.337] & $0.788$ [$0.750$, $0.832$] & $0.301$ [$0.224$, $0.367$] & 0.548 [0.362, 0.726] & 0.055 [0.018, 0.100] \\
 & 3.8 & 1.257 [1.180, 1.315] & $0.784$ [$0.749$, $0.828$] & $0.273$ [$0.197$, $0.336$] & 0.504 [0.322, 0.671] & 0.054 [0.018, 0.097] \\
 & 4.1 & 1.240 [1.163, 1.297] & $0.781$ [$0.746$, $0.823$] & $0.249$ [$0.174$, $0.309$] & 0.465 [0.291, 0.624] & 0.053 [0.019, 0.094] \\
 & 5 & 1.197 [1.128, 1.253] & $0.773$ [$0.741$, $0.810$] & $0.193$ [$0.129$, $0.247$] & 0.372 [0.226, 0.514] & 0.049 [0.020, 0.087] \\
 & 10 & 1.090 [1.028, 1.145] & $0.750$ [$0.722$, $0.779$] & $0.074$ [$0.023$, $0.118$] & 0.159 [0.045, 0.271] & 0.031 [0.007, 0.063] \\
 & 15 & 1.051 [1.001, 1.105] & $0.740$ [$0.715$, $0.767$] & $0.039$ [$0.001$, $0.079$] & 0.087 [0.002, 0.189] & 0.019 [0.000, 0.050] \\
 & 30 & 1.009 [1.001, 1.062] & $0.728$ [$0.705$, $0.751$] & $0.006$ [$0.001$, $0.043$] & 0.015 [0.002, 0.109] & 0.004 [0.000, 0.032] \\
\addlinespace
\multirow{8}{*}{$1$} & 3.1 & 1.437 [1.391, 1.478] & $0.616$ [$0.583$, $0.650$] & $0.254$ [$0.212$, $0.299$] & 0.587 [0.487, 0.698] & 0.195 [0.144, 0.252] \\
 & 3.5 & 1.397 [1.352, 1.440] & $0.614$ [$0.582$, $0.647$] & $0.216$ [$0.178$, $0.257$] & 0.515 [0.423, 0.616] & 0.182 [0.134, 0.237] \\
 & 3.8 & 1.373 [1.328, 1.416] & $0.612$ [$0.580$, $0.645$] & $0.194$ [$0.158$, $0.233$] & 0.472 [0.386, 0.570] & 0.173 [0.127, 0.227] \\
 & 4.1 & 1.351 [1.304, 1.394] & $0.610$ [$0.579$, $0.642$] & $0.175$ [$0.141$, $0.213$] & 0.436 [0.352, 0.528] & 0.166 [0.120, 0.217] \\
 & 5 & 1.299 [1.252, 1.341] & $0.606$ [$0.576$, $0.637$] & $0.134$ [$0.104$, $0.167$] & 0.353 [0.278, 0.435] & 0.146 [0.105, 0.194] \\
 & 10 & 1.165 [1.116, 1.211] & $0.590$ [$0.562$, $0.618$] & $0.052$ [$0.033$, $0.073$] & 0.171 [0.112, 0.233] & 0.090 [0.055, 0.130] \\
 & 15 & 1.115 [1.064, 1.161] & $0.583$ [$0.555$, $0.610$] & $0.030$ [$0.016$, $0.047$] & 0.112 [0.059, 0.170] & 0.066 [0.033, 0.102] \\
 & 30 & 1.062 [1.010, 1.108] & $0.574$ [$0.548$, $0.600$] & $0.012$ [$0.002$, $0.025$] & 0.057 [0.009, 0.106] & 0.037 [0.006, 0.071] \\
\addlinespace
\multirow{8}{*}{$0$} & 3.1 & 1.487 [1.448, 1.524] & $0.000$ [$-0.037$, $0.037$] & $0.000$ [$-0.039$, $0.038$] & 0.321 [0.265, 0.384] & 0.321 [0.266, 0.384] \\
 & 3.5 & 1.443 [1.403, 1.481] & $0.000$ [$-0.036$, $0.036$] & $0.000$ [$-0.035$, $0.034$] & 0.284 [0.232, 0.340] & 0.283 [0.233, 0.341] \\
 & 3.8 & 1.415 [1.375, 1.453] & $0.000$ [$-0.036$, $0.036$] & $-0.001$ [$-0.033$, $0.032$] & 0.261 [0.212, 0.316] & 0.261 [0.213, 0.315] \\
 & 4.1 & 1.391 [1.350, 1.429] & $0.000$ [$-0.036$, $0.036$] & $0.000$ [$-0.030$, $0.030$] & 0.242 [0.196, 0.295] & 0.242 [0.196, 0.294] \\
 & 5 & 1.331 [1.289, 1.371] & $0.000$ [$-0.035$, $0.035$] & $0.000$ [$-0.025$, $0.025$] & 0.198 [0.157, 0.244] & 0.198 [0.157, 0.244] \\
 & 10 & 1.180 [1.133, 1.222] & $0.000$ [$-0.033$, $0.033$] & $0.000$ [$-0.013$, $0.013$] & 0.099 [0.066, 0.133] & 0.099 [0.067, 0.133] \\
 & 15 & 1.123 [1.075, 1.167] & $0.000$ [$-0.032$, $0.032$] & $0.000$ [$-0.009$, $0.009$] & 0.065 [0.037, 0.096] & 0.065 [0.036, 0.096] \\
 & 30 & 1.062 [1.012, 1.107] & $0.000$ [$-0.031$, $0.031$] & $0.000$ [$-0.005$, $0.005$] & 0.032 [0.006, 0.060] & 0.032 [0.006, 0.060] \\
\addlinespace
\multirow{8}{*}{$-1$} & 3.1 & 1.437 [1.391, 1.479] & $-0.616$ [$-0.650$, $-0.583$] & $-0.254$ [$-0.299$, $-0.213$] & 0.196 [0.144, 0.255] & 0.588 [0.488, 0.699] \\
 & 3.5 & 1.398 [1.352, 1.440] & $-0.614$ [$-0.647$, $-0.581$] & $-0.216$ [$-0.257$, $-0.178$] & 0.182 [0.134, 0.239] & 0.516 [0.422, 0.618] \\
 & 3.8 & 1.373 [1.327, 1.416] & $-0.612$ [$-0.645$, $-0.580$] & $-0.194$ [$-0.233$, $-0.159$] & 0.173 [0.126, 0.227] & 0.473 [0.386, 0.568] \\
 & 4.1 & 1.351 [1.306, 1.394] & $-0.610$ [$-0.643$, $-0.579$] & $-0.176$ [$-0.212$, $-0.141$] & 0.165 [0.121, 0.218] & 0.436 [0.353, 0.528] \\
 & 5 & 1.299 [1.252, 1.342] & $-0.606$ [$-0.637$, $-0.575$] & $-0.134$ [$-0.167$, $-0.104$] & 0.147 [0.105, 0.195] & 0.353 [0.276, 0.435] \\
 & 10 & 1.166 [1.117, 1.210] & $-0.590$ [$-0.618$, $-0.562$] & $-0.052$ [$-0.073$, $-0.034$] & 0.091 [0.056, 0.131] & 0.171 [0.113, 0.233] \\
 & 15 & 1.115 [1.065, 1.160] & $-0.582$ [$-0.610$, $-0.556$] & $-0.030$ [$-0.047$, $-0.016$] & 0.066 [0.033, 0.103] & 0.112 [0.059, 0.169] \\
 & 30 & 1.062 [1.009, 1.108] & $-0.574$ [$-0.601$, $-0.548$] & $-0.012$ [$-0.025$, $-0.002$] & 0.037 [0.005, 0.071] & 0.057 [0.008, 0.106] \\
\addlinespace
\multirow{8}{*}{$-2$} & 3.1 & 1.310 [1.235, 1.372] & $-0.791$ [$-0.838$, $-0.753$] & $-0.350$ [$-0.420$, $-0.271$] & 0.057 [0.021, 0.104] & 0.627 [0.434, 0.817] \\
 & 3.5 & 1.278 [1.202, 1.337] & $-0.787$ [$-0.831$, $-0.750$] & $-0.302$ [$-0.368$, $-0.226$] & 0.056 [0.019, 0.100] & 0.551 [0.367, 0.726] \\
 & 3.8 & 1.258 [1.181, 1.315] & $-0.783$ [$-0.826$, $-0.748$] & $-0.273$ [$-0.336$, $-0.198$] & 0.054 [0.018, 0.097] & 0.505 [0.324, 0.671] \\
 & 4.1 & 1.240 [1.165, 1.298] & $-0.780$ [$-0.822$, $-0.745$] & $-0.249$ [$-0.309$, $-0.178$] & 0.053 [0.020, 0.095] & 0.466 [0.297, 0.624] \\
 & 5 & 1.198 [1.129, 1.254] & $-0.773$ [$-0.809$, $-0.740$] & $-0.193$ [$-0.248$, $-0.130$] & 0.049 [0.019, 0.088] & 0.374 [0.227, 0.515] \\
 & 10 & 1.090 [1.028, 1.144] & $-0.750$ [$-0.779$, $-0.722$] & $-0.074$ [$-0.118$, $-0.024$] & 0.031 [0.007, 0.063] & 0.158 [0.047, 0.270] \\
 & 15 & 1.051 [1.001, 1.103] & $-0.740$ [$-0.767$, $-0.714$] & $-0.039$ [$-0.078$, $-0.001$] & 0.019 [0.000, 0.048] & 0.087 [0.002, 0.187] \\
 & 30 & 1.008 [1.001, 1.061] & $-0.728$ [$-0.752$, $-0.705$] & $-0.006$ [$-0.042$, $-0.001$] & 0.004 [0.000, 0.031] & 0.014 [0.002, 0.107] \\
\bottomrule
    \end{tabular}}
  \caption{Monte Carlo structural estimates, full grid: median $[p_5,p_{95}]$.}
  \label{tab:mc_structural_f}
\end{table}

\FloatBarrier
\setcounter{table}{0}
\setcounter{figure}{0}
\section{Full result tables: real markets}\label{app:real_tables}

In this appendix we report the complete versions of the real-market tables
summarised by Table~\ref{tab:real_excerpt} in Section~\ref{subsec:real}. Each
table covers the nine markets at the three sampling horizons
$\Delta\in\{1,7,20\}$ trading days, for a total of $27$ market--horizon
combinations. Every entry is the median over $B=999$ iid bootstrap resamples,
followed in square brackets by the $[p_5,p_{95}]$ interval, with the exception
of the $\Delta$AIC and $\Delta$BIC columns of
Table~\ref{tab:bootstrap_markets} and of the sample semi-variances of
Table~\ref{tab:params_markets}, which are computed once on the original
sample. Table~\ref{tab:bootstrap_markets} reports, after the market, the horizon and
the sample size $n$, the log-likelihood difference $\Delta\ell$ with the stars
of the likelihood-ratio test, the AIC and BIC differences, and the KS and AD
statistics of the asymmetric fit and of the symmetric benchmark.
Table~\ref{tab:params_markets} reports the fitted entropic index
$\widehat q$, the fitted mean $\widehat\mu_f$, the semi-variances
$\widehat\sigma^2_{f,\pm}$ and the total variance
$\widehat\sigma^2_{f,\mathrm{tot}}$; its last two columns are not estimates
but the sample semi-variances $s^2_-$ and $s^2_+$ of the standardised
returns, which satisfy $s^2_-+s^2_+=1$. The fitted total variance need not
reproduce that value, since the ordinary second moment of $f_\theta$ diverges
as $q\to5/3$ and lies well above one in the series with the largest
$\widehat q$, so $\widehat\sigma^2_{f,\pm}$ and $s^2_\pm$ should be compared
in relative rather than in absolute terms.
Table~\ref{tab:structural_markets} reports the structural parameters
$\widehat q$, $\widehat\mu_q$, $\widehat\beta$, $\widehat\gamma_-$ and
$\widehat\gamma_+$.

\begin{table}[!ht]
  \centering
  \adjustbox{max width=\textwidth}{\scriptsize
    \begin{tabular}{llr rrr rrrr}
      \toprule
      & & & \multicolumn{3}{c}{Log-likelihood \& info criteria} & \multicolumn{4}{c}{Bootstrap GoF, median $[p_5,p_{95}]$} \\
      \cmidrule(lr){4-6}\cmidrule(lr){7-10}
      Market & $\Delta$ & $n$ & $\Delta\ell$ & $\Delta$AIC & $\Delta$BIC & $\widetilde{D}_n^{\,\mathrm{asym}}$ & $\widetilde{D}_n^{\,\mathrm{sym}}$ & $\widetilde{A}_n^{2,\mathrm{asym}}$ & $\widetilde{A}_n^{2,\mathrm{sym}}$ \\
      \midrule
      \multirow{3}{*}{BIST} & 1 & 4017 & $4.4^{**}$ $[0.9,10.9]$ & -7.3 & -1.0 & 0.0121 $[0.0090,0.0158]$ & 0.0149 $[0.0109,0.0195]$ & 0.559 $[0.356,0.867]$ & 1.225 $[0.618,2.268]$ \\
       & 7 & 573 & $4.1^{**}$ $[0.7,9.5]$ & -6.1 & -1.7 & 0.0303 $[0.0225,0.0399]$ & 0.0380 $[0.0281,0.0511]$ & 0.440 $[0.259,0.748]$ & 1.101 $[0.508,2.163]$ \\
       & 20 & 200 & $0.2$ $[0.0,2.2]$ & 2.0 & 5.3 & 0.0628 $[0.0433,0.0850]$ & 0.0627 $[0.0440,0.0851]$ & 0.527 $[0.295,0.931]$ & 0.583 $[0.335,1.011]$ \\
      \midrule
      \multirow{3}{*}{FTSE} & 1 & 4041 & $5.1^{**}$ $[1.2,11.7]$ & -8.0 & -1.7 & 0.0133 $[0.0098,0.0176]$ & 0.0153 $[0.0119,0.0198]$ & 0.641 $[0.358,1.042]$ & 1.566 $[0.759,2.876]$ \\
       & 7 & 577 & $1.3$ $[0.0,5.3]$ & -0.5 & 3.9 & 0.0296 $[0.0223,0.0391]$ & 0.0306 $[0.0235,0.0409]$ & 0.491 $[0.281,0.833]$ & 0.657 $[0.355,1.287]$ \\
       & 20 & 202 & $2.1^{*}$ $[0.2,6.3]$ & -1.9 & 1.4 & 0.0525 $[0.0389,0.0698]$ & 0.0542 $[0.0414,0.0739]$ & 0.450 $[0.258,0.817]$ & 0.752 $[0.370,1.469]$ \\
      \midrule
      \multirow{3}{*}{NASDAQ} & 1 & 4024 & $11.0^{***}$ $[4.5,19.9]$ & -20.6 & -14.3 & 0.0229 $[0.0185,0.0273]$ & 0.0236 $[0.0193,0.0290]$ & 2.276 $[1.477,3.145]$ & 4.367 $[2.885,6.296]$ \\
       & 7 & 574 & $7.1^{***}$ $[2.3,14.0]$ & -11.9 & -7.6 & 0.0302 $[0.0227,0.0405]$ & 0.0449 $[0.0318,0.0607]$ & 0.413 $[0.246,0.712]$ & 1.543 $[0.716,2.744]$ \\
       & 20 & 201 & $9.9^{***}$ $[4.3,16.7]$ & -18.3 & -15.0 & 0.0603 $[0.0460,0.0805]$ & 0.0877 $[0.0660,0.1193]$ & 0.738 $[0.443,1.291]$ & 2.528 $[1.386,4.309]$ \\
      \midrule
      \multirow{3}{*}{DJIA} & 1 & 4048 & $2.4^{*}$ $[0.2,7.4]$ & -2.9 & 3.4 & 0.0213 $[0.0167,0.0260]$ & 0.0213 $[0.0169,0.0262]$ & 1.880 $[1.174,2.712]$ & 2.402 $[1.508,3.651]$ \\
       & 7 & 578 & $3.0^{*}$ $[0.3,7.7]$ & -4.1 & 0.3 & 0.0297 $[0.0225,0.0395]$ & 0.0317 $[0.0242,0.0420]$ & 0.558 $[0.326,1.020]$ & 0.944 $[0.488,1.720]$ \\
       & 20 & 202 & $7.2^{***}$ $[2.2,13.6]$ & -12.1 & -8.8 & 0.0500 $[0.0352,0.0757]$ & 0.0711 $[0.0514,0.0967]$ & 0.491 $[0.255,1.140]$ & 1.632 $[0.792,2.885]$ \\
      \midrule
      \multirow{3}{*}{S\&P500} & 1 & 4034 & $4.0^{**}$ $[0.7,9.6]$ & -6.0 & 0.3 & 0.0228 $[0.0188,0.0271]$ & 0.0221 $[0.0182,0.0263]$ & 2.551 $[1.706,3.492]$ & 3.408 $[2.246,4.832]$ \\
       & 7 & 576 & $11.0^{***}$ $[4.7,19.2]$ & -19.9 & -15.6 & 0.0328 $[0.0235,0.0467]$ & 0.0472 $[0.0364,0.0590]$ & 0.435 $[0.262,0.751]$ & 2.483 $[1.304,4.148]$ \\
       & 20 & 201 & $9.4^{***}$ $[3.8,16.2]$ & -17.4 & -14.1 & 0.0573 $[0.0431,0.0751]$ & 0.0854 $[0.0649,0.1097]$ & 0.662 $[0.378,1.107]$ & 2.591 $[1.285,4.492]$ \\
      \midrule
      \multirow{3}{*}{BOVESPA} & 1 & 3968 & $0.6$ $[0.0,3.6]$ & 0.7 & 7.0 & 0.0127 $[0.0095,0.0164]$ & 0.0120 $[0.0092,0.0156]$ & 0.599 $[0.344,0.995]$ & 0.710 $[0.403,1.207]$ \\
       & 7 & 566 & $1.0$ $[0.0,4.1]$ & 0.4 & 4.7 & 0.0293 $[0.0210,0.0412]$ & 0.0297 $[0.0217,0.0404]$ & 0.430 $[0.228,0.874]$ & 0.546 $[0.267,1.052]$ \\
       & 20 & 198 & $0.6$ $[0.0,3.6]$ & 0.9 & 4.2 & 0.0515 $[0.0375,0.0708]$ & 0.0545 $[0.0388,0.0777]$ & 0.448 $[0.274,0.815]$ & 0.549 $[0.307,1.082]$ \\
      \midrule
      \multirow{3}{*}{BRENT} & 1 & 9212 & $8.2^{***}$ $[2.9,16.1]$ & -14.9 & -7.7 & 0.0093 $[0.0071,0.0121]$ & 0.0119 $[0.0094,0.0149]$ & 0.728 $[0.387,1.236]$ & 2.277 $[1.223,3.858]$ \\
       & 7 & 1316 & $5.4^{**}$ $[1.4,11.5]$ & -8.7 & -3.5 & 0.0186 $[0.0141,0.0255]$ & 0.0259 $[0.0195,0.0338]$ & 0.393 $[0.246,0.675]$ & 1.273 $[0.597,2.461]$ \\
       & 20 & 460 & $0.5$ $[0.0,4.1]$ & 0.9 & 5.0 & 0.0301 $[0.0227,0.0405]$ & 0.0317 $[0.0237,0.0439]$ & 0.364 $[0.224,0.628]$ & 0.470 $[0.264,1.042]$ \\
      \midrule
      \multirow{3}{*}{BTC} & 1 & 4152 & $0.9$ $[0.0,4.3]$ & 0.2 & 6.5 & 0.0259 $[0.0214,0.0310]$ & 0.0258 $[0.0215,0.0303]$ & 3.814 $[2.794,4.980]$ & 4.079 $[2.941,5.416]$ \\
       & 7 & 593 & $0.7$ $[0.0,4.0]$ & 0.7 & 5.1 & 0.0322 $[0.0231,0.0446]$ & 0.0342 $[0.0245,0.0467]$ & 0.477 $[0.274,0.895]$ & 0.695 $[0.339,1.370]$ \\
       & 20 & 207 & $0.3$ $[0.0,2.3]$ & 1.9 & 5.2 & 0.0540 $[0.0402,0.0736]$ & 0.0561 $[0.0405,0.0772]$ & 0.583 $[0.332,1.044]$ & 0.661 $[0.355,1.239]$ \\
      \midrule
      \multirow{3}{*}{ETH} & 1 & 2877 & $0.2$ $[0.0,2.0]$ & 2.0 & 7.9 & 0.0344 $[0.0287,0.0404]$ & 0.0342 $[0.0286,0.0401]$ & 3.473 $[2.501,4.604]$ & 3.583 $[2.592,4.755]$ \\
       & 7 & 411 & $0.6$ $[0.0,3.6]$ & 0.8 & 4.8 & 0.0419 $[0.0335,0.0524]$ & 0.0471 $[0.0367,0.0611]$ & 0.815 $[0.522,1.227]$ & 1.052 $[0.636,1.842]$ \\
       & 20 & 143 & $0.4$ $[0.0,2.8]$ & 1.6 & 4.5 & 0.0672 $[0.0481,0.0935]$ & 0.0707 $[0.0500,0.1001]$ & 0.586 $[0.289,1.071]$ & 0.719 $[0.356,1.353]$ \\
      \bottomrule
    \end{tabular}}
  \caption{Real markets, all horizons: model comparison, bootstrap median $[p_5,p_{95}]$.}
  \label{tab:bootstrap_markets}
\end{table}

\begin{table}[!ht]
  \centering
  \makebox[\textwidth]{%
    \resizebox{\textwidth}{!}{%
    \begin{tabular}{llr rrrrr rr}
      \toprule
      & & & \multicolumn{5}{c}{Asymmetric Tsallis ($f_\theta$)} & \multicolumn{2}{c}{Sample} \\
      \cmidrule(lr){4-8}\cmidrule(lr){9-10}
      Market & $\Delta$ & $n$ & $\widehat{q}$ & $\widehat{\mu}_f$ & $\widehat{\sigma}^2_{f,-}$ & $\widehat{\sigma}^2_{f,+}$ & $\widehat{\sigma}^2_{f,\mathrm{tot}}$ & $s^2_{-}$ & $s^2_{+}$ \\
      \midrule
      \multirow{3}{*}{BIST} & 1 & 4017 & 1.387 $[1.353,1.420]$ & 0.007 $[-0.018,0.033]$ & 0.5402 $[0.4973,0.5926]$ & 0.4769 $[0.4424,0.5137]$ & 1.0176 $[0.9462,1.0972]$ & 0.5519 & 0.4481 \\
       & 7 & 573 & 1.191 $[1.077,1.282]$ & -0.005 $[-0.069,0.068]$ & 0.5568 $[0.4740,0.6518]$ & 0.4465 $[0.3939,0.5080]$ & 1.0063 $[0.8780,1.1405]$ & 0.5572 & 0.4428 \\
       & 20 & 200 & 1.169 $[1.001,1.344]$ & 0.005 $[-0.113,0.118]$ & 0.4980 $[0.4034,0.6105]$ & 0.5118 $[0.4075,0.6299]$ & 1.0134 $[0.8307,1.2163]$ & 0.4924 & 0.5076 \\
      \midrule
      \multirow{3}{*}{FTSE} & 1 & 4041 & 1.472 $[1.441,1.502]$ & -0.000 $[-0.026,0.025]$ & 0.6056 $[0.5374,0.6850]$ & 0.5075 $[0.4616,0.5737]$ & 1.1122 $[1.0081,1.2500]$ & 0.5501 & 0.4499 \\
       & 7 & 577 & 1.353 $[1.252,1.434]$ & 0.010 $[-0.058,0.078]$ & 0.5306 $[0.4303,0.6751]$ & 0.4489 $[0.3809,0.5440]$ & 0.9839 $[0.8288,1.1840]$ & 0.5638 & 0.4362 \\
       & 20 & 202 & 1.292 $[1.001,1.460]$ & 0.022 $[-0.083,0.122]$ & 0.4744 $[0.3219,0.8163]$ & 0.3410 $[0.2639,0.4661]$ & 0.8206 $[0.5940,1.2800]$ & 0.6715 & 0.3285 \\
      \midrule
      \multirow{3}{*}{NASDAQ} & 1 & 4024 & 1.482 $[1.446,1.518]$ & -0.005 $[-0.033,0.024]$ & 0.6596 $[0.5865,0.7612]$ & 0.5066 $[0.4556,0.5806]$ & 1.1661 $[1.0538,1.3294]$ & 0.5442 & 0.4558 \\
       & 7 & 574 & 1.235 $[1.129,1.318]$ & -0.001 $[-0.068,0.066]$ & 0.5779 $[0.4918,0.6947]$ & 0.4256 $[0.3727,0.4877]$ & 1.0058 $[0.8774,1.1646]$ & 0.5756 & 0.4244 \\
       & 20 & 201 & 1.179 $[1.001,1.350]$ & 0.001 $[-0.106,0.107]$ & 0.6677 $[0.5381,0.8954]$ & 0.3815 $[0.3020,0.4690]$ & 1.0572 $[0.8636,1.3213]$ & 0.6131 & 0.3869 \\
      \midrule
      \multirow{3}{*}{DJIA} & 1 & 4048 & 1.531 $[1.496,1.565]$ & 0.007 $[-0.020,0.033]$ & 0.6415 $[0.5415,0.7978]$ & 0.5553 $[0.4703,0.6914]$ & 1.1955 $[1.0257,1.4872]$ & 0.5488 & 0.4512 \\
       & 7 & 578 & 1.358 $[1.217,1.457]$ & 0.013 $[-0.057,0.077]$ & 0.5063 $[0.4012,0.6697]$ & 0.3901 $[0.3273,0.4871]$ & 0.8968 $[0.7415,1.1378]$ & 0.6251 & 0.3749 \\
       & 20 & 202 & 1.129 $[1.001,1.300]$ & 0.014 $[-0.105,0.119]$ & 0.6117 $[0.4456,0.8388]$ & 0.3726 $[0.2950,0.4541]$ & 0.9876 $[0.7608,1.2774]$ & 0.6227 & 0.3773 \\
      \midrule
      \multirow{3}{*}{S\&P500} & 1 & 4034 & 1.529 $[1.495,1.562]$ & 0.006 $[-0.019,0.036]$ & 0.6779 $[0.5743,0.8311]$ & 0.5624 $[0.4801,0.6832]$ & 1.2396 $[1.0669,1.5014]$ & 0.5497 & 0.4503 \\
       & 7 & 576 & 1.382 $[1.276,1.471]$ & -0.008 $[-0.076,0.061]$ & 0.6732 $[0.5109,0.8839]$ & 0.3903 $[0.3281,0.4670]$ & 1.0634 $[0.8561,1.3429]$ & 0.6254 & 0.3746 \\
       & 20 & 201 & 1.393 $[1.170,1.565]$ & -0.027 $[-0.130,0.076]$ & 0.8893 $[0.5735,2.2095]$ & 0.3888 $[0.2974,0.5475]$ & 1.2737 $[0.8855,2.7399]$ & 0.6343 & 0.3657 \\
      \midrule
      \multirow{3}{*}{BOVESPA} & 1 & 3968 & 1.338 $[1.295,1.375]$ & 0.007 $[-0.019,0.032]$ & 0.4813 $[0.4443,0.5208]$ & 0.4620 $[0.4288,0.4981]$ & 0.9430 $[0.8784,1.0108]$ & 0.5396 & 0.4604 \\
       & 7 & 566 & 1.194 $[1.001,1.305]$ & 0.010 $[-0.054,0.078]$ & 0.4893 $[0.4163,0.5875]$ & 0.4430 $[0.3856,0.5084]$ & 0.9323 $[0.8145,1.0861]$ & 0.5692 & 0.4308 \\
       & 20 & 198 & 1.185 $[1.026,1.330]$ & 0.003 $[-0.114,0.124]$ & 0.5345 $[0.4064,0.6889]$ & 0.4661 $[0.3766,0.5798]$ & 1.0040 $[0.7994,1.2364]$ & 0.5428 & 0.4572 \\
      \midrule
      \multirow{3}{*}{BRENT} & 1 & 9212 & 1.449 $[1.426,1.471]$ & 0.000 $[-0.017,0.017]$ & 0.5494 $[0.5092,0.5919]$ & 0.4797 $[0.4504,0.5139]$ & 1.0277 $[0.9666,1.1006]$ & 0.5490 & 0.4510 \\
       & 7 & 1316 & 1.355 $[1.285,1.410]$ & 0.006 $[-0.034,0.047]$ & 0.5415 $[0.4651,0.6317]$ & 0.4330 $[0.3885,0.4928]$ & 0.9748 $[0.8566,1.1143]$ & 0.5787 & 0.4213 \\
       & 20 & 460 & 1.331 $[1.183,1.443]$ & 0.007 $[-0.068,0.077]$ & 0.4936 $[0.3947,0.6560]$ & 0.4404 $[0.3629,0.5566]$ & 0.9400 $[0.7667,1.1767]$ & 0.5677 & 0.4323 \\
      \midrule
      \multirow{3}{*}{BTC} & 1 & 4152 & 1.616 $[1.586,1.648]$ & 0.022 $[-0.009,0.052]$ & 1.2480 $[0.8359,2.9660]$ & 1.3966 $[0.9372,3.4512]$ & 2.6321 $[1.7886,6.4509]$ & 0.5278 & 0.4722 \\
       & 7 & 593 & 1.403 $[1.303,1.498]$ & 0.014 $[-0.058,0.085]$ & 0.4984 $[0.4079,0.6441]$ & 0.5670 $[0.4574,0.7529]$ & 1.0714 $[0.8814,1.3869]$ & 0.4898 & 0.5102 \\
       & 20 & 207 & 1.305 $[1.049,1.449]$ & 0.009 $[-0.098,0.120]$ & 0.4840 $[0.3668,0.6696]$ & 0.5069 $[0.3832,0.7197]$ & 0.9941 $[0.7784,1.3361]$ & 0.5049 & 0.4951 \\
      \midrule
      \multirow{3}{*}{ETH} & 1 & 2877 & 1.543 $[1.504,1.586]$ & 0.015 $[-0.019,0.050]$ & 0.6867 $[0.5666,0.9467]$ & 0.7000 $[0.5640,0.9783]$ & 1.3812 $[1.1432,1.9079]$ & 0.5268 & 0.4732 \\
       & 7 & 411 & 1.509 $[1.394,1.600]$ & 0.038 $[-0.054,0.116]$ & 0.6044 $[0.4237,1.1336]$ & 0.7353 $[0.5072,1.4555]$ & 1.3416 $[0.9549,2.5933]$ & 0.5280 & 0.4720 \\
       & 20 & 143 & 1.364 $[1.088,1.575]$ & 0.030 $[-0.140,0.164]$ & 0.5279 $[0.3719,0.7880]$ & 0.5978 $[0.4102,1.1630]$ & 1.1385 $[0.8256,1.9026]$ & 0.4997 & 0.5003 \\
      \bottomrule
    \end{tabular}%
    }%
  }%
  \caption{Real markets, all horizons: fitted moments and sample semi-variances, bootstrap median $[p_5,p_{95}]$.}
  \label{tab:params_markets}
\end{table}

\begin{table}[!ht]
  \centering
  \adjustbox{max width=\textwidth}{\scriptsize
    \begin{tabular}{llr rrrrr}
      \toprule
      & & & \multicolumn{5}{c}{Asymmetric Tsallis ($f_\theta$)} \\
      \cmidrule(lr){4-8}
      Market & $\Delta$ & $n$ & $\widehat{q}$ & $\widehat{\mu}_q$ & $\widehat{\beta}$ & $\widehat{\gamma}_-$ & $\widehat{\gamma}_+$ \\
      \midrule
      \multirow{3}{*}{BIST} & 1 & 4017 & 1.387 $[1.353,1.420]$ & 0.029 $[0.007,0.052]$ & -0.064 $[-0.103,-0.029]$ & 0.4021 $[0.3340,0.4782]$ & 0.5210 $[0.4346,0.6165]$ \\
       & 7 & 573 & 1.191 $[1.077,1.282]$ & 0.023 $[-0.038,0.093]$ & -0.062 $[-0.119,-0.015]$ & 0.0900 $[0.0300,0.1792]$ & 0.1874 $[0.0634,0.3326]$ \\
       & 20 & 200 & 1.169 $[1.001,1.344]$ & 0.003 $[-0.111,0.116]$ & 0.000 $[-0.061,0.059]$ & 0.1110 $[0.0006,0.3595]$ & 0.1068 $[0.0004,0.3547]$ \\
      \midrule
      \multirow{3}{*}{FTSE} & 1 & 4041 & 1.472 $[1.441,1.502]$ & 0.030 $[0.009,0.050]$ & -0.096 $[-0.145,-0.048]$ & 0.6382 $[0.5359,0.7491]$ & 0.8416 $[0.7207,0.9667]$ \\
       & 7 & 577 & 1.353 $[1.252,1.434]$ & 0.039 $[-0.021,0.099]$ & -0.079 $[-0.168,-0.002]$ & 0.3190 $[0.1633,0.4915]$ & 0.4661 $[0.2773,0.6852]$ \\
       & 20 & 202 & 1.292 $[1.001,1.460]$ & 0.067 $[-0.041,0.162]$ & -0.136 $[-0.337,-0.001]$ & 0.1858 $[0.0004,0.5432]$ & 0.4805 $[0.0013,1.0635]$ \\
      \midrule
      \multirow{3}{*}{NASDAQ} & 1 & 4024 & 1.482 $[1.446,1.518]$ & 0.040 $[0.020,0.060]$ & -0.147 $[-0.196,-0.095]$ & 0.6180 $[0.5022,0.7547]$ & 0.9247 $[0.7968,1.0813]$ \\
       & 7 & 574 & 1.235 $[1.129,1.318]$ & 0.043 $[-0.019,0.109]$ & -0.106 $[-0.179,-0.047]$ & 0.1119 $[0.0461,0.1993]$ & 0.2793 $[0.1307,0.4648]$ \\
       & 20 & 201 & 1.179 $[1.001,1.350]$ & 0.066 $[-0.079,0.211]$ & -0.153 $[-0.354,-0.001]$ & 0.0371 $[0.0001,0.1496]$ & 0.2774 $[0.0011,0.7895]$ \\
      \midrule
      \multirow{3}{*}{DJIA} & 1 & 4048 & 1.531 $[1.496,1.565]$ & 0.029 $[0.011,0.048]$ & -0.085 $[-0.148,-0.022]$ & 1.0070 $[0.8276,1.2010]$ & 1.2040 $[1.0352,1.4048]$ \\
       & 7 & 578 & 1.358 $[1.217,1.457]$ & 0.056 $[-0.004,0.111]$ & -0.127 $[-0.234,-0.043]$ & 0.3210 $[0.1323,0.5852]$ & 0.5787 $[0.2797,0.9203]$ \\
       & 20 & 202 & 1.129 $[1.001,1.300]$ & 0.058 $[-0.079,0.179]$ & -0.091 $[-0.277,-0.001]$ & 0.0290 $[0.0000,0.1341]$ & 0.1749 $[0.0011,0.5718]$ \\
      \midrule
      \multirow{3}{*}{S\&P500} & 1 & 4034 & 1.529 $[1.495,1.562]$ & 0.036 $[0.017,0.055]$ & -0.109 $[-0.165,-0.048]$ & 0.9304 $[0.7629,1.1085]$ & 1.1850 $[1.0190,1.3795]$ \\
       & 7 & 576 & 1.382 $[1.276,1.471]$ & 0.086 $[0.025,0.145]$ & -0.273 $[-0.400,-0.166]$ & 0.2533 $[0.1200,0.4350]$ & 0.7836 $[0.4665,1.1881]$ \\
       & 20 & 201 & 1.393 $[1.170,1.565]$ & 0.126 $[-0.002,0.227]$ & -0.415 $[-0.881,-0.147]$ & 0.1704 $[0.0271,0.5430]$ & 0.9949 $[0.2624,2.6670]$ \\
      \midrule
      \multirow{3}{*}{BOVESPA} & 1 & 3968 & 1.338 $[1.295,1.375]$ & 0.014 $[-0.010,0.038]$ & -0.019 $[-0.049,0.010]$ & 0.3456 $[0.2798,0.4173]$ & 0.3830 $[0.3044,0.4572]$ \\
       & 7 & 566 & 1.194 $[1.001,1.305]$ & 0.023 $[-0.040,0.093]$ & -0.025 $[-0.085,0.002]$ & 0.1204 $[0.0007,0.2431]$ & 0.1729 $[0.0009,0.3360]$ \\
       & 20 & 198 & 1.185 $[1.026,1.330]$ & 0.020 $[-0.097,0.137]$ & -0.032 $[-0.124,0.021]$ & 0.1023 $[0.0070,0.2776]$ & 0.1554 $[0.0145,0.3897]$ \\
      \midrule
      \multirow{3}{*}{BRENT} & 1 & 9212 & 1.449 $[1.426,1.471]$ & 0.023 $[0.009,0.037]$ & -0.076 $[-0.105,-0.045]$ & 0.5990 $[0.5313,0.6719]$ & 0.7533 $[0.6743,0.8392]$ \\
       & 7 & 1316 & 1.355 $[1.285,1.410]$ & 0.045 $[0.008,0.084]$ & -0.109 $[-0.172,-0.053]$ & 0.2999 $[0.2075,0.4140]$ & 0.4992 $[0.3432,0.6679]$ \\
       & 20 & 460 & 1.331 $[1.183,1.443]$ & 0.027 $[-0.043,0.091]$ & -0.049 $[-0.152,0.029]$ & 0.3105 $[0.1299,0.5573]$ & 0.4042 $[0.1745,0.7045]$ \\
      \midrule
      \multirow{3}{*}{BTC} & 1 & 4152 & 1.616 $[1.586,1.648]$ & 0.005 $[-0.012,0.021]$ & 0.065 $[-0.019,0.144]$ & 1.6135 $[1.3741,1.9383]$ & 1.4576 $[1.2148,1.7262]$ \\
       & 7 & 593 & 1.403 $[1.303,1.498]$ & -0.010 $[-0.067,0.048]$ & 0.067 $[-0.033,0.167]$ & 0.5499 $[0.3292,0.8617]$ & 0.4239 $[0.2485,0.6797]$ \\
       & 20 & 207 & 1.305 $[1.049,1.449]$ & -0.000 $[-0.103,0.101]$ & 0.018 $[-0.074,0.169]$ & 0.3013 $[0.0265,0.7109]$ & 0.2589 $[0.0370,0.5414]$ \\
      \midrule
      \multirow{3}{*}{ETH} & 1 & 2877 & 1.543 $[1.504,1.586]$ & 0.012 $[-0.010,0.035]$ & 0.008 $[-0.069,0.086]$ & 1.0806 $[0.8554,1.3592]$ & 1.0537 $[0.8681,1.3018]$ \\
       & 7 & 411 & 1.509 $[1.394,1.600]$ & 0.000 $[-0.062,0.061]$ & 0.108 $[-0.066,0.280]$ & 0.9322 $[0.5010,1.5289]$ & 0.6970 $[0.3959,1.1568]$ \\
       & 20 & 143 & 1.364 $[1.088,1.575]$ & -0.005 $[-0.138,0.119]$ & 0.060 $[-0.082,0.356]$ & 0.4245 $[0.0396,1.4538]$ & 0.3191 $[0.0514,0.9200]$ \\
      \bottomrule
    \end{tabular}}
  \caption{Real markets, all horizons: structural estimates, bootstrap median $[p_5,p_{95}]$.}
  \label{tab:structural_markets}
\end{table}

\FloatBarrier

{\footnotesize
\bibliographystyle{apacite}
\bibliography{Bib_CCGR_TsallisEntropy}
}

\clearpage
\allowdisplaybreaks
\makeatletter
\let\section@cntformat\relax   % S1, S2, S3: no "Appendix" prefix
\makeatother
\setcounter{section}{0}
\setcounter{table}{0}
\setcounter{figure}{0}
\setcounter{equation}{0}
\renewcommand{\thesection}{S\arabic{section}}
\renewcommand{\thetable}{S\arabic{table}}
\renewcommand{\thefigure}{S\arabic{figure}}
\renewcommand{\theequation}{S\arabic{equation}}
\renewcommand{\theHsection}{supp.\arabic{section}}
\renewcommand{\theHtable}{supp.\arabic{table}}
\renewcommand{\theHfigure}{supp.\arabic{figure}}
\renewcommand{\theHequation}{supp.\arabic{equation}}

\begin{center}
  {\Large\bfseries Supplementary Material}
\end{center}

\noindent
This Supplementary Material collects the material announced in the main text: the proof of
Proposition~5 deferred from Section~3, the multistart robustness check of the
estimation procedure of Section~4.2, and the market-by-market graphical
diagnostics of Section~4.5. Section, table, figure and equation
numbers, and the numbers of the lemmas, theorems and corollaries of
Section~\ref{sec:proof}, are prefixed by \textsf{S}; references to unprefixed
numbers point to the main text.

\section{Proof of Proposition 5}
\label{sec:proof}

\noindent
Proposition~5 of the main text gives a sharp characterization of the existence
of Lagrange multipliers $(\beta,\gamma_-,\gamma_+)$ reproducing a prescribed
escort mean and a prescribed pair of escort semi-variances. The statement covers
existence, the sharp compatibility bounds on the admissible semi-variance ratio,
the characterization of the endpoints of the admissible range, and the
non-existence regime of the centered problem.

Throughout this section we write
\[
  m=\frac{q}{q-1},
  \qquad
  s_-=\sigma^2_{q,-},
  \qquad
  s_+=\sigma^2_{q,+},
\]
so that $-m$ is the exponent of the kernel $H$ in the stationary escort
density $\hat p$ of the main text and $s_\pm$ are the prescribed escort
semi-variances. The argument is a direct variational one: the multipliers are
recovered as the minimizer of an auxiliary convex function $\Phi$ of the three
kernel coefficients, and the compatibility bounds arise from the behaviour of
$\Phi$ on the two faces on which one quadratic coefficient vanishes.
Corollary~\ref{cor:complete-x-statement} at the end of the section is
Proposition~5 of the main text.

Fix
\[
    1<q<3,
    \qquad
    m=\frac{q}{q-1}>\frac32,
    \qquad
    s_+>0,
    \qquad
    s_->0,
\]
and use the notation
\[
    y_+=\max\{y,0\},
    \qquad
    y_-=\min\{y,0\}.
\]
The purpose of this note is to determine when parameters
\(\beta\in\R\), \(\gamma_+\geq0\), and \(\gamma_-\geq0\) can be
chosen so that the normalized density proportional to
\[
    \bigl[1+\beta y+\gamma_+y_+^2+\gamma_-y_-^2\bigr]^{-m}
\]
has mean zero and prescribed semi-variances \(s_+\) and \(s_-\).
Only ordinary finite-dimensional calculus is used below. Global
optimality and uniqueness for the entropy-maximization problem are
established in the main paper.

\subsection{Reduction to the centered problem}\label{sec:translation}

Let \(\mu_q\in\R\) be prescribed. Suppose first that a density \(g\)
on \(\R\) satisfies
\begin{equation}\label{eq:centered-moments}
    \int_{\R} y g(y)\,dy=0,
    \qquad
    \int_{\R} y_+^2 g(y)\,dy=s_+,
    \qquad
    \int_{\R} y_-^2 g(y)\,dy=s_-.
\end{equation}
Define
\begin{equation}\label{eq:translation-density}
    p_{\mu_q}(x)=g(x-\mu_q).
\end{equation}
With the change of variable \(y=x-\mu_q\),
\[
    \int_{\R}p_{\mu_q}(x)\,dx=1,
\]
\[
    \int_{\R}(x-\mu_q)p_{\mu_q}(x)\,dx
    =\int_{\R}y g(y)\,dy=0,
\]
and
\[
    \int_{\R}\bigl[(x-\mu_q)_\pm\bigr]^2p_{\mu_q}(x)\,dx
    =\int_{\R}y_\pm^2g(y)\,dy=s_\pm.
\]
Conversely, every density satisfying the constraints centered at
\(\mu_q\) gives a density satisfying \eqref{eq:centered-moments} by
setting \(g(y)=p_{\mu_q}(y+\mu_q)\).

Consequently, for every fixed \(\mu_q\in\R\), the existence and
uniqueness problem in the variable \(x\) is equivalent to the same
centered problem in the variable \(y=x-\mu_q\). The numerical value of
\(\mu_q\) creates no additional compatibility condition. We may
therefore work throughout with the centered variable \(y\).

\subsection{A reduced finite-dimensional parametrization}

Consider the homogeneous kernel
\[
    Q(y)=a+by+c_+y_+^2+c_-y_-^2.
\]
Multiplying \((a,b,c_+,c_-)\) by a common positive constant does not
change the normalized density proportional to \(Q^{-m}\). We fix this
scale by imposing
\begin{equation}\label{eq:gauge}
    a+c_+s_+ +c_-s_-=1.
\end{equation}
Eliminating \(a\) from \eqref{eq:gauge}, set
\begin{equation}\label{eq:a-definition}
    a=1-c_+s_+-c_-s_-
\end{equation}
and define the kernel, now as an ordinary function of three real
parameters,
\begin{align}
    Q_{b,c_+,c_-}(y)
    & =a+by+c_+y_+^2+c_-y_-^2 \notag\\
    & =1+by+c_+(y_+^2-s_+)+c_-(y_-^2-s_-).
    \label{eq:reduced-kernel}
\end{align}
For brevity, write \(\theta=(b,c_+,c_-)\), \(a(\theta)
=1-c_+s_+-c_-s_-\), and \(Q_\theta=Q_{b,c_+,c_-}\).

Let
\begin{equation}\label{eq:closed-domain}
    \overline{\mathcal A}
    =\bigl\{\theta:\ c_+\geq0,\ c_-\geq0,
       \ Q_\theta(y)\geq0\ \text{for every }y\in\R\bigr\}.
\end{equation}
For \(\theta\in\overline{\mathcal A}\), define the extended-valued
function
\begin{equation}\label{eq:Phi}
    \Phi(\theta)=\int_{\R}Q_\theta(y)^{1-m}\,dy,
\end{equation}
with \(\Phi(\theta)=+\infty\) whenever the integral is not finite.
If \(a>0\) and \(c_+,c_->0\), positivity of the kernel is equivalent
to
\begin{equation}\label{eq:positivity-homogeneous}
    -2\sqrt{ac_+}<b<2\sqrt{ac_-}.
\end{equation}
The corresponding non-strict inequalities hold on
\(\overline{\mathcal A}\).

\begin{slemma}[Compactness and boundary singularities]
\label{lem:compactness}
The set \(\overline{\mathcal A}\) is compact and convex, and \(\Phi\)
attains a finite minimum on it. Moreover, if \(Q_\theta\) vanishes at
a finite point, then \(\Phi(\theta)=+\infty\).
\end{slemma}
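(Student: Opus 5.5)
The plan is to verify the three assertions in turn. Convexity and closedness come from the affine dependence of $Q_\theta(y)$ on $\theta$. Boundedness follows from the gauge \eqref{eq:gauge} together with the nonnegativity of the kernel. Attainment of the minimum is a lower-semicontinuity-plus-compactness argument. The boundary blow-up is a local integrability computation at a double zero of $Q_\theta$.

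\emph{Convexity and compactness.} For each fixed $y$, the map $\theta\mapsto Q_\theta(y)=1+by+c_+(y_+^2-s_+)+c_-(y_-^2-s_-)$ is affine. Hence $\overline{\mathcal A}$ is an intersection of closed half-spaces $\{Q_\theta(y)\ge0\}$ and $\{c_\pm\ge0\}$, so it is closed and convex. For boundedness, evaluate at $y=0$. This gives $a(\theta)=Q_\theta(0)\ge0$, i.e.\ $c_+s_++c_-s_-\le1$, so $0\le c_\pm\le 1/s_\pm$ and $0\le a\le 1$. To bound $b$, use that on $y\ge0$ the kernel is the quadratic $a+by+c_+y^2$. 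If $b<0$, nonnegativity on the half-line forces $b^2\le4ac_+\le 4/s_+$. Symmetrically, on $y\le0$, if $b>0$ then $b^2\le 4ac_-\le4/s_-$. So $|b|\le 2\max\{s_+^{-1/2},s_-^{-1/2}\}$, and $\overline{\mathcal A}$ is compact.

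\emph{Attainment of a finite minimum.} First I exhibit a point where $\Phi<\infty$. Take $b=0$ and $c_+=c_-=c>0$ small enough that $a=1-c(s_++s_-)>0$. Then $Q_\theta\ge a>0$ and $Q_\theta(y)\sim c\,y^2$ at infinity. Thus $Q_\theta^{1-m}\sim c^{1-m}|y|^{-2(m-1)}$, which is integrable because $2(m-1)>1\iff m>3/2\iff q<3$. Next I show $\Phi$ is lower semicontinuous on $\overline{\mathcal A}$. Since $1-m<0$, the map $\theta\mapsto Q_\theta(y)^{1-m}\in(0,+\infty]$ (with $0^{1-m}=+\infty$) is continuous for each $y$. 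Fatou's lemma applied to a sequence $\theta_k\to\theta$ then gives $\Phi(\theta)\le\liminf\Phi(\theta_k)$. A lower semicontinuous extended-valued function on a compact set attains its infimum. That infimum is finite by the first step, and it is positive since the integrand is positive.

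\emph{Boundary singularity.} This is the step needing the most care, though it remains elementary. Suppose $\theta\in\overline{\mathcal A}$ and $Q_\theta(y_0)=0$ for some finite $y_0$. The two quadratic pieces share the value $a$ and derivative $b$ at $0$, so $Q_\theta$ is $C^1$ on $\mathbb R$ with second derivative bounded by $2\max\{c_+,c_-\}$. Since $Q_\theta\ge0$, the point $y_0$ is a global minimum, hence $Q_\theta'(y_0)=0$. Taylor's theorem with bounded second derivative then gives $0\le Q_\theta(y)\le C(y-y_0)^2$ near $y_0$. If $C=0$, then $Q_\theta$ vanishes on a neighbourhood and $\Phi=\infty$ trivially. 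Otherwise $Q_\theta^{1-m}\ge C^{1-m}|y-y_0|^{-2(m-1)}$ near $y_0$. This is non-integrable because $2(m-1)>1$, so $\Phi(\theta)=+\infty$. The main point to check is the $C^1$ matching at $y=0$, which holds since both pieces have derivative $b$ there. This matters when $y_0=0$: there $a=0$, and the vanishing first derivative at the minimum still yields the quadratic upper bound.
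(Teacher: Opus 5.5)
Your proof is correct and follows the paper's argument step for step. Closedness and convexity come from the affine dependence on $\theta$, the gauge evaluated at $y=0$ bounds $c_\pm$ and $a$, and the non-strict discriminant condition bounds $b$; finiteness of the minimum uses the test point $b=0$, $c_\pm=\varepsilon$ together with Fatou's lemma for lower semicontinuity, and the blow-up uses non-integrability of $|y-y_0|^{-2(m-1)}$ at a double zero, which holds since $m>3/2$. The only difference is cosmetic: the paper splits into the cases $y_0\neq0$ and $y_0=0$, whereas you treat every zero uniformly through the $C^1$ matching of the two pieces at the origin, which gives $Q_\theta'(y_0)=0$ and $Q_\theta\le\max\{c_+,c_-\}(y-y_0)^2$.
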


\begin{proof}
Since \(Q_\theta(0)=a\), non-negativity of the kernel gives
\(a\geq0\). Together with \eqref{eq:a-definition}, this yields
\[
    0\leq c_+\leq\frac1{s_+},
    \qquad
    0\leq c_-\leq\frac1{s_-},
    \qquad
    0\leq a\leq1.
\]
The non-strict form of \eqref{eq:positivity-homogeneous} bounds \(b\).
Thus \(\overline{\mathcal A}\) is bounded. It is closed by its
definition and convex because each condition
\(Q_\theta(y)\geq0\), for fixed \(y\), is an affine inequality in
\(\theta\). Hence it is compact and convex.

There are points with finite value: choose \(b=0\) and
\(c_+=c_-=\varepsilon>0\) with
\(\varepsilon(s_++s_-)<1\). Then
\(Q_\theta(y)=a+\varepsilon y^2\) with \(a>0\), and
\(\Phi(\theta)<+\infty\) because \(m>3/2\).

If \(\theta_n\to\theta\), then \(Q_{\theta_n}(y)\to Q_\theta(y)\)
for every \(y\). Fatou's lemma shows that \(\Phi\) is lower
semicontinuous on \(\overline{\mathcal A}\); therefore it attains a
finite minimum on this compact set.

Finally, if a non-negative piecewise quadratic kernel vanishes at a
finite point \(y_0\neq0\), then either it vanishes identically on the
relevant half-line, in which case the integral is immediately infinite,
or the zero is a double zero and locally
\(Q_\theta(y)\asymp (y-y_0)^2\). If it vanishes at \(y_0=0\), then
\(a=0\), and non-negativity on both sides forces \(b=0\); the kernel
is then quadratic (or identically zero) on each adjacent half-line.
In every case, \(Q_\theta^{1-m}\) has a non-integrable local
singularity because \(m>3/2\). Hence \(\Phi(\theta)=+\infty\).
\end{proof}

\begin{slemma}[Ordinary partial derivatives]
\label{lem:derivatives}
At every non-degenerate interior point of
\(\overline{\mathcal A}\), that is, at every point with
\(a>0\), \(c_+>0\), \(c_->0\), and strict inequalities in
\eqref{eq:positivity-homogeneous}, the function \(\Phi\) is
continuously differentiable and
\begin{align}
    \frac{\partial\Phi}{\partial b}
    &=(1-m)\int_{\R}yQ_\theta(y)^{-m}\,dy,
    \label{eq:partial-b}\\
    \frac{\partial\Phi}{\partial c_+}
    &=(1-m)\int_{\R}(y_+^2-s_+)Q_\theta(y)^{-m}\,dy,
    \label{eq:partial-cplus}\\
    \frac{\partial\Phi}{\partial c_-}
    &=(1-m)\int_{\R}(y_-^2-s_-)Q_\theta(y)^{-m}\,dy.
    \label{eq:partial-cminus}
\end{align}
Let
\begin{equation}\label{eq:centered-density}
    Z_\theta=\int_{\R}Q_\theta(y)^{-m}\,dy,
    \qquad
    p_\theta(y)=\frac{Q_\theta(y)^{-m}}{Z_\theta}.
\end{equation}
Then \eqref{eq:partial-b}--\eqref{eq:partial-cminus} can be written as
\begin{align}
    \frac{\partial\Phi}{\partial b}
    &=(1-m)Z_\theta\E_{p_\theta}[y],
    \label{eq:partial-b-expectation}\\
    \frac{\partial\Phi}{\partial c_+}
    &=(1-m)Z_\theta
      \bigl(\E_{p_\theta}[y_+^2]-s_+\bigr),
    \label{eq:partial-cplus-expectation}\\
    \frac{\partial\Phi}{\partial c_-}
    &=(1-m)Z_\theta
      \bigl(\E_{p_\theta}[y_-^2]-s_-\bigr).
    \label{eq:partial-cminus-expectation}
\end{align}
\end{slemma}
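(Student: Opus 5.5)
The plan is to differentiate under the integral sign in \eqref{eq:Phi} with respect to each of \(b,c_+,c_-\), using the explicit affine dependence of the reduced kernel \eqref{eq:reduced-kernel} on \(\theta\):
\[
\partial_b Q_\theta(y)=y,\qquad
\partial_{c_+}Q_\theta(y)=y_+^2-s_+,\qquad
\partial_{c_-}Q_\theta(y)=y_-^2-s_-.
\]
Pointwise, \(\partial_\theta Q_\theta^{1-m}=(1-m)Q_\theta^{-m}\,\partial_\theta Q_\theta\). Substituting these three derivatives gives \eqref{eq:partial-b}--\eqref{eq:partial-cminus} directly. Dividing by \(Z_\theta\) and using \(\int p_\theta=1\) (so \(\int s_\pm Q_\theta^{-m}\,dy=s_\pm Z_\theta\)) then yields the expectation forms \eqref{eq:partial-b-expectation}--\eqref{eq:partial-cminus-expectation}.

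The substantive step is justifying the interchange of derivative and integral, and the continuity of the resulting partials. I would fix a non-degenerate interior point \(\theta_0\) and choose a small closed ball \(K\) around it on which \(a\ge a_0>0\), \(c_\pm\ge c_0>0\), and the strict inequalities \eqref{eq:positivity-homogeneous} hold uniformly. On \(K\) I need a uniform lower bound
\[
Q_\theta(y)\ge \kappa(1+y^2)\quad\text{for all } y\in\mathbb R,\ \theta\in K,\ \text{with } \kappa>0.
\]
To get it, I write each half-line piece \(a+by+c_\pm y^2\) as a positive-definite binary quadratic form in \((1,y)\): negative discriminant \(b^2<4ac_\pm\) on the relevant side, or positivity already guaranteed on the half-line when the vertex lies outside it. The minimal eigenvalue of this form is continuous in \(\theta\) and positive, hence bounded below on the compact set \(K\). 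In the case where the vertex lies outside the half-line, I would instead use \(a>0\), \(c_\pm>0\) and the sign of \(b\) to bound \(a+by+c_\pm y^2\ge \min(a,c_\pm)(1+y^2)\) on that half-line.

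With this bound in hand, each derivative integrand is dominated on \(K\) by \(C(1+y^2)\cdot\kappa^{-m}(1+y^2)^{-m}\). This is integrable precisely because \(2-2m<-1\), i.e. \(m>3/2\), which holds since \(q<3\). The mean-value theorem together with dominated convergence then justifies differentiation under the integral. The same domination, with the pointwise continuity of the integrands in \(\theta\), gives continuity of the partials, hence \(C^1\). The same bound also shows \(Z_\theta<\infty\), so \(p_\theta\) is well defined.

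The main obstacle is the uniform quadratic lower bound near \(\theta_0\) for the piecewise kernel, particularly handling the two half-lines separately at the junction \(y=0\). There, however, \(Q_\theta(0)=a\ge a_0\), so no difficulty arises.
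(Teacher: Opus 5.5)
Your proposal is correct and follows essentially the same route as the paper: a uniform bound \(Q_\theta(y)\ge C(1+y^2)\) on a small closed neighbourhood of the interior point, domination of the \(\theta\)-derivatives of the integrand by an integrable multiple of \((1+y^2)^{1-m}\) (using \(m>3/2\)), the standard differentiation-under-the-integral theorem, and then division by \(Z_\theta\). Your half-line case analysis (a positive-definite form in \((1,y)\) when the vertex lies inside the half-line, and \(\min(a,c_\pm)(1+y^2)\) otherwise) supplies the detail behind the lower bound, which the paper asserts without proof.
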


\begin{proof}
On a sufficiently small closed neighborhood of an interior point,
there is a constant \(C>0\) such that
\[
    Q_\theta(y)\geq C(1+y^2)
    \qquad (y\in\R).
\]
The derivatives of the integrand with respect to \(b,c_+,c_-\) are
therefore dominated by an integrable multiple of
\((1+y^2)^{1-m}\). The standard theorem for differentiating an
integral depending on finitely many parameters gives
\eqref{eq:partial-b}--\eqref{eq:partial-cminus}. Dividing by
\(Z_\theta\) gives
\eqref{eq:partial-b-expectation}--\eqref{eq:partial-cminus-expectation}.
\end{proof}

\begin{slemma}[Strict convexity and the first-order equations]
\label{lem:convexity}
The function \(\Phi\) is strictly convex on every convex subset of
its finite domain. Consequently, if an interior point \(\theta^*\)
satisfies
\begin{equation}\label{eq:three-first-order-conditions}
    \frac{\partial\Phi}{\partial b}(\theta^*)=0,
    \qquad
    \frac{\partial\Phi}{\partial c_+}(\theta^*)=0,
    \qquad
    \frac{\partial\Phi}{\partial c_-}(\theta^*)=0,
\end{equation}
then it is the unique global minimizer of \(\Phi\). The density
\(p_{\theta^*}\) satisfies exactly the three equations in
\eqref{eq:centered-moments}.
\end{slemma}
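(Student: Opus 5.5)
The plan is to get convexity pointwise from the integrand and then integrate. For fixed $y$, the map $\theta\mapsto Q_\theta(y)$ is affine in $\theta=(b,c_+,c_-)$, because by \eqref{eq:reduced-kernel} it equals $1+by+c_+(y_+^2-s_+)+c_-(y_-^2-s_-)$. Since $m>3/2>1$, the function $t\mapsto t^{1-m}$ is strictly convex and decreasing on $(0,\infty)$. So for each $y$, $\theta\mapsto Q_\theta(y)^{1-m}$ is convex along any segment where $Q_\theta(y)>0$. By Lemma~\ref{lem:compactness}, every point of the finite domain has $Q_\theta>0$ everywhere except possibly on a null set. Integrating the pointwise convexity inequality over $y$ then shows that $\Phi$ is convex on any convex subset of its finite domain. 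Values $+\infty$ are handled by the usual extended-valued convention.

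For strictness, take $\theta_0\neq\theta_1$ in the finite domain and $t\in(0,1)$. Equality in the integrated inequality would force equality for a.e.\ $y$. Because $t^{1-m}$ is strictly convex, this in turn forces $Q_{\theta_0}(y)=Q_{\theta_1}(y)$ for a.e.\ $y$. The difference
\[
Q_{\theta_1}(y)-Q_{\theta_0}(y)=\Delta b\,y+\Delta c_+(y_+^2-s_+)+\Delta c_-(y_-^2-s_-)
\]
is continuous and piecewise polynomial. If it vanishes a.e., it vanishes on $(0,\infty)$, where it equals $\Delta b\,y+\Delta c_+(y^2-s_+)-\Delta c_- s_-$; this is a polynomial, so all its coefficients are zero and $\Delta c_+=0$, $\Delta b=0$. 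The same argument on $(-\infty,0)$ gives $\Delta c_-=0$. Hence $\theta_0=\theta_1$, a contradiction. I expect this step to be the only delicate one: it needs the linear independence of $y$, $y_+^2-s_+$, $y_-^2-s_-$ as functions, which the half-line polynomial argument supplies, together with care about points where $Q$ is not positive.

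Next suppose $\theta^*$ is an interior point satisfying \eqref{eq:three-first-order-conditions}. By Lemma~\ref{lem:derivatives}, $\Phi$ is differentiable there. For any $\theta$ in the finite domain, convexity along the segment $[\theta^*,\theta]$ (which stays in the domain, since the domain is convex) gives
\[
\Phi(\theta)\ge\Phi(\theta^*)+\nabla\Phi(\theta^*)\cdot(\theta-\theta^*)=\Phi(\theta^*).
\]
So $\theta^*$ is a global minimizer; points with infinite value are trivially no smaller. If there were a second minimizer, strict convexity on the segment joining the two would produce a strictly smaller value at the midpoint, a contradiction. Hence the minimizer is unique.

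Finally, we read off the moments. At $\theta^*$ we have $1-m\neq0$ and $Z_{\theta^*}>0$, so \eqref{eq:partial-b-expectation}--\eqref{eq:partial-cminus-expectation} give $\E_{p_{\theta^*}}[y]=0$ and $\E_{p_{\theta^*}}[y_\pm^2]=s_\pm$. These are exactly the equations in \eqref{eq:centered-moments}.
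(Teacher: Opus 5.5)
Your proposal is correct and follows essentially the same route as the paper. Both arguments compose the strictly convex map $t\mapsto t^{1-m}$ with the affine map $\theta\mapsto Q_\theta(y)$, use the linear independence of the coefficient functions to make the integrated inequality strict, apply a one-variable convexity argument along segments from $\theta^*$, and read off the moments from Lemma~\ref{lem:derivatives}; beyond the paper, you also make explicit the positivity of $Q_\theta$ on the finite domain (via Lemma~\ref{lem:compactness}), the convexity of that domain, and the half-line polynomial argument.
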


\begin{proof}
The map \(t\mapsto t^{1-m}\) is strictly convex on \((0,+\infty)\),
and \(Q_\theta(y)\) is affine in \(\theta\). If two parameter triples
produced the same kernel almost everywhere, their difference would be
a piecewise quadratic polynomial that vanishes almost everywhere;
all three coefficients would therefore be zero. Integration thus
preserves strict convexity.

Let \(\theta\neq\theta^*\) be any other point with finite value and
set
\[
    \varphi(t)=\Phi\bigl((1-t)\theta^*+t\theta\bigr),
    \qquad 0\leq t\leq1.
\]
The function \(\varphi\) is strictly convex. By the ordinary chain
rule and \eqref{eq:three-first-order-conditions},
\(\varphi'(0)=0\). Hence \(\varphi(1)>\varphi(0)\), which proves
unique global minimality. The final assertion follows immediately
from
\eqref{eq:partial-b-expectation}--\eqref{eq:partial-cminus-expectation}.
\end{proof}

\subsection{The full-ratio regime}

\begin{stheorem}[Existence for arbitrary positive semi-variances]
\label{thm:full-ratio}
Let
\[
    \frac32\leq q<3,
    \qquad
    m=\frac{q}{q-1},
\]
so that \(3/2<m\leq3\). For every \(s_+>0\) and \(s_->0\), there
exist unique parameters
\[
    \beta\in\R,
    \qquad
    \gamma_+>0,
    \qquad
    \gamma_->0,
\]
such that
\begin{equation}\label{eq:positive-normalized-kernel}
    -2\sqrt{\gamma_+}<\beta<2\sqrt{\gamma_-},
\end{equation}
and the associated centered escort density satisfies
\[
    \E[y]=0,
    \qquad
    \E[y_+^2]=s_+,
    \qquad
    \E[y_-^2]=s_-.
\]
\end{stheorem}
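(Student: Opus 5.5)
The plan is to use the variational machinery already set up. By Lemma~S\ref{lem:compactness}, $\Phi$ attains a finite minimum at some $\theta^*=(b^*,c_+^*,c_-^*)\in\overline{\mathcal A}$. The whole proof then reduces to showing that, when $3/2\le q<3$ (i.e.\ $3/2<m\le 3$), this minimizer is a \emph{non-degenerate interior point}: $a^*>0$, $c_\pm^*>0$, and the strict inequalities in \eqref{eq:positivity-homogeneous} hold.

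Once interiority is established, the rest is routine. Lemma~S\ref{lem:derivatives} applies, so the partial derivatives vanish at $\theta^*$. Since $(1-m)Z_{\theta^*}\neq0$, identities \eqref{eq:partial-b-expectation}--\eqref{eq:partial-cminus-expectation} give exactly the three moment equations. Lemma~S\ref{lem:convexity} gives uniqueness of $\theta^*$. Finally, dividing the kernel by $a^*$ gives
\[
\beta=b^*/a^*,\qquad \gamma_\pm=c_\pm^*/a^*,
\]
and \eqref{eq:positivity-homogeneous} becomes \eqref{eq:positive-normalized-kernel}. For uniqueness of $(\beta,\gamma_+,\gamma_-)$ itself, any admissible triple can be rescaled to satisfy the gauge \eqref{eq:gauge}. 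It then yields a critical interior point of $\Phi$, which must equal $\theta^*$.

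Interiority is handled boundary piece by boundary piece.

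\textbf{Zeros of the kernel.} By Lemma~S\ref{lem:compactness}, if $Q_{\theta}$ vanishes at a finite point then $\Phi=+\infty$. So the minimizer has $Q_{\theta^*}>0$ everywhere; in particular $a^*>0$ and the positivity inequalities are strict.

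\textbf{The face $c_+=0$.} This is the step I expect to be the main obstacle. On this face the kernel is affine on $y>0$, so $b\ge 0$ is forced. If $b=0$ the kernel is constant on the half-line, and since $m>1$ this gives $\Phi=+\infty$. So $b>0$ and $Q^{1-m}\sim (by)^{1-m}$, which is integrable because $m>2$; note $m\le 3$. I will not claim $\Phi$ blows up there. Instead I will show it strictly decreases when moving inward, i.e.\ that the one-sided derivative $\partial\Phi/\partial c_+$ at a boundary point is $-\infty$ or negative. Formally
\[
\frac{\partial\Phi}{\partial c_+}=(1-m)\int (y_+^2-s_+)Q^{-m}\,dy,
\]
and on the face $\int_0^\infty y^2 (a+by)^{-m}\,dy=+\infty$ exactly when $m\le 3$, i.e.\ $q\ge 3/2$. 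So the one-sided derivative in the inward direction is $-\infty$. I will make this rigorous with a monotone convergence argument on the difference quotient $\bigl[\Phi(\theta+t e_{c_+})-\Phi(\theta)\bigr]/t$ as $t\downarrow0$, using the convexity of $t\mapsto Q^{1-m}$ to get monotone quotients. This contradicts minimality, so $c_+^*>0$.

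\textbf{The face $c_-=0$.} This case is symmetric, with $b<0$ forced and the left half-line producing the divergence.

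The same computation shows where $q\ge 3/2$ is essential. For $m>3$ the divergence fails, the minimizer can sit on a face, and that case produces the ratio bounds of case~2 of the proposition.
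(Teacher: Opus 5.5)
Your proposal is correct and follows the paper's proof essentially step by step. The shared steps are: a finite minimizer from the compactness lemma, exclusion of finite zeros of the kernel, and exclusion of the faces $c_\pm=0$ because the inward one-sided derivative in $c_\pm$ equals $-\infty$ when $m\le 3$ (divergence of $\int^\infty y^{2-m}\,dy$). These are followed by the first-order conditions, the strict-convexity lemma, and rescaling by $a^*$. Your monotone-convergence justification of the $-\infty$ derivative and your explicit rescaling argument for uniqueness of $(\beta,\gamma_+,\gamma_-)$ are, if anything, more careful than the paper's.

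One slip to fix: $m>2$ holds only for $q<2$. For $2\le q<3$ (where $m\in(3/2,2]$) the function $(a+by)^{1-m}$ is not integrable and $\Phi\equiv+\infty$ on the face. That subcase is therefore excluded directly by finiteness of the minimum, as the paper remarks, rather than by your derivative argument.
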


\begin{proof}
By Lemma~\ref{lem:compactness}, \(\Phi\) has a finite minimizer
\(\theta^*=(b^*,c_+^*,c_-^*)\) on
\(\overline{\mathcal A}\), and its kernel cannot vanish at a finite
point.

Suppose that \(c_+^*=0\). If \(\Phi(\theta^*)\) is finite, positivity
of the positive tail requires \(b^*>0\). For sufficiently small
\(t>0\), the point
\[
    \theta_t=(b^*,t,c_-^*)
\]
is admissible and
\[
    Q_{\theta_t}(y)=Q_{\theta^*}(y)+t(y_+^2-s_+).
\]
Differentiating first on a bounded interval and then letting the
interval expand to \(\R\) gives the inward right derivative
\begin{equation}\label{eq:full-regime-boundary-derivative}
    \left.\frac{d}{dt}\Phi(\theta_t)\right|_{t=0+}
    =(1-m)\int_{\R}(y_+^2-s_+)Q_{\theta^*}(y)^{-m}\,dy.
\end{equation}
On the positive half-line, \(Q_{\theta^*}(y)\sim b^*y\), so the
leading term in the integral is proportional to
\[
    \int^\infty y^{2-m}\,dy,
\]
which diverges for \(m\leq3\). Since \(1-m<0\), the derivative in
\eqref{eq:full-regime-boundary-derivative} equals \(-\infty\). Thus a
small positive value of \(c_+\) decreases \(\Phi\), contradicting
minimality. If the value on that face is infinite, it cannot be the
finite minimum in the first place. Hence \(c_+^*>0\). Reflection gives
\(c_-^*>0\).

The remaining boundary possibilities would make the kernel vanish at
a finite point and are excluded by Lemma~\ref{lem:compactness}.
Therefore \(\theta^*\) is a non-degenerate interior point. The three
ordinary first-order conditions
\eqref{eq:three-first-order-conditions} hold, and
Lemma~\ref{lem:convexity} shows that the density
\(p_{\theta^*}\) has the prescribed centered moments.

Let
\[
    a^*=1-c_+^*s_+-c_-^*s_->0
\]
and set
\begin{equation}\label{eq:normalized-parameters}
    \beta=\frac{b^*}{a^*},
    \qquad
    \gamma_+=\frac{c_+^*}{a^*},
    \qquad
    \gamma_-=\frac{c_-^*}{a^*}.
\end{equation}
Dividing the kernel by \(a^*\) leaves the normalized density
unchanged, and the strict positivity inequalities become
\eqref{eq:positive-normalized-kernel}. Strict convexity gives
uniqueness on the gauge section, and therefore uniqueness of
\((\beta,\gamma_+,\gamma_-)\).
\end{proof}

\begin{sremark}[Finite ordinary variance]
Theorem~\ref{thm:full-ratio} concerns the escort density and its escort
semi-variances. In the non-degenerate case, finite ordinary variance
of the corresponding Tsallis density requires the additional
restriction
\[
    \frac32\leq q<\frac53.
\]
\end{sremark}

\subsection{The lower-\texorpdfstring{$q$}{q} regime}

Assume now
\[
    1<q<\frac32,
    \qquad
    m=\frac{q}{q-1}>3,
    \qquad
    \rho=\frac{s_+}{s_-}.
\]
In this range, a linearly confining tail has a finite second moment
under the escort density, so a minimizer may lie on a face on which
one quadratic coefficient vanishes.

\begin{slemma}[Boundary ratios and one-sided partial derivatives]
\label{lem:boundary-ratios}
Let \(m>3\). On the face \(c_+=0\), the function \(\Phi\) has a
unique relative minimizer
\[
    \bar\theta_+=(\bar b,0,\bar c_-).
\]
Writing \(\bar p_+\) for its normalized density, one has
\begin{equation}\label{eq:face-plus-moments}
    \E_{\bar p_+}[y]=0,
    \qquad
    \E_{\bar p_+}[y_-^2]=s_-,
\end{equation}
and
\begin{equation}\label{eq:face-plus-ratio}
    \frac{\E_{\bar p_+}[y_+^2]}
         {\E_{\bar p_+}[y_-^2]}
    =\frac{2m-3}{m-3}.
\end{equation}
The inward one-sided partial derivative with respect to \(c_+\) is
\begin{equation}\label{eq:face-plus-derivative}
    \frac{\partial^+\Phi}{\partial c_+}(\bar\theta_+)
    =(1-m)\bar Z_+
      \left(\frac{2m-3}{m-3}s_- -s_+\right),
\end{equation}
where
\(\bar Z_+=\int_{\R}Q_{\bar\theta_+}(y)^{-m}\,dy\).

On the reflected face \(c_-=0\), the unique relative minimizer
\(\bar\theta_-=(\bar b_-,\bar c_+,0)\) satisfies
\begin{equation}\label{eq:face-minus-moments}
    \E_{\bar p_-}[y]=0,
    \qquad
    \E_{\bar p_-}[y_+^2]=s_+,
\end{equation}
\begin{equation}\label{eq:face-minus-ratio}
    \frac{\E_{\bar p_-}[y_+^2]}
         {\E_{\bar p_-}[y_-^2]}
    =\frac{m-3}{2m-3},
\end{equation}
and
\begin{equation}\label{eq:face-minus-derivative}
    \frac{\partial^+\Phi}{\partial c_-}(\bar\theta_-)
    =(1-m)\bar Z_-
      \left(\frac{2m-3}{m-3}s_+ -s_-\right).
\end{equation}
\end{slemma}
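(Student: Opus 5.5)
The plan is to work on the face $\{c_+=0\}$ exactly as Theorem~\ref{thm:full-ratio} works on the whole of $\overline{\mathcal A}$. I would first prove existence and uniqueness of the relative minimizer, then read off the two surviving first-order conditions, then get the ratio \eqref{eq:face-plus-ratio} from explicit half-line integrals and integration by parts. The face $c_-=0$ follows by the reflection $y\mapsto -y$, which swaps $y_+$ and $y_-$ and inverts the ratio, giving \eqref{eq:face-minus-moments}--\eqref{eq:face-minus-derivative}.

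\emph{Step 1: existence, uniqueness and interiority on the face.} The set $\overline{\mathcal A}\cap\{c_+=0\}$ is compact and convex. $\Phi$ is lower semicontinuous there and is finite at points such as $b>0$, $c_->0$ small, $a>0$, since $m>3>2$ makes $(a+by)^{1-m}$ integrable at $+\infty$. So a finite minimizer exists by the argument of Lemma~\ref{lem:compactness}, and it is unique by the strict convexity of Lemma~\ref{lem:convexity} restricted to the face. I would then exclude the relative boundary of the face:
\begin{itemize}
\item if $c_-=0$ as well, $Q=a+by$ vanishes at a finite point or fails integrability, so $\Phi=+\infty$;
\item $b\le 0$ destroys integrability on $(0,\infty)$;
\item $a=0$ or a double zero gives $\Phi=+\infty$ by Lemma~\ref{lem:compactness}.
\end{itemize}
Hence $\bar b>0$, $\bar c_->0$, $\bar a>0$, and $Q\ge C(1+|y|)$ near $\bar\theta_+$. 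The proof of Lemma~\ref{lem:derivatives} then applies to the partials in $b$ and $c_-$, because $m>3$ still gives integrable domination, and stationarity yields \eqref{eq:face-plus-moments}.

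\emph{Step 2: the ratio, which I expect to be the main computational obstacle.} Write $J_k=\int_{-\infty}^0 y^kQ^{-m}\,dy$ and $I_k=\int_0^\infty y^k(\bar a+\bar by)^{-m}\,dy$. The positive-side integrals are explicit Beta integrals:
\[
I_1=\frac{\bar a^{2-m}}{\bar b^2(m-1)(m-2)},\qquad I_2=\frac{2\bar a^{3-m}}{\bar b^3(m-1)(m-2)(m-3)}.
\]
The mean-zero condition gives $J_1=-I_1$. On the negative half-line $Q=\bar a+\bar by+\bar c_-y^2$, and integrating by parts produces two identities:
\[
\bar a^{1-m}=(1-m)\bigl(\bar bJ_0+2\bar c_-J_1\bigr),
\]
from $\int_{-\infty}^0 (Q^{1-m})'\,dy$, and
\[
0=\bar aJ_0+\bar bJ_1+\bar c_-J_2+(1-m)\bigl(\bar bJ_1+2\bar c_-J_2\bigr),
\]
from $\int_{-\infty}^0 (yQ^{1-m})'\,dy$. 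I would eliminate $J_0$ and $\bar c_-$ between these two relations, using $J_1=-I_1$ and the closed forms above. The target is $J_2=\frac{m-3}{2m-3}I_2$, equivalently $\E[y_+^2]/\E[y_-^2]=(2m-3)/(m-3)$. If the elimination does not close with these two identities alone, I would add the analogous identity from $\int (y^2Q^{1-m})'\,dy$ on $(-\infty,0]$, or check the claim first on the model case $\bar c_-$ fixed, scaling $y$.

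\emph{Step 3: the one-sided derivative.} For $t>0$ small, $\theta_t=(\bar b,t,\bar c_-)$ is admissible and $Q_{\theta_t}=Q_{\bar\theta_+}+t(y_+^2-s_+)$. As in the proof of Theorem~\ref{thm:full-ratio}, differentiate under the integral. Because $m>3$, $y^2(\bar a+\bar by)^{-m}$ is integrable, so dominated convergence now gives a finite right derivative:
\[
\frac{\partial^+\Phi}{\partial c_+}(\bar\theta_+)=(1-m)\bar Z_+\bigl(\E_{\bar p_+}[y_+^2]-s_+\bigr).
\]
Substituting $\E_{\bar p_+}[y_+^2]=\frac{2m-3}{m-3}s_-$, from Step 2 together with $\E_{\bar p_+}[y_-^2]=s_-$, gives \eqref{eq:face-plus-derivative}.
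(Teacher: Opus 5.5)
Your proposal is correct. Steps 1 and 3 match the paper's argument, but Step 2 reaches the ratio by a genuinely different route. Your hedge is unnecessary: the elimination closes with your two negative-half-line identities. Solve the first for $\bar aJ_0$ and use $\bar a^{2-m}=\bar b^2(m-1)(m-2)I_1$. The boundary term $\bar a^{1-m}$ then contributes $-(m-2)\bar bI_1$, which cancels the $(m-2)\bar bI_1$ coming from $J_1=-I_1$, and the second identity collapses to
\[
(2m-3)\,\bar c_-J_2=\frac{2\bar a\bar c_-}{\bar b}\,I_1,
\qquad\text{i.e.}\qquad
J_2=\frac{2\bar a}{(2m-3)\bar b}\,I_1 .
\]
So $\bar c_-$ simply cancels rather than needing to be eliminated, and with $I_2=\frac{2\bar a}{(m-3)\bar b}I_1$ you get $I_2/J_2=(2m-3)/(m-3)$.

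The paper instead integrates $(yQ^{1-m})'$ and $(Q^{1-m})'$ over all of $\R$, so there is no boundary term at $0$. It combines the first identity with $\int Q^{1-m}=\bar Z_+(\bar a+\bar c_-s_-)=\bar Z_+$, which uses the gauge and the second-moment equation. This yields the explicit values $\bar c_-s_-=1/(2(m-1))$ and $\bar a=(2m-3)/(2(m-1))$. The second identity gives $\bar b\bar Z_+=2\bar c_-I_1^+$, whence $\E_{\bar p_+}[y_+^2]=\bar a/(\bar c_-(m-3))=\frac{2m-3}{m-3}s_-$.

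Your version shows that the ratio is forced by the centering condition alone, without the gauge or the value of $s_-$. This is consistent with the fact that both $\E[y]=0$ and the ratio of semi-moments are invariant under rescaling $Q$ and $y$, which leaves a single shape parameter. The paper's version uses more of the structure, but in exchange it pins down the normalization $(\bar a,\bar c_-)$ of the face minimizer explicitly. Both routes rest on the same two Beta integrals, and the one-sided derivative in your Step 3 is the paper's final step verbatim.
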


\begin{proof}
It is enough to consider the face \(c_+=0\); the other face follows
from the reflection \(y\mapsto-y\) together with the interchange of
\(+\) and \(-\).

On \(c_+=0\), the kernel is
\[
    Q(y)=a+by \quad (y>0),
    \qquad
    Q(y)=a+by+c_-y^2 \quad (y<0),
\]
with \(a=1-c_-s_-\). A finite positive tail requires \(b>0\).
Compactness, strict convexity, and the divergence at kernels with a
finite zero give a unique relative minimizer in the interior of this
face. At that point the ordinary partial derivatives of the restricted
two-variable function \( (b,c_-)\mapsto\Phi(b,0,c_-) \) vanish.
Since \(m>3\), differentiating under the integral sign is justified on
this face and gives
\[
    \frac{\partial\Phi}{\partial b}
    =(1-m)\int_{\R}yQ^{-m}\,dy,
    \qquad
    \frac{\partial\Phi}{\partial c_-}
    =(1-m)\int_{\R}(y_-^2-s_-)Q^{-m}\,dy.
\]
Their vanishing gives \eqref{eq:face-plus-moments}.

For the rest of the proof, suppress the bars and let
\(Z=\int_{\R}Q^{-m}\,dy\). Integrating the ordinary derivative
\[
    \frac{d}{dy}\bigl(yQ(y)^{1-m}\bigr)
\]
over the two half-lines, using the vanishing boundary terms and
\(\E[y]=0\), yields
\begin{equation}\label{eq:integration-identity-one}
    \int_{\R}Q^{1-m}\,dy
    =2(m-1)c_-\int_{\R}y_-^2Q^{-m}\,dy.
\end{equation}
On the other hand, the gauge and the two moment equations give
\begin{align}
    \int_{\R}Q^{1-m}\,dy
    &=\int_{\R}QQ^{-m}\,dy \notag\\
    &=Z(a+c_-s_-)=Z.
    \label{eq:integration-identity-two}
\end{align}
Combining \eqref{eq:integration-identity-one} and
\eqref{eq:integration-identity-two} gives
\begin{equation}\label{eq:a-cminus-values}
    c_-s_-=\frac{1}{2(m-1)},
    \qquad
    a=\frac{2m-3}{2(m-1)}.
\end{equation}

On the positive half-line, define
\[
    I_1^+=\int_0^\infty y(a+by)^{-m}\,dy,
    \qquad
    I_2^+=\int_0^\infty y^2(a+by)^{-m}\,dy.
\]
Elementary beta integrals give
\[
    I_1^+=\frac{a^{2-m}}{b^2(m-1)(m-2)},
\]
\[
    I_2^+=
    \frac{2a^{3-m}}{b^3(m-1)(m-2)(m-3)},
\]
and hence
\begin{equation}\label{eq:I2-I1-ratio}
    \frac{I_2^+}{I_1^+}=\frac{2a}{b(m-3)}.
\end{equation}
Integrating the ordinary derivative of \(Q^{1-m}\) over the two
half-lines gives
\[
    bZ+2c_-\int_{-\infty}^0 yQ^{-m}\,dy=0.
\]
The centering equation therefore implies
\begin{equation}\label{eq:bZ-I1}
    bZ=2c_-I_1^+.
\end{equation}
Using \eqref{eq:I2-I1-ratio}, \eqref{eq:bZ-I1}, and
\eqref{eq:a-cminus-values}, we obtain
\[
    \E[y_+^2]
    =\frac{I_2^+}{Z}
    =\frac{a}{c_-(m-3)}
    =\frac{2m-3}{m-3}s_-.
\]
This proves \eqref{eq:face-plus-ratio}.

Because \(m>3\), the function
\((y_+^2-s_+)Q^{-m}\) is integrable even on the linear tail. The
ordinary inward one-sided partial derivative is therefore
\[
    \frac{\partial^+\Phi}{\partial c_+}
    =(1-m)\int_{\R}(y_+^2-s_+)Q^{-m}\,dy,
\]
which, together with \eqref{eq:face-plus-ratio}, gives
\eqref{eq:face-plus-derivative}. Reflection proves
\eqref{eq:face-minus-moments}--\eqref{eq:face-minus-derivative}.
\end{proof}

\begin{stheorem}[Sharp compatibility for \(1<q<3/2\)]
\label{thm:lower-q}
Let
\[
    1<q<\frac32,
    \qquad
    m=\frac{q}{q-1}>3.
\]
Non-degenerate parameters
\[
    \beta\in\R,
    \qquad
    \gamma_+>0,
    \qquad
    \gamma_->0
\]
exist if and only if
\begin{equation}\label{eq:compatibility-m}
    \frac{m-3}{2m-3}
    <\frac{s_+}{s_-}
    <\frac{2m-3}{m-3}.
\end{equation}
Equivalently,
\begin{equation}\label{eq:compatibility-q}
    \frac{3-2q}{3-q}
    <\frac{s_+}{s_-}
    <\frac{3-q}{3-2q}.
\end{equation}
Whenever they exist, the parameters and the normalized centered
density are unique.
\end{stheorem}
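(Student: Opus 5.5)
We prove Theorem~\ref{thm:lower-q} by the same variational route as Theorem~\ref{thm:full-ratio}. We minimize \(\Phi\) on the compact convex set \(\overline{\mathcal A}\) and decide where the minimizer lies by means of Lemma~\ref{lem:boundary-ratios}. First, Lemma~\ref{lem:compactness} gives a finite minimizer \(\theta^*\), and its kernel has no finite zero. If \(c_+^*>0\) and \(c_-^*>0\), then \(\theta^*\) is a non-degenerate interior point, and the first-order conditions hold by Lemma~\ref{lem:derivatives}. Lemma~\ref{lem:convexity} then shows that \(p_{\theta^*}\) has the prescribed moments. The normalization \eqref{eq:normalized-parameters} produces \(\beta,\gamma_\pm\) satisfying the strict positivity inequalities. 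The whole question therefore reduces to locating \(\theta^*\). The equivalence of \eqref{eq:compatibility-m} and \eqref{eq:compatibility-q} is the direct substitution \(m=q/(q-1)\), which gives \((m-3)/(2m-3)=(3-2q)/(3-q)\).

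\emph{Sufficiency.} Assume \eqref{eq:compatibility-m} holds, and suppose \(c_+^*=0\). Then \(\theta^*\) minimizes \(\Phi\) on the face \(c_+=0\). By strict convexity it coincides with the unique relative minimizer \(\bar\theta_+\). By \eqref{eq:face-plus-derivative} the inward derivative is \((1-m)\bar Z_+\bigl(\tfrac{2m-3}{m-3}s_--s_+\bigr)\). The upper bound in \eqref{eq:compatibility-m} makes the bracket positive, and since \(1-m<0\) the derivative is strictly negative. Moving slightly inward along \(c_+\) therefore decreases \(\Phi\), contradicting minimality. To move inward we need the perturbed point to stay in \(\overline{\mathcal A}\); this holds because \(\bar b>0\) and \(a>0\) on that face, so a small \(c_+\) preserves positivity. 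The face \(c_-=0\) is excluded symmetrically, using \eqref{eq:face-minus-derivative} and the lower bound. Corner points, where both \(c_\pm=0\), are impossible: they would require \(b>0\) and \(b<0\) simultaneously for integrability. Other boundary points have kernels with finite zeros and are excluded by Lemma~\ref{lem:compactness}. Hence \(\theta^*\) is interior and existence follows. Uniqueness follows from Lemma~\ref{lem:convexity} on the gauge section, exactly as in Theorem~\ref{thm:full-ratio}.

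\emph{Necessity.} Suppose non-degenerate parameters exist. Rescale them to the gauge \eqref{eq:gauge} to obtain an interior \(\theta^\circ\) at which all three partial derivatives vanish. By Lemma~\ref{lem:convexity}, \(\theta^\circ\) is the unique global minimizer of \(\Phi\) on \(\overline{\mathcal A}\), and in particular \(\Phi(\theta^\circ)<\Phi(\bar\theta_+)\). Assume, to the contrary, that \(s_+\ge\tfrac{2m-3}{m-3}s_-\). The inward derivative at \(\bar\theta_+\) is then \(\ge0\). The other two partials vanish there, since \(\bar\theta_+\) is a relative minimizer in the interior of its face. So the one-sided directional derivative of \(\Phi\) at \(\bar\theta_+\) toward \(\theta^\circ\) is \(\ge 0\): it is a combination of these partials with a nonnegative \(c_+\)-coefficient. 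By convexity of \(t\mapsto\Phi((1-t)\bar\theta_++t\theta^\circ)\), this gives \(\Phi(\theta^\circ)\ge\Phi(\bar\theta_+)\), a contradiction. The symmetric argument rules out \(s_+\le\tfrac{m-3}{2m-3}s_-\).

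I expect the main obstacle to be justifying this directional derivative at the boundary point \(\bar\theta_+\). Its kernel has only a linear positive tail, and we must differentiate under the integral along a direction that mixes \(b,c_+,c_-\). The plan is to use \(m>3\), which makes \(y^2(a+by)^{-m}\) integrable. We then dominate the difference quotients of \(Q^{1-m}\) on the segment, since \(Q\) is bounded below by a multiple of \(Q_{\bar\theta_+}\) near \(t=0\), and apply monotone convergence of the convex difference quotients. The differentiation step in Lemma~\ref{lem:boundary-ratios} supplies this same control in the single direction \(c_+\), and we extend it to the combined direction.
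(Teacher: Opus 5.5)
Your proposal is correct and follows the paper's proof essentially step for step. For existence you exclude the faces $c_\pm=0$ by the sign of the inward derivatives from Lemma~\ref{lem:boundary-ratios}, and for necessity you run the convexity argument along the segment from $\bar\theta_+$, showing it would be a global minimizer and contradicting uniqueness of the interior minimizer; your extra justification of the mixed one-sided directional derivative there (via $Q_t\ge(1-t)Q_{\bar\theta_+}$, $m>3$, and monotone convergence of the convex difference quotients) fills a step the paper passes over with an appeal to the ``ordinary chain rule''.
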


\begin{proof}
Assume first that \eqref{eq:compatibility-m} holds. Let \(\theta^*\)
be a minimizer of \(\Phi\) on \(\overline{\mathcal A}\). Its kernel
cannot vanish at a finite point. If a minimizer lay on \(c_+=0\), it
would have to be the unique relative minimizer \(\bar\theta_+\) of
Lemma~\ref{lem:boundary-ratios}. The upper inequality in
\eqref{eq:compatibility-m} makes
\eqref{eq:face-plus-derivative} strictly negative, so increasing
\(c_+\) slightly would decrease \(\Phi\). Hence no minimizer lies on
that face. By reflection, the lower inequality excludes the face
\(c_-=0\). Therefore the minimizer is non-degenerate and interior.
The three ordinary first-order equations give the prescribed moments
by Lemma~\ref{lem:convexity}, and normalization by
\eqref{eq:normalized-parameters} produces
\((\beta,\gamma_+,\gamma_-)\). Strict convexity gives uniqueness.

Conversely, suppose that non-degenerate parameters satisfying the
three prescribed moments exist. Their gauge-normalized representative
\(\theta^\circ\) is an interior point at which all three partial
derivatives vanish. By Lemma~\ref{lem:convexity}, it is the unique
global minimizer of \(\Phi\).

Assume, toward a contradiction, that
\begin{equation}\label{eq:upper-bound-failure}
    \frac{s_+}{s_-}\geq\frac{2m-3}{m-3}.
\end{equation}
Then \eqref{eq:face-plus-derivative} is non-negative at
\(\bar\theta_+\). Let \(\theta=(b,c_+,c_-)\) be any admissible point
with finite value, and consider the ordinary one-variable function
\[
    \varphi(t)
    =\Phi\bigl((1-t)\bar\theta_+ +t\theta\bigr),
    \qquad 0\leq t\leq1.
\]
At the relative minimizer on the face,
\[
    \frac{\partial\Phi}{\partial b}(\bar\theta_+)=0,
    \qquad
    \frac{\partial\Phi}{\partial c_-}(\bar\theta_+)=0.
\]
The ordinary chain rule and the one-sided derivative in the
\(c_+\)-coordinate therefore give
\begin{align*}
    \varphi'(0+)
    &=(b-\bar b)
      \frac{\partial\Phi}{\partial b}(\bar\theta_+)
      +c_+\frac{\partial^+\Phi}{\partial c_+}(\bar\theta_+)\\
    &\quad +(c_--\bar c_-)
      \frac{\partial\Phi}{\partial c_-}(\bar\theta_+)\\
    &=c_+\frac{\partial^+\Phi}{\partial c_+}(\bar\theta_+)\geq0.
\end{align*}
Since \(\varphi\) is convex, \(\varphi(1)\geq\varphi(0)\). Thus
\(\bar\theta_+\) is a global minimizer of \(\Phi\), contradicting the
existence of the distinct interior minimizer \(\theta^\circ\). Hence
the upper inequality in \eqref{eq:compatibility-m} must be strict.
Reflection gives the strict lower inequality. Substituting
\(m=q/(q-1)\) yields \eqref{eq:compatibility-q}.
\end{proof}

\begin{sremark}[Endpoint solutions]
\label{rem:endpoints}
At
\[
    \frac{s_+}{s_-}=\frac{2m-3}{m-3},
\]
the unique global minimizer is the boundary solution on \(c_+=0\),
which yields \(\gamma_+=0\) and \(\gamma_->0\). At
\[
    \frac{s_+}{s_-}=\frac{m-3}{2m-3},
\]
the reflected solution has \(\gamma_-=0\) and \(\gamma_+>0\). Outside
the corresponding closed interval, no centered kernel of the stated
linear--quadratic form with \(\gamma_+,\gamma_-\geq0\) satisfies both
prescribed semi-variance constraints.
\end{sremark}

\begin{sremark}[Ordinary variance at a degenerate endpoint]
At an endpoint, one tail of the corresponding Tsallis density is
linearly rather than quadratically confined. On that tail,
\(f(y)\asymp |y|^{-1/(q-1)}\), so its ordinary second moment is finite
only when \(q<4/3\). This extra qualification concerns only the
degenerate endpoint solutions; it does not affect the non-degenerate
case \(\gamma_+,\gamma_->0\).
\end{sremark}

\subsection{Complete statement for a prescribed \texorpdfstring{$\mu_q$}{muq}}

Combining Subsection~\ref{sec:translation} with
Theorems~\ref{thm:full-ratio} and~\ref{thm:lower-q} gives the form used
in the main paper.

\begin{scorollary}\label{cor:complete-x-statement}
Fix \(\mu_q\in\R\), \(1<q<3\), and prescribed values
\(s_+,s_->0\). A non-degenerate triple
\[
    \beta\in\R,
    \qquad
    \gamma_+>0,
    \qquad
    \gamma_->0,
    \qquad
    -2\sqrt{\gamma_+}<\beta<2\sqrt{\gamma_-},
\]
exists such that
\begin{equation}\label{eq:escort-density-x}
    \widehat p(x)
    =\frac{
      \bigl[1+\beta(x-\mu_q)
      +\gamma_+((x-\mu_q)_+)^2
      +\gamma_-((x-\mu_q)_-)^2\bigr]^{-m}}
    {\displaystyle
      \int_{\R}
      \bigl[1+\beta(u-\mu_q)
      +\gamma_+((u-\mu_q)_+)^2
      +\gamma_-((u-\mu_q)_-)^2\bigr]^{-m}\,du}
\end{equation}
satisfies
\[
    \int_{\R}(x-\mu_q)\widehat p(x)\,dx=0,
    \qquad
    \int_{\R}[(x-\mu_q)_\pm]^2\widehat p(x)\,dx=s_\pm,
\]
precisely in the following cases:
\begin{enumerate}
    \item if \(3/2\leq q<3\), for every \(s_+,s_->0\);
    \item if \(1<q<3/2\), if and only if
    \[
        \frac{3-2q}{3-q}
        <\frac{s_+}{s_-}
        <\frac{3-q}{3-2q}.
    \]
\end{enumerate}
The endpoint ratios in the second case give the unique degenerate
solutions described in Remark~\ref{rem:endpoints}. The existence
criterion is independent of the prescribed value of \(\mu_q\), and
whenever a solution exists, the parameter triple and the normalized
density \(\widehat p\) are unique.
\end{scorollary}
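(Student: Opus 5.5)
The plan is to assemble Corollary~\ref{cor:complete-x-statement} from the pieces already proved, so no new analysis is needed. First I will use the translation argument of Subsection~\ref{sec:translation}. The map $g\mapsto p_{\mu_q}(\cdot)=g(\cdot-\mu_q)$ is a bijection between densities satisfying the centered constraints \eqref{eq:centered-moments} and densities satisfying the constraints centered at $\mu_q$. Under this bijection the kernel $1+\beta y+\gamma_+y_+^2+\gamma_-y_-^2$ in the variable $y$ becomes exactly the kernel in \eqref{eq:escort-density-x} with $y=x-\mu_q$, and the normalizing integral is translation invariant. So a triple $(\beta,\gamma_+,\gamma_-)$ solves the problem at $\mu_q$ if and only if it solves the centered problem, with the same coefficients. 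This immediately gives independence from $\mu_q$ and reduces everything to $\mu_q=0$.

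Second, I will split according to $q$. For $3/2\le q<3$, equivalently $3/2<m\le 3$, Theorem~\ref{thm:full-ratio} gives, for every $s_\pm>0$, a unique non-degenerate triple satisfying \eqref{eq:positive-normalized-kernel}, which is the required positivity condition. For $1<q<3/2$, equivalently $m>3$, Theorem~\ref{thm:lower-q} gives existence if and only if \eqref{eq:compatibility-m} holds. I will then check the algebraic identity $(m-3)/(2m-3)=(3-2q)/(3-q)$ by substituting $m=q/(q-1)$:
\[
m-3=\frac{3-2q}{q-1},\qquad 2m-3=\frac{3-q}{q-1}.
\]
This turns \eqref{eq:compatibility-m} into the stated interval in $q$. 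The endpoint claims are exactly Remark~\ref{rem:endpoints}, transported back to the variable $x$ by the same translation.

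Third, for uniqueness I will rely on the gauge argument already in those theorems. Any non-degenerate triple, rescaled by $a>0$ onto the section \eqref{eq:gauge}, gives an interior critical point of $\Phi$. By Lemma~\ref{lem:convexity} this point is the unique global minimizer, so the gauge representative, hence $(\beta,\gamma_+,\gamma_-)$ via \eqref{eq:normalized-parameters}, and hence $\widehat p$ are unique. One detail needs checking here: an arbitrary normalized triple $(1,\beta,\gamma_\pm)$ can always be rescaled onto the gauge section. This holds because the required factor is $1/(1+\gamma_+s_++\gamma_-s_-)>0$.

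I expect the main obstacle to be bookkeeping rather than substance. The delicate point is the correspondence between kernels normalized with constant term $1$ and gauge-fixed kernels. One must confirm that strict positivity of $Q_\theta$ on the gauge section is equivalent to $-2\sqrt{\gamma_+}<\beta<2\sqrt{\gamma_-}$ after division by $a^*$. One must also confirm that the sign conventions for $y_-=\min\{y,0\}$ match those in the main text. Both follow directly from \eqref{eq:positivity-homogeneous}.
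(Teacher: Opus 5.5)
Your proposal is correct and follows the paper's own proof. Like the paper, you reduce to the centered problem via the translation $y=x-\mu_q$ of Subsection~\ref{sec:translation}, then invoke Theorem~\ref{thm:full-ratio}, Theorem~\ref{thm:lower-q} (with $m=q/(q-1)$) and Remark~\ref{rem:endpoints}, and inherit uniqueness through the bijective translation. Your explicit checks of $(m-3)/(2m-3)=(3-2q)/(3-q)$ and of the gauge factor $1/(1+\gamma_+s_++\gamma_-s_-)>0$ only spell out steps the paper leaves implicit.
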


\begin{proof}
Apply the centered results to \(y=x-\mu_q\), and then use the
translation \eqref{eq:translation-density}. The change of variables
leaves the normalization and all three centered moments unchanged.
Uniqueness is preserved by this bijective translation.
\end{proof}

\section{Warm start versus multistart}
\label{sec:multistart}

The estimation procedure of the main text uses a single starting point,
obtained by first solving the two-parameter symmetric restriction
($\beta=0$, $\gamma_-=\gamma_+$) and then using its solution as the initial
point of the full five-parameter problem. Since the likelihood is not concave,
this raises the question of whether the warm start can trap the optimiser in a
local maximum, and in particular whether it can spuriously return
$\widehat\beta\approx0$ in a series that is in fact asymmetric.

The check is run on the least favourable configuration in the sample. BOVESPA
is the series whose fitted asymmetry is the weakest at every horizon
($|\widehat\beta|\le0.045$, $\Delta\ell\le0.82$, never significant), so it is
the one for which a warm start biased towards the symmetric solution would be
hardest to detect. The full five-parameter problem is re-solved from $74$
dispersed asymmetric starting points, with $\beta$ of both signs and
$\gamma_-\neq\gamma_+$, and the best multistart solution is compared with the
warm-start one.

Table~\ref{tab:bovespa_multistart} reports, at each horizon $\Delta$ and for
each of the two procedures, the sample size $n$, the entropic index
$\widehat q$, the asymmetry parameter $\widehat\beta$, the log-likelihood
difference $\Delta\ell$ against the symmetric Tsallis benchmark and the
corresponding likelihood-ratio statistic $\mathrm{LR}=2\Delta\ell$; the last
column is the log-likelihood gain $\ell_{MS}-\ell_{W}$ of the multistart
solution over the warm-start one.

\begin{table}[h]
  \centering
  \adjustbox{max width=\textwidth}{\scriptsize
    \begin{tabular}{rr rrr r rrr r r}
      \toprule
      & & \multicolumn{4}{c}{Warm-only (paper)} & \multicolumn{4}{c}{Multistart} & \\
      \cmidrule(lr){3-6}\cmidrule(lr){7-10}
      $\Delta$ & $n$ & $\widehat{q}$ & $\widehat{\beta}$ & $\Delta\ell$ & LR & $\widehat{q}$ & $\widehat{\beta}$ & $\Delta\ell$ & LR & $\ell_{MS}-\ell_{W}$ \\
      \midrule
      1 & 3968 & 1.337 & $-0.0205$ & 0.65 & $1.29$ & 1.337 & $-0.0205$ & 0.65 & $1.29$ & 2.60e-10 \\
      7 & 566 & 1.207 & $-0.0330$ & 0.82 & $1.63$ & 1.207 & $-0.0330$ & 0.82 & $1.63$ & 4.68e-08 \\
      20 & 198 & 1.206 & $-0.0446$ & 0.55 & $1.11$ & 1.206 & $-0.0446$ & 0.55 & $1.11$ & 9.93e-09 \\
      \bottomrule
    \end{tabular}}
  \caption{BOVESPA, warm start versus multistart: asymmetric Tsallis fit on standardised logarithmic returns.}
  \label{tab:bovespa_multistart}
\end{table}

The two solutions coincide at every horizon: the largest log-likelihood
improvement over the three frequencies is $4.7\times10^{-8}$, that is,
optimizer tolerance, and all $74$ asymmetric starts converge to the same
solution as the warm start. The small $\widehat\beta$ of BOVESPA, whose
bootstrap intervals cover zero at every horizon, is therefore a property of
the data and not an artefact of the initialisation.
The same check was repeated within the bootstrap ($6$ starting points for each
of the $999$ resamples) with no failures and no improvement over the warm
start.

\section{Market-by-market graphical diagnostics}
\label{sec:market_figures}

Figure~2 of the main text reports the two configurations with the strongest
evidence of asymmetry. This section reports the corresponding diagnostics for
all the market--horizon combinations, so that the cases in which the two
models are indistinguishable can be inspected alongside those in which they
are not. All the series are the standardised non-overlapping logarithmic
returns described in Section~4.1 of the main text, and all the fits are those
underlying the tables of Appendix~B. Each figure refers to a single market at
a single horizon and pairs the two diagnostics: the left panel reports the
binned empirical density on a semi-logarithmic scale, with the fitted
asymmetric Tsallis density as a solid curve and the symmetric Tsallis
benchmark as a dashed one; the right panel reports the QQ-plots of the same
two fitted models against the empirical quantiles, together with the
$45^{\circ}$ reference line.

\subsection{Horizon $\Delta=1$}

\begin{figure}[!h]
  \centering
  \includegraphics[width=\textwidth]{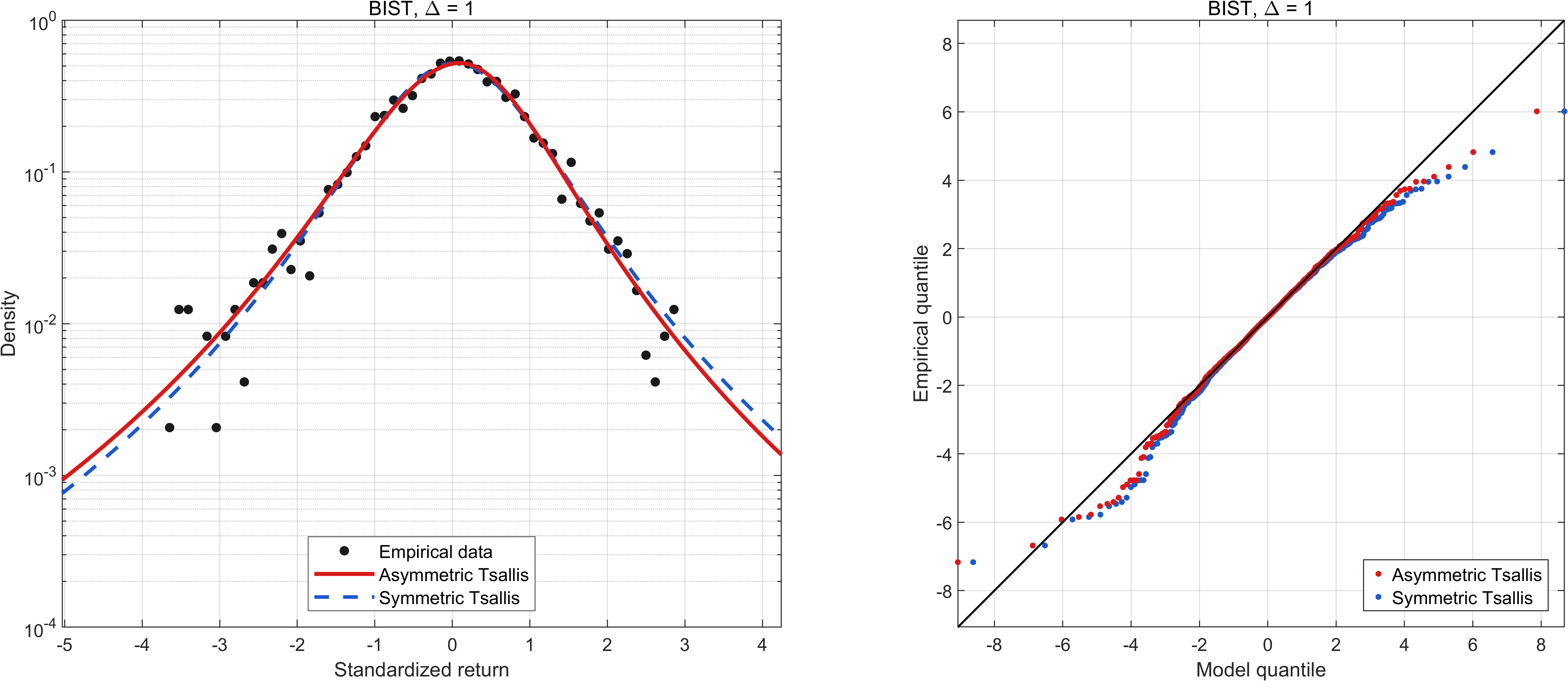}
  \caption{BIST, $\Delta=1$.}
  \label{fig:pdfqq_BIST_d1}
\end{figure}

\begin{figure}[!h]
  \centering
  \includegraphics[width=\textwidth]{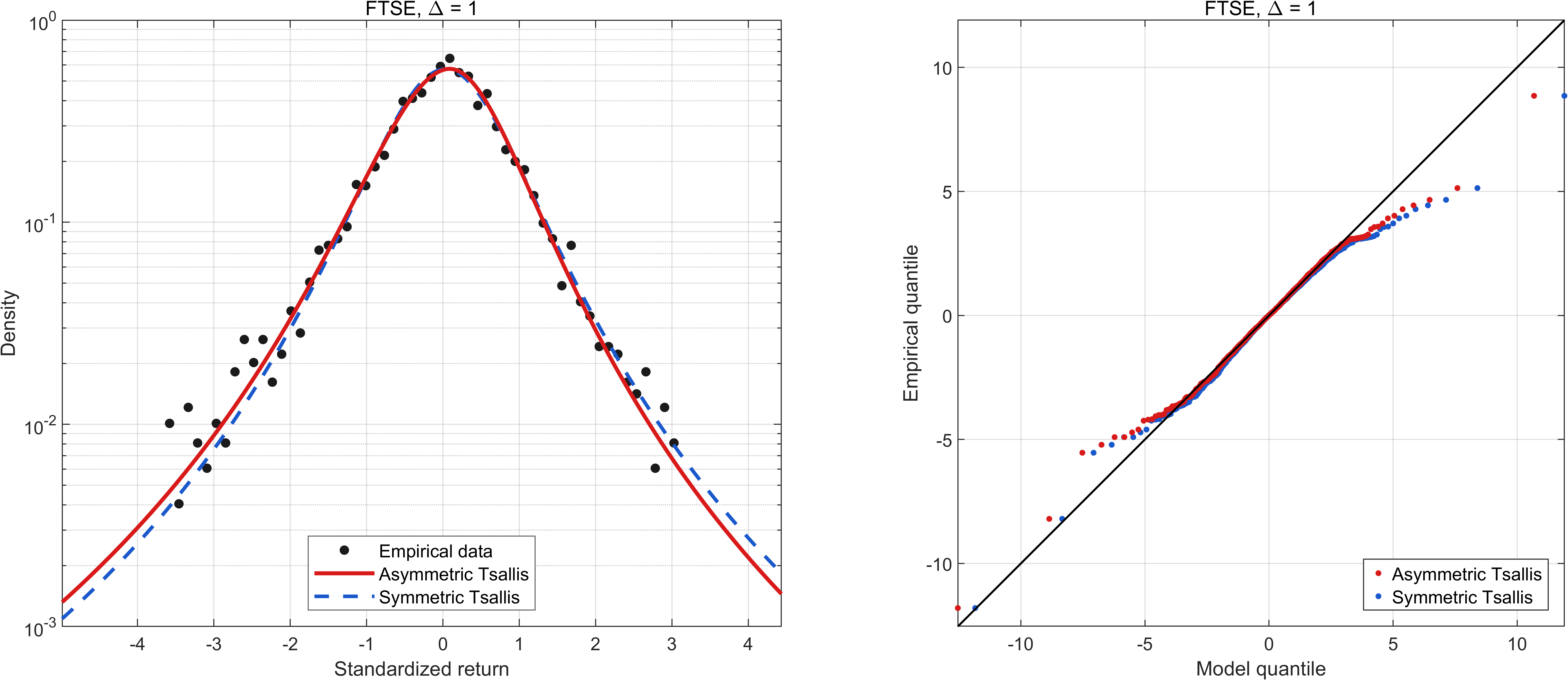}
  \caption{FTSE, $\Delta=1$.}
  \label{fig:pdfqq_FTSE_d1}
\end{figure}

\begin{figure}[!h]
  \centering
  \includegraphics[width=\textwidth]{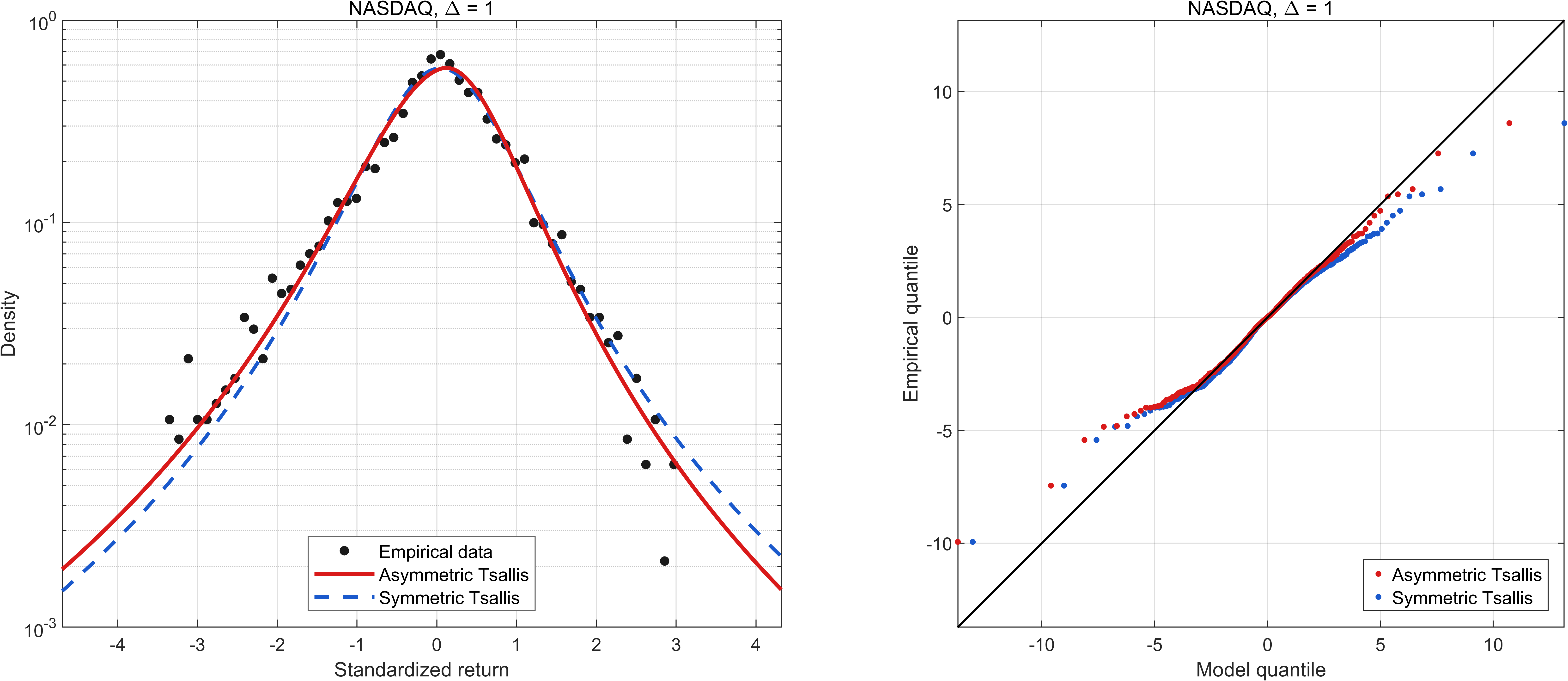}
  \caption{NASDAQ, $\Delta=1$.}
  \label{fig:pdfqq_NASDAQ_d1}
\end{figure}

\begin{figure}[!h]
  \centering
  \includegraphics[width=\textwidth]{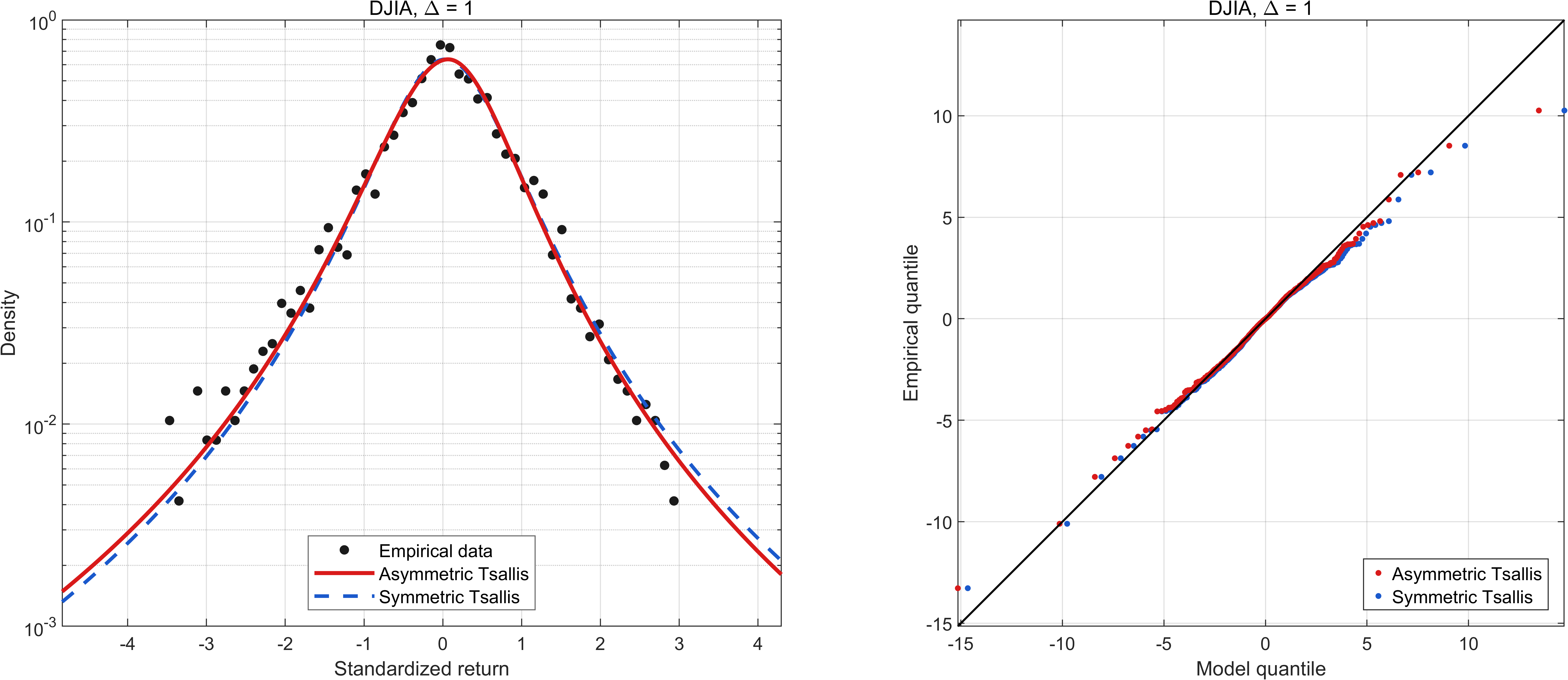}
  \caption{DJIA, $\Delta=1$.}
  \label{fig:pdfqq_DOWJONES_d1}
\end{figure}

\begin{figure}[!h]
  \centering
  \includegraphics[width=\textwidth]{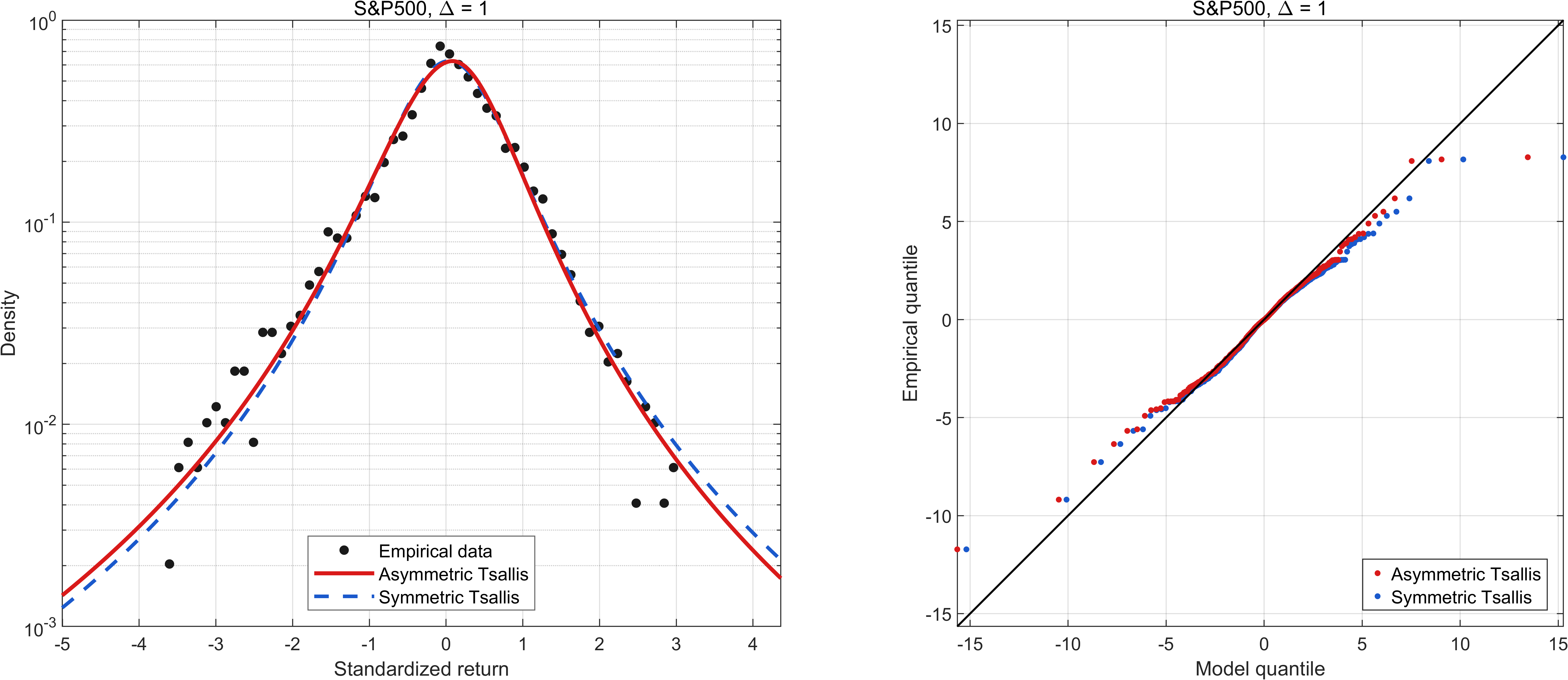}
  \caption{S\&P500, $\Delta=1$.}
  \label{fig:pdfqq_SP500_d1}
\end{figure}

\begin{figure}[!h]
  \centering
  \includegraphics[width=\textwidth]{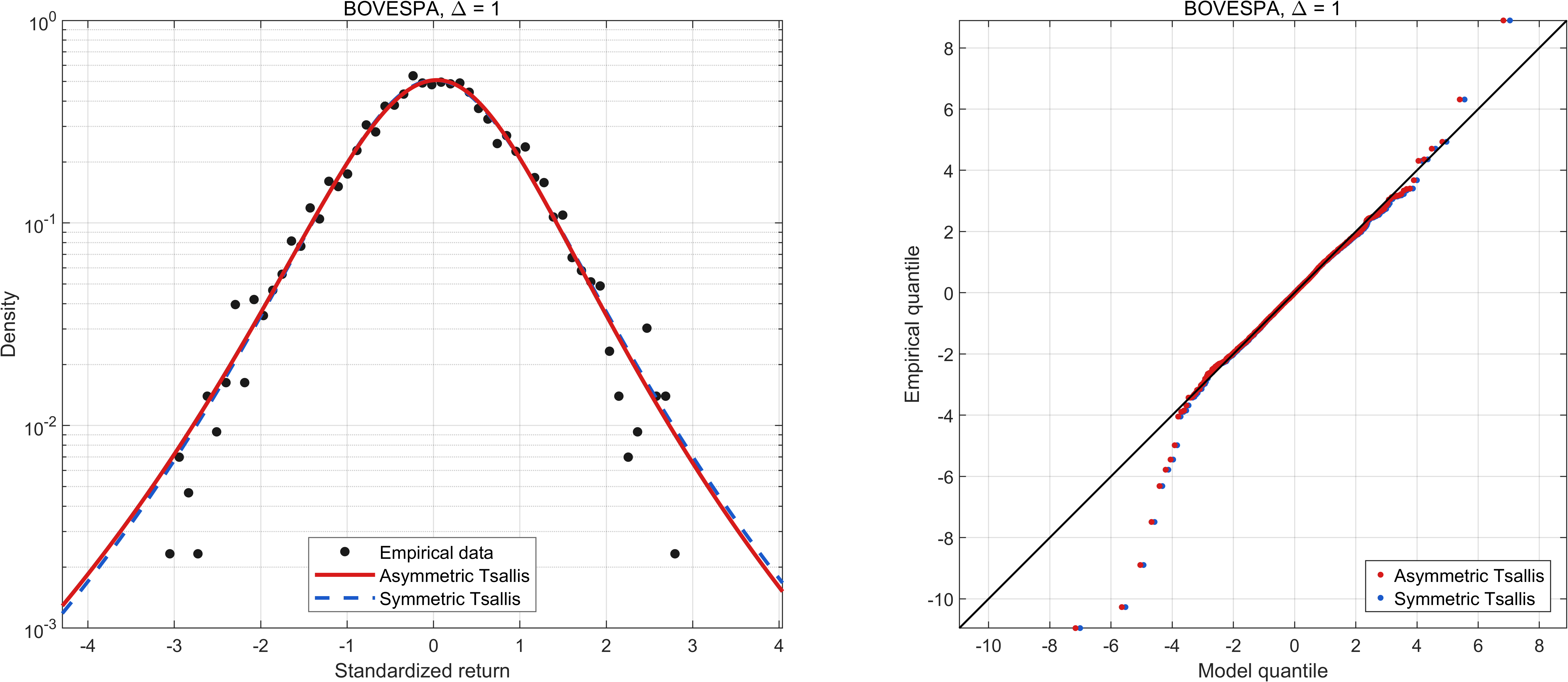}
  \caption{BOVESPA, $\Delta=1$.}
  \label{fig:pdfqq_BOVESPA_d1}
\end{figure}

\begin{figure}[!h]
  \centering
  \includegraphics[width=\textwidth]{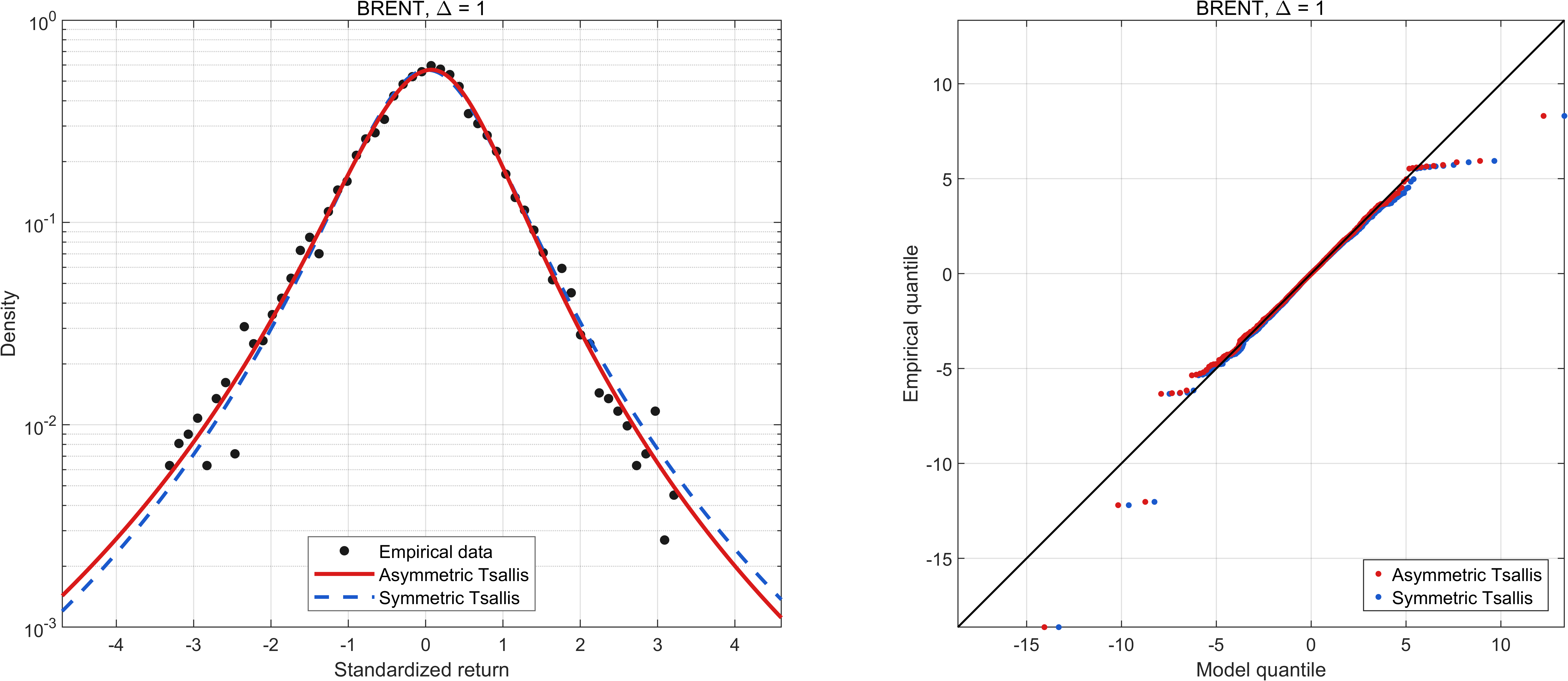}
  \caption{BRENT, $\Delta=1$.}
  \label{fig:pdfqq_BRENT_d1}
\end{figure}

\begin{figure}[!h]
  \centering
  \includegraphics[width=\textwidth]{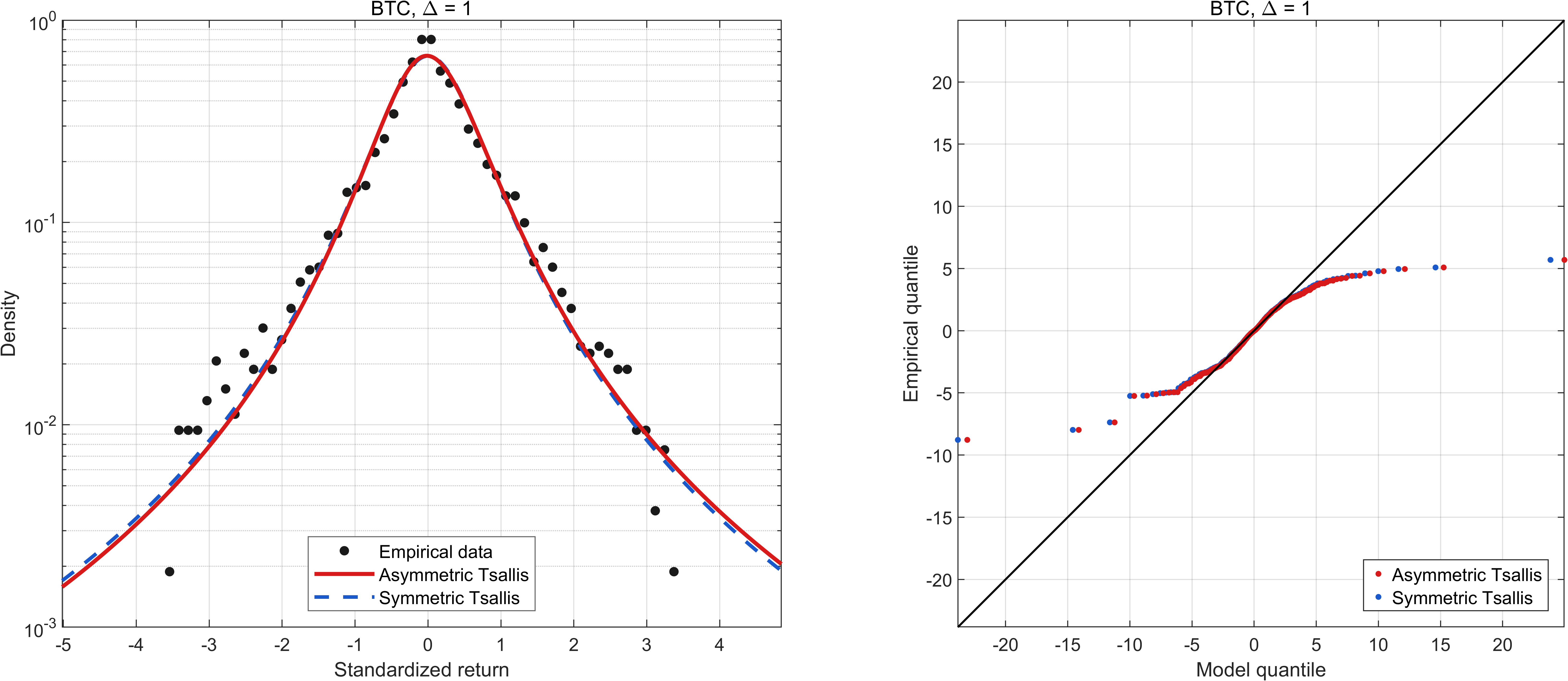}
  \caption{BTC, $\Delta=1$.}
  \label{fig:pdfqq_BITCOIN_d1}
\end{figure}

\begin{figure}[!h]
  \centering
  \includegraphics[width=\textwidth]{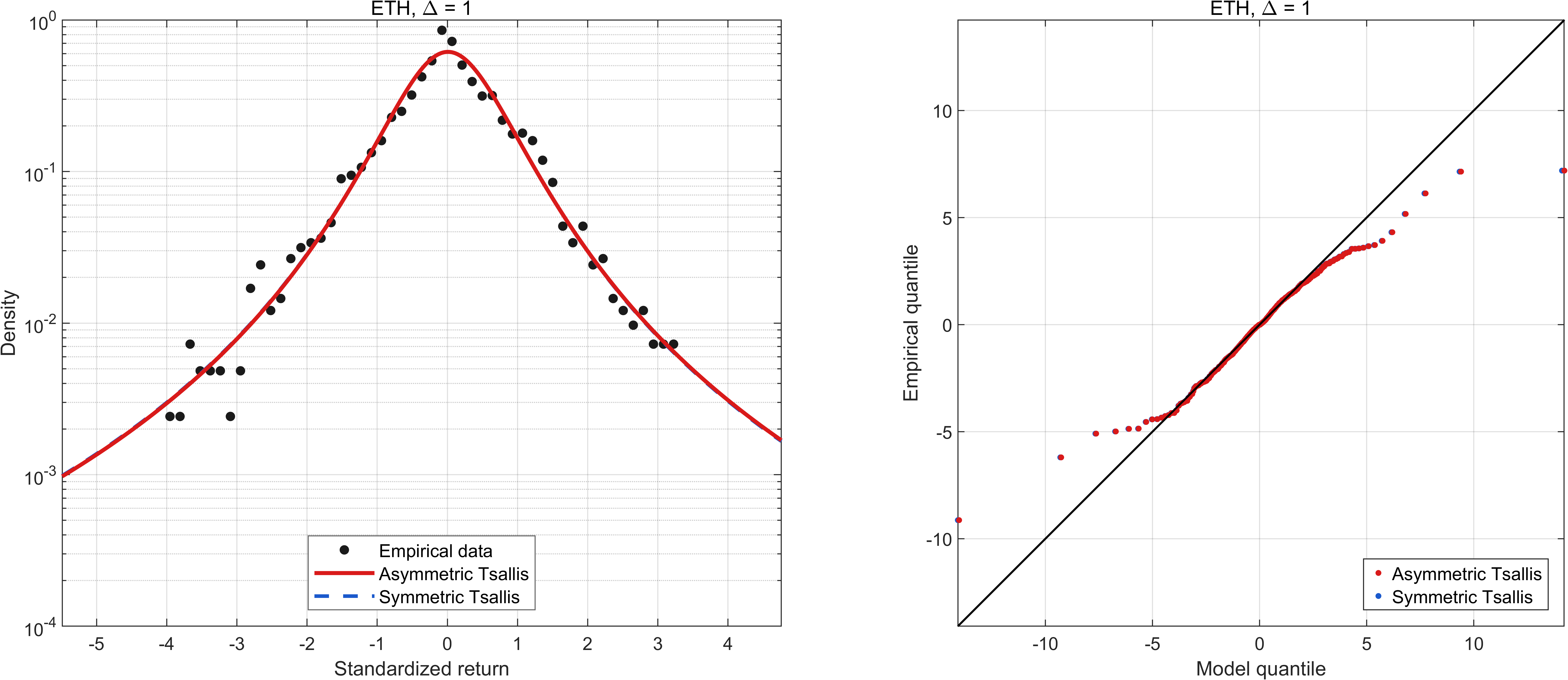}
  \caption{ETH, $\Delta=1$.}
  \label{fig:pdfqq_ETHEREUM_d1}
\end{figure}

\FloatBarrier

\subsection{Horizon $\Delta=7$}

\begin{figure}[!h]
  \centering
  \includegraphics[width=\textwidth]{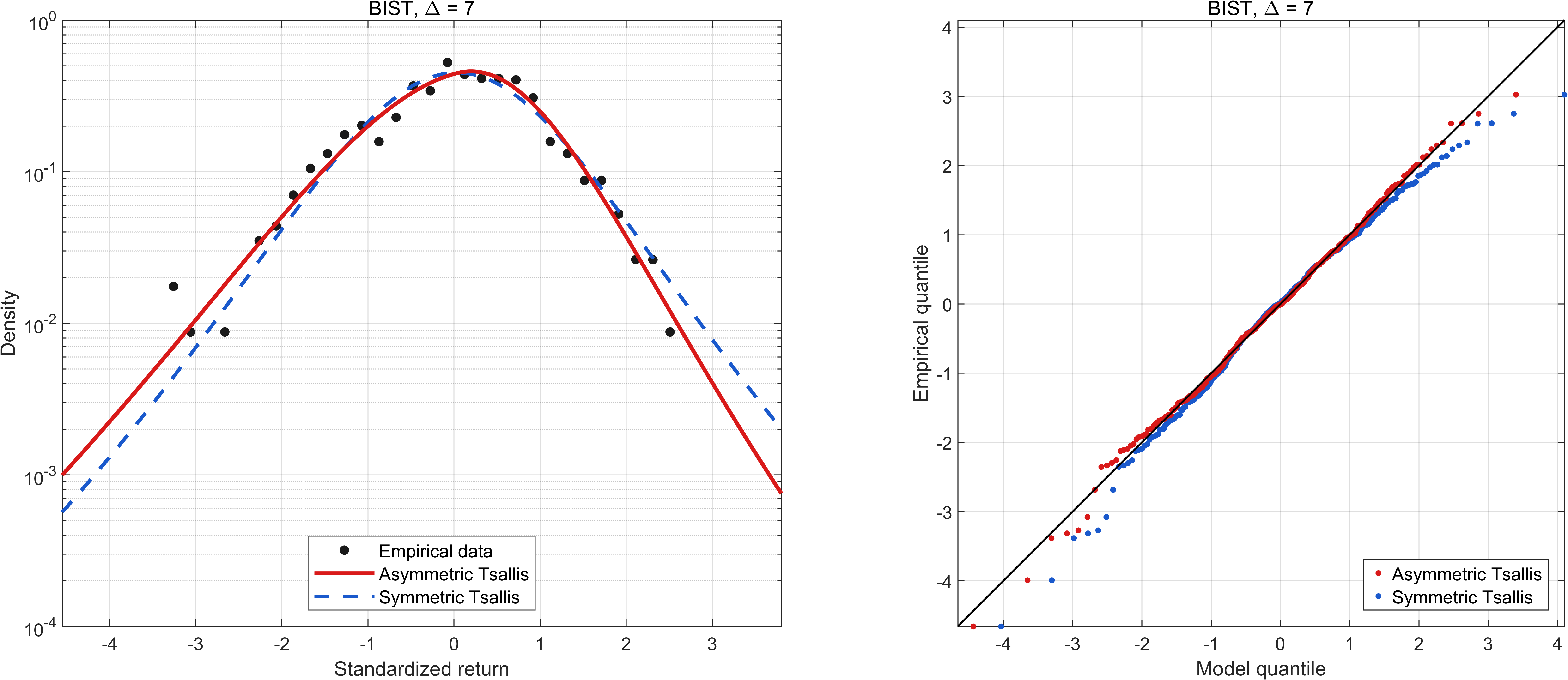}
  \caption{BIST, $\Delta=7$.}
  \label{fig:pdfqq_BIST_d7}
\end{figure}

\begin{figure}[!h]
  \centering
  \includegraphics[width=\textwidth]{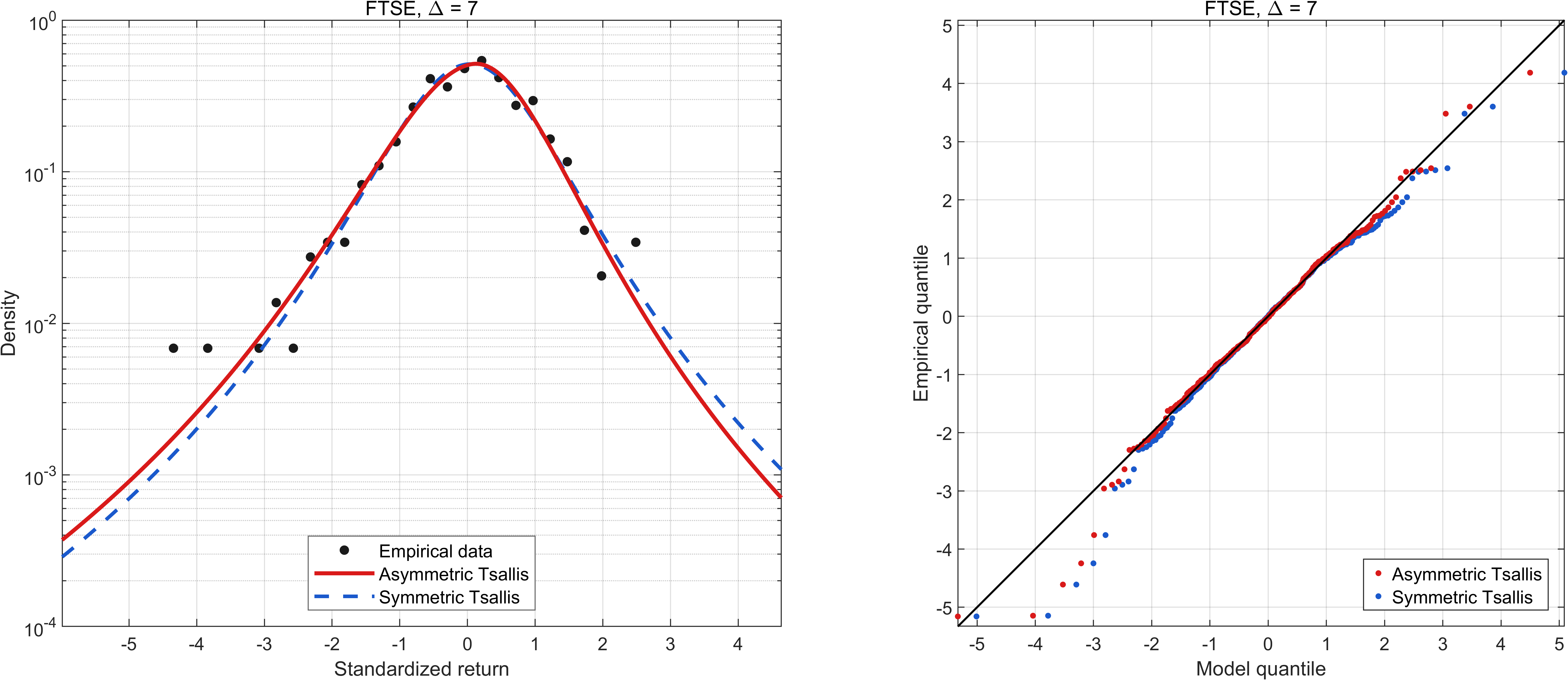}
  \caption{FTSE, $\Delta=7$.}
  \label{fig:pdfqq_FTSE_d7}
\end{figure}

\begin{figure}[!h]
  \centering
  \includegraphics[width=\textwidth]{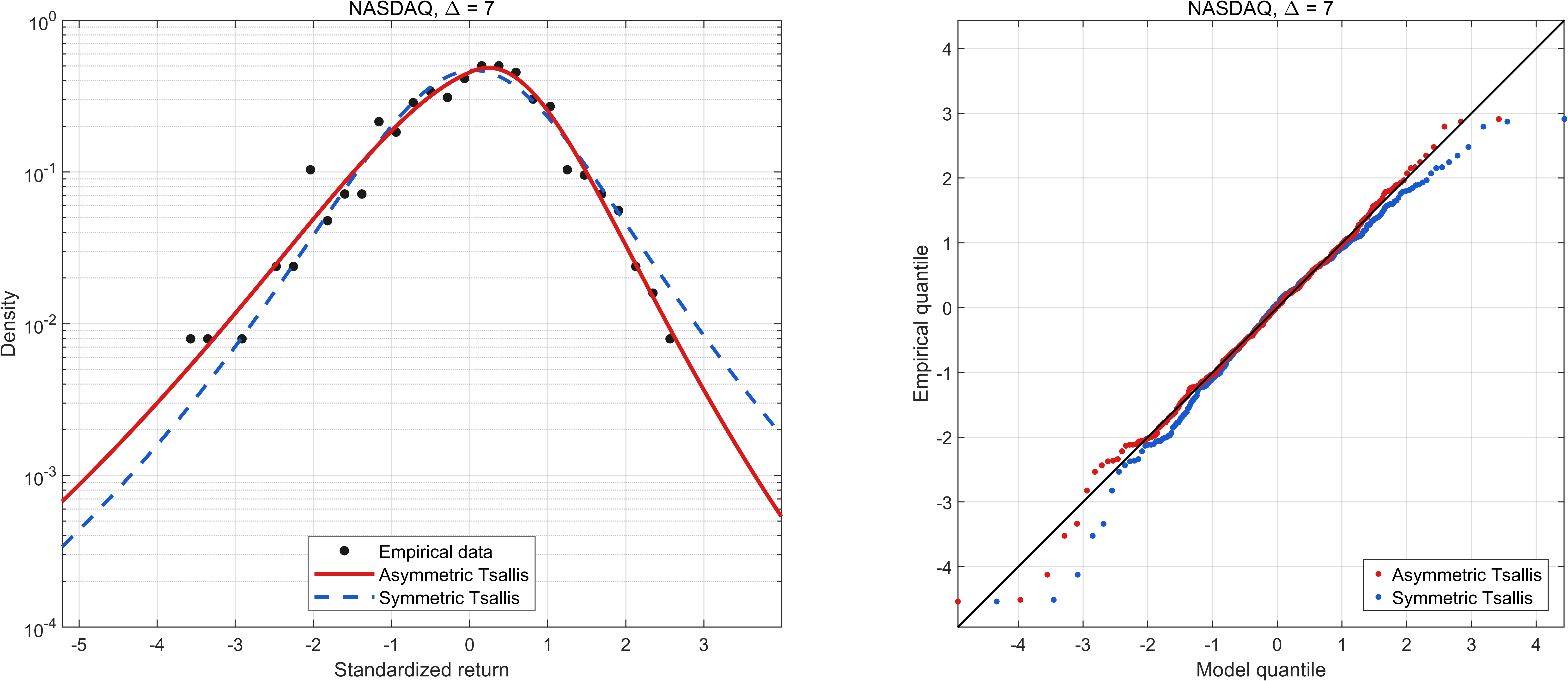}
  \caption{NASDAQ, $\Delta=7$.}
  \label{fig:pdfqq_NASDAQ_d7}
\end{figure}

\begin{figure}[!h]
  \centering
  \includegraphics[width=\textwidth]{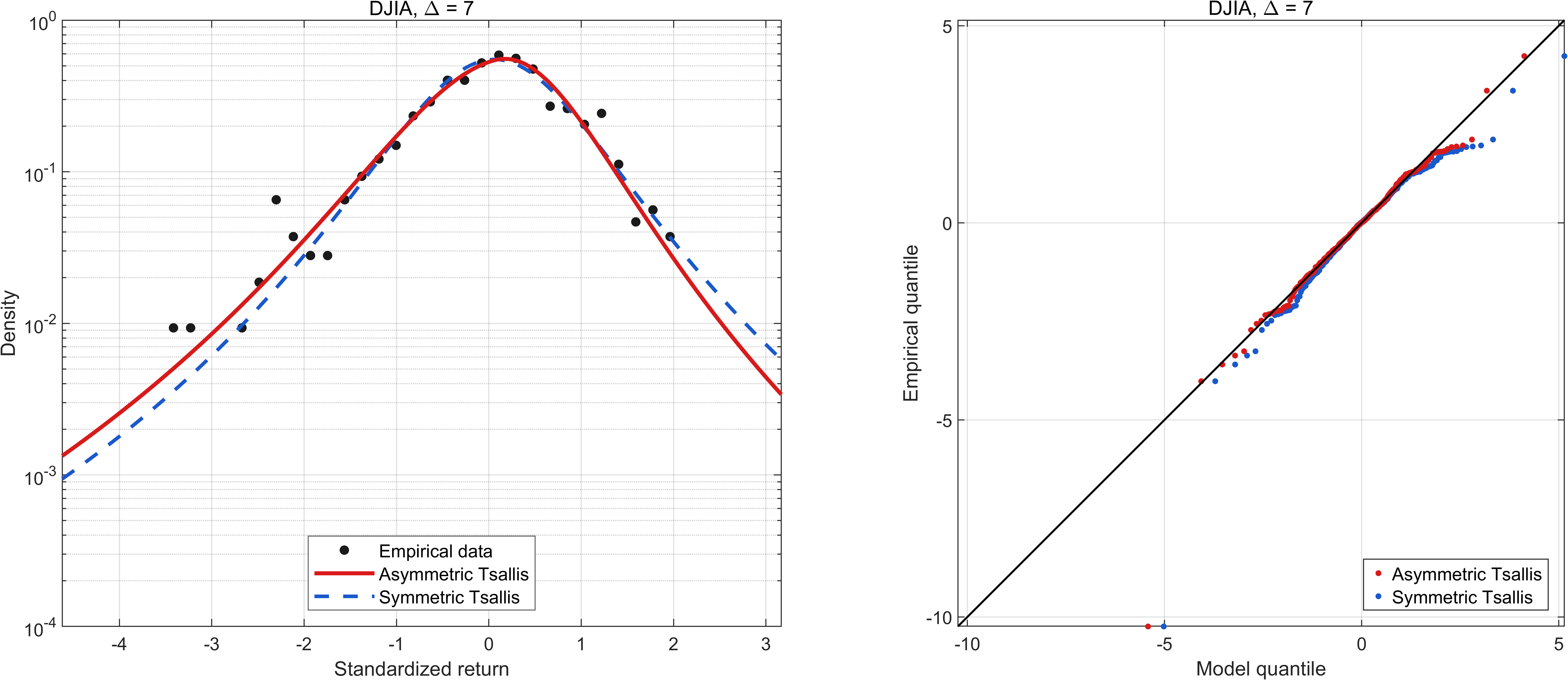}
  \caption{DJIA, $\Delta=7$.}
  \label{fig:pdfqq_DOWJONES_d7}
\end{figure}

\begin{figure}[!h]
  \centering
  \includegraphics[width=\textwidth]{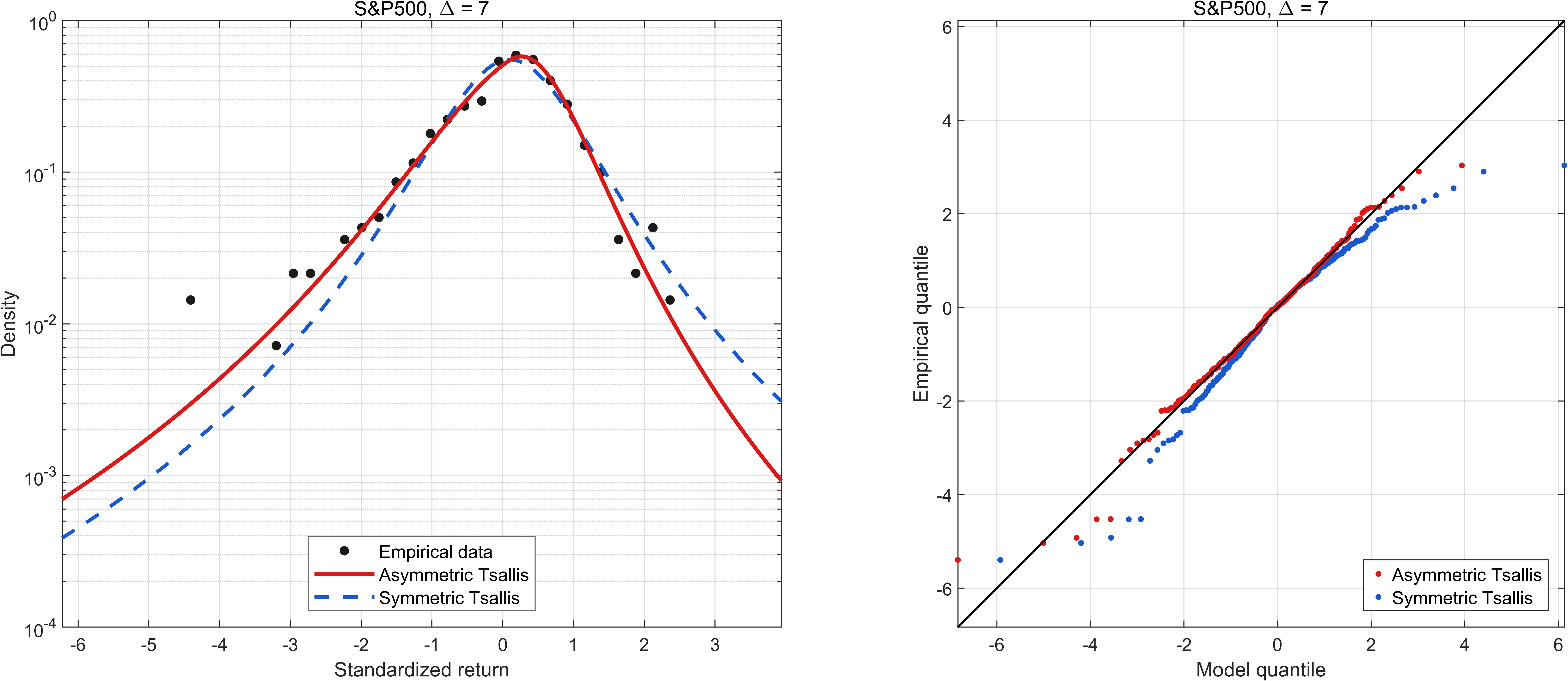}
  \caption{S\&P500, $\Delta=7$.}
  \label{fig:pdfqq_SP500_d7}
\end{figure}

\begin{figure}[!h]
  \centering
  \includegraphics[width=\textwidth]{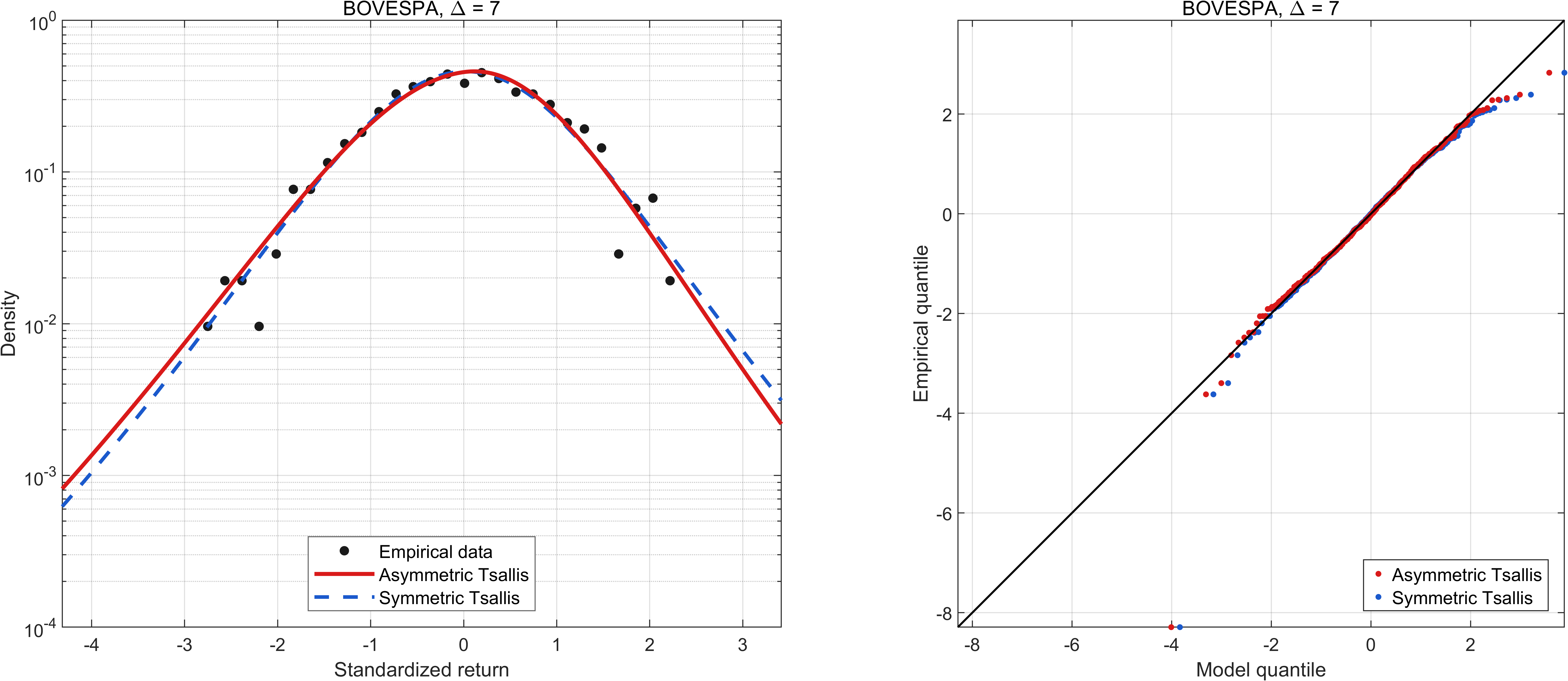}
  \caption{BOVESPA, $\Delta=7$.}
  \label{fig:pdfqq_BOVESPA_d7}
\end{figure}

\begin{figure}[!h]
  \centering
  \includegraphics[width=\textwidth]{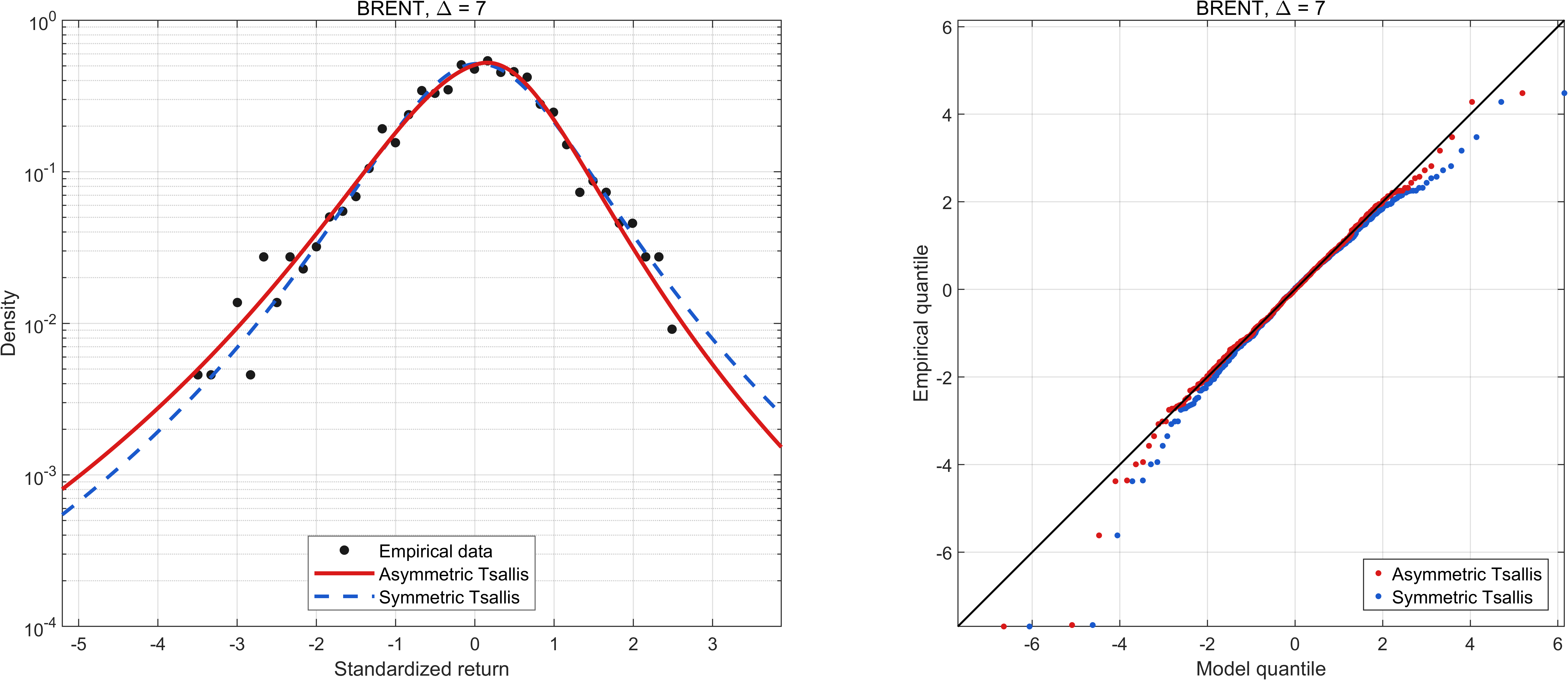}
  \caption{BRENT, $\Delta=7$.}
  \label{fig:pdfqq_BRENT_d7}
\end{figure}

\begin{figure}[!h]
  \centering
  \includegraphics[width=\textwidth]{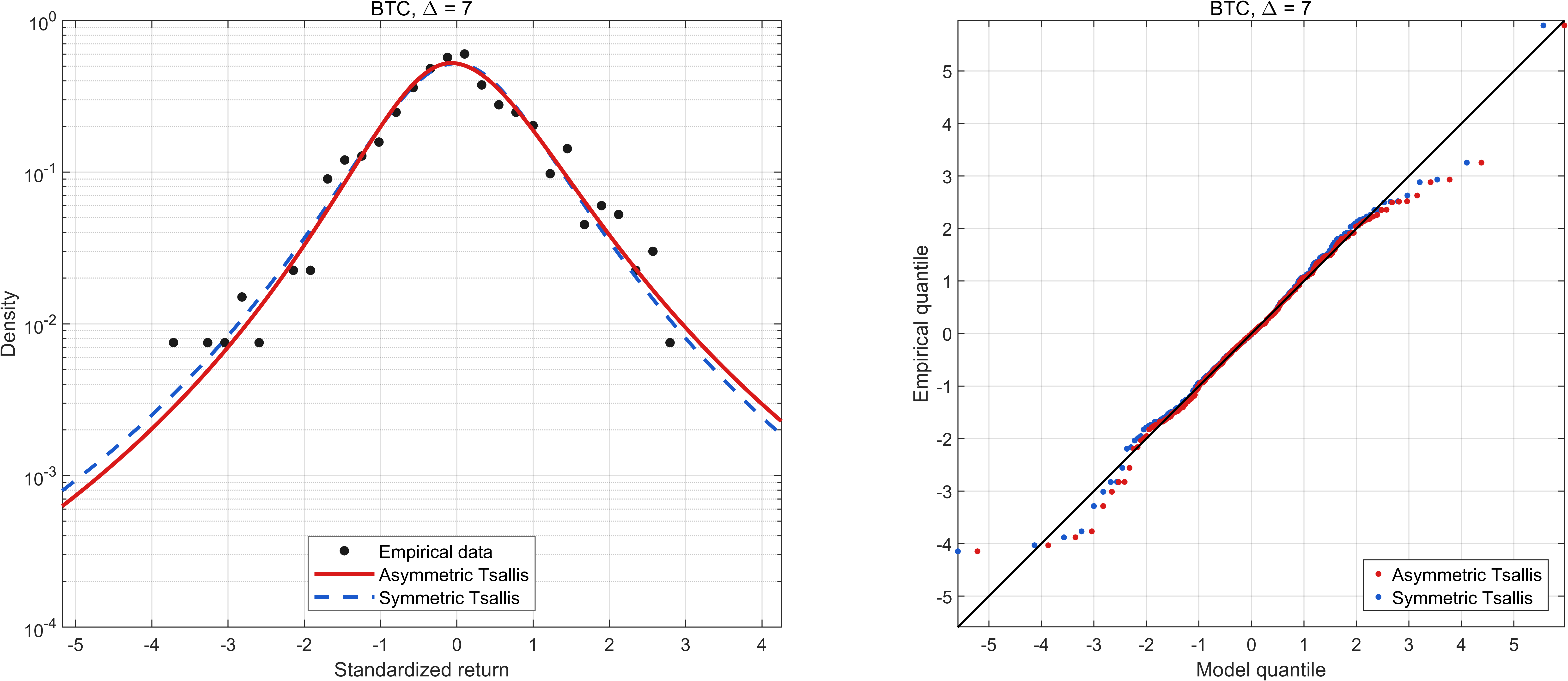}
  \caption{BTC, $\Delta=7$.}
  \label{fig:pdfqq_BITCOIN_d7}
\end{figure}

\begin{figure}[!h]
  \centering
  \includegraphics[width=\textwidth]{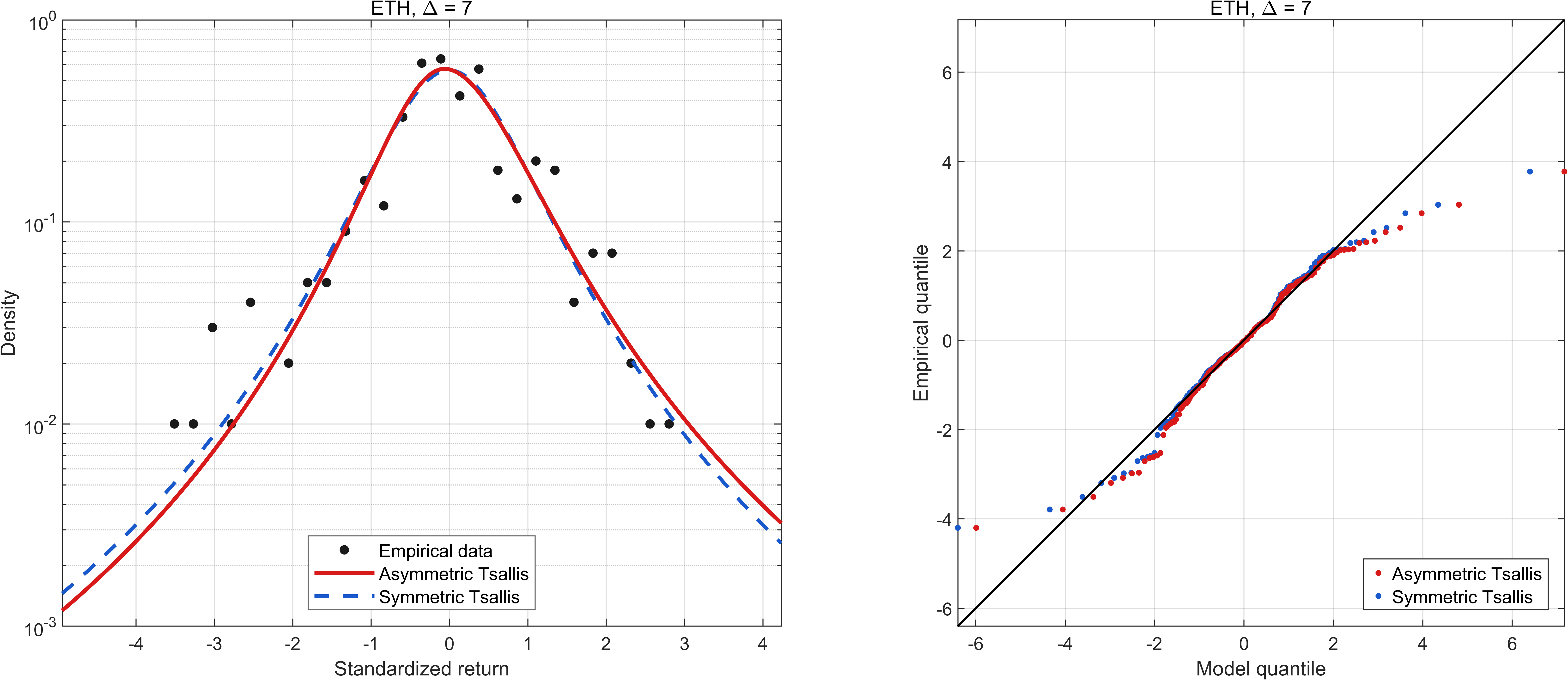}
  \caption{ETH, $\Delta=7$.}
  \label{fig:pdfqq_ETHEREUM_d7}
\end{figure}

\FloatBarrier

\subsection{Horizon $\Delta=20$}

\begin{figure}[!h]
  \centering
  \includegraphics[width=\textwidth]{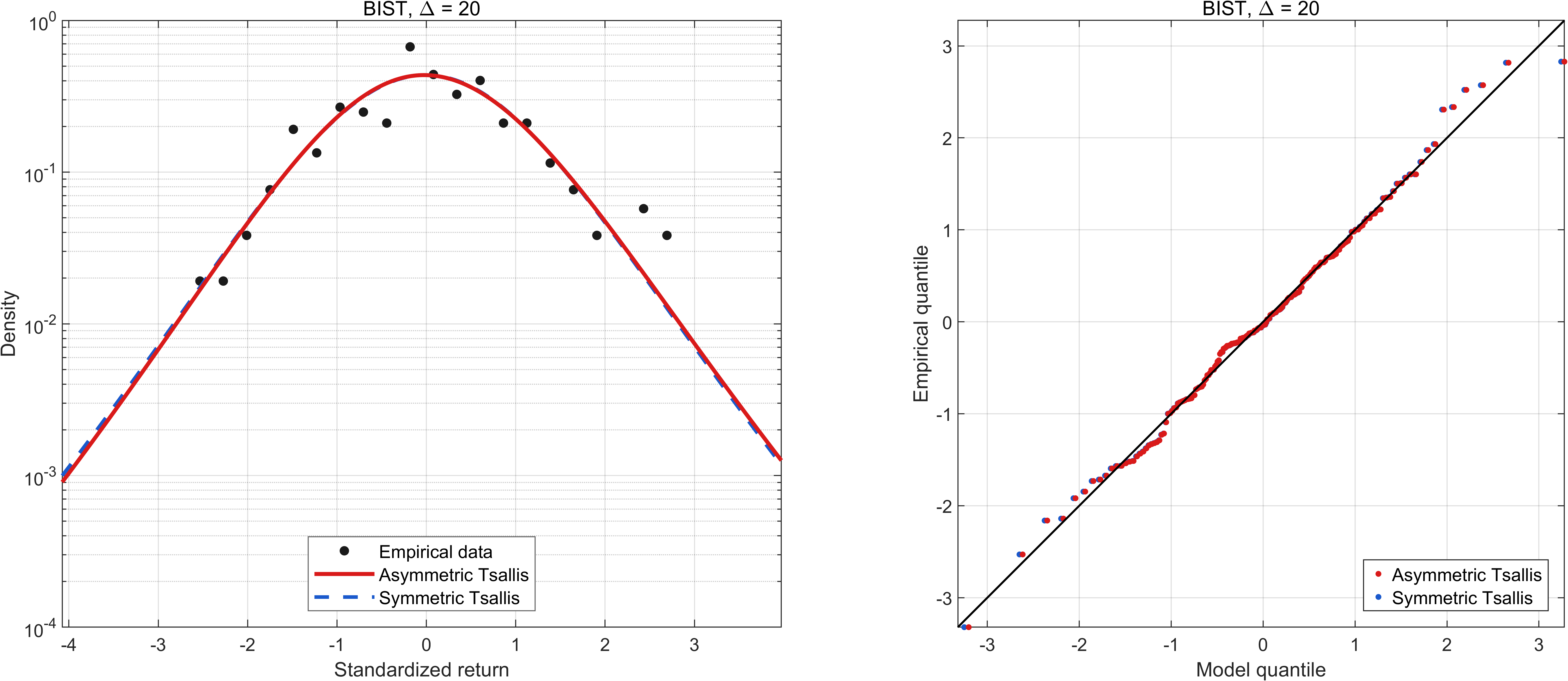}
  \caption{BIST, $\Delta=20$.}
  \label{fig:pdfqq_BIST_d20}
\end{figure}

\begin{figure}[!h]
  \centering
  \includegraphics[width=\textwidth]{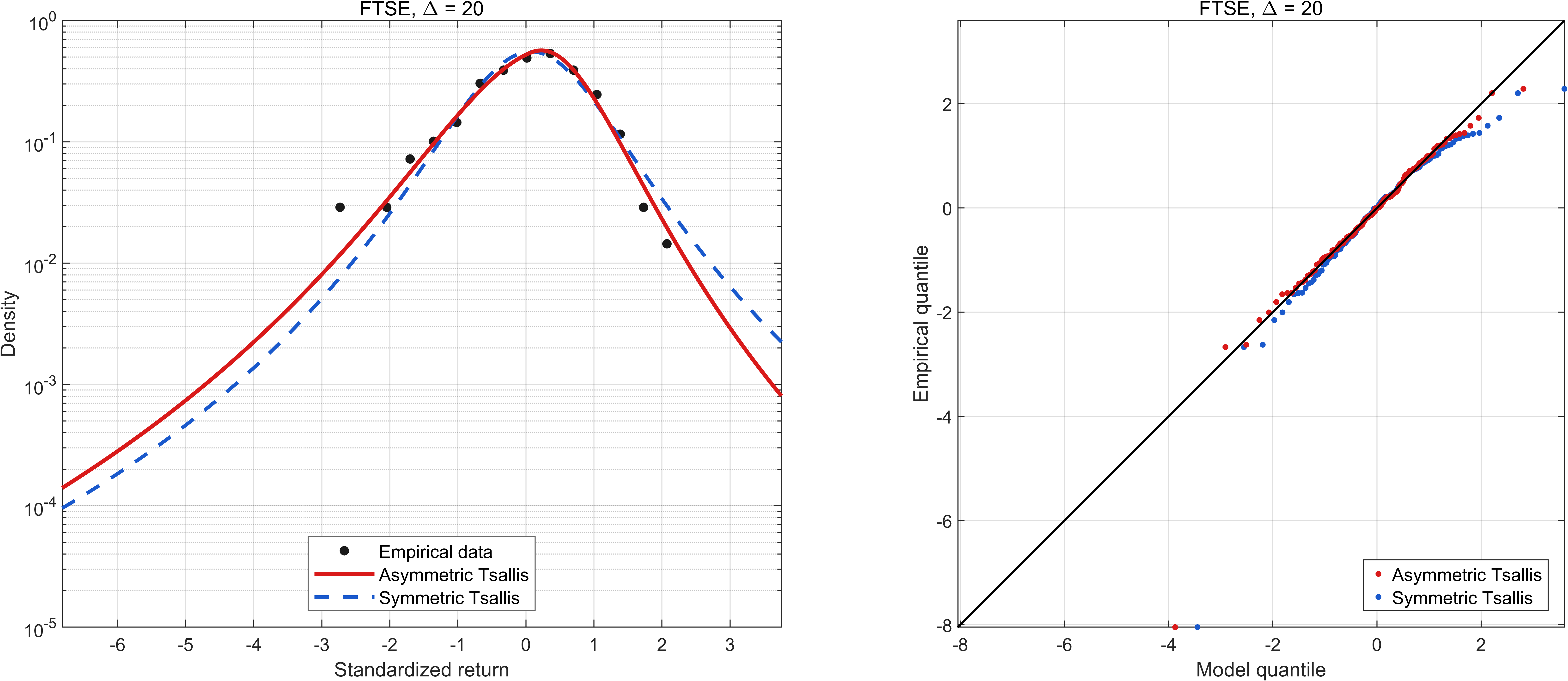}
  \caption{FTSE, $\Delta=20$.}
  \label{fig:pdfqq_FTSE_d20}
\end{figure}

\begin{figure}[!h]
  \centering
  \includegraphics[width=\textwidth]{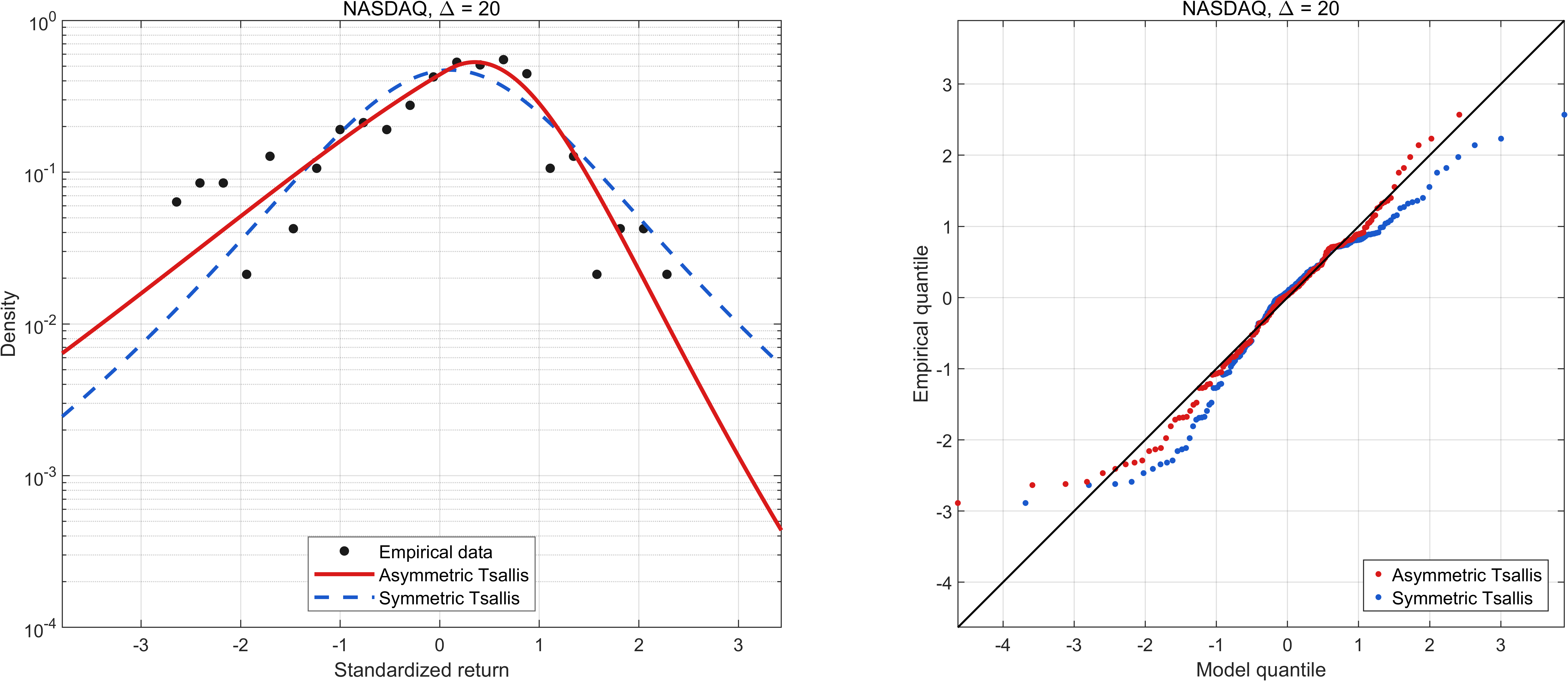}
  \caption{NASDAQ, $\Delta=20$.}
  \label{fig:pdfqq_NASDAQ_d20}
\end{figure}

\begin{figure}[!h]
  \centering
  \includegraphics[width=\textwidth]{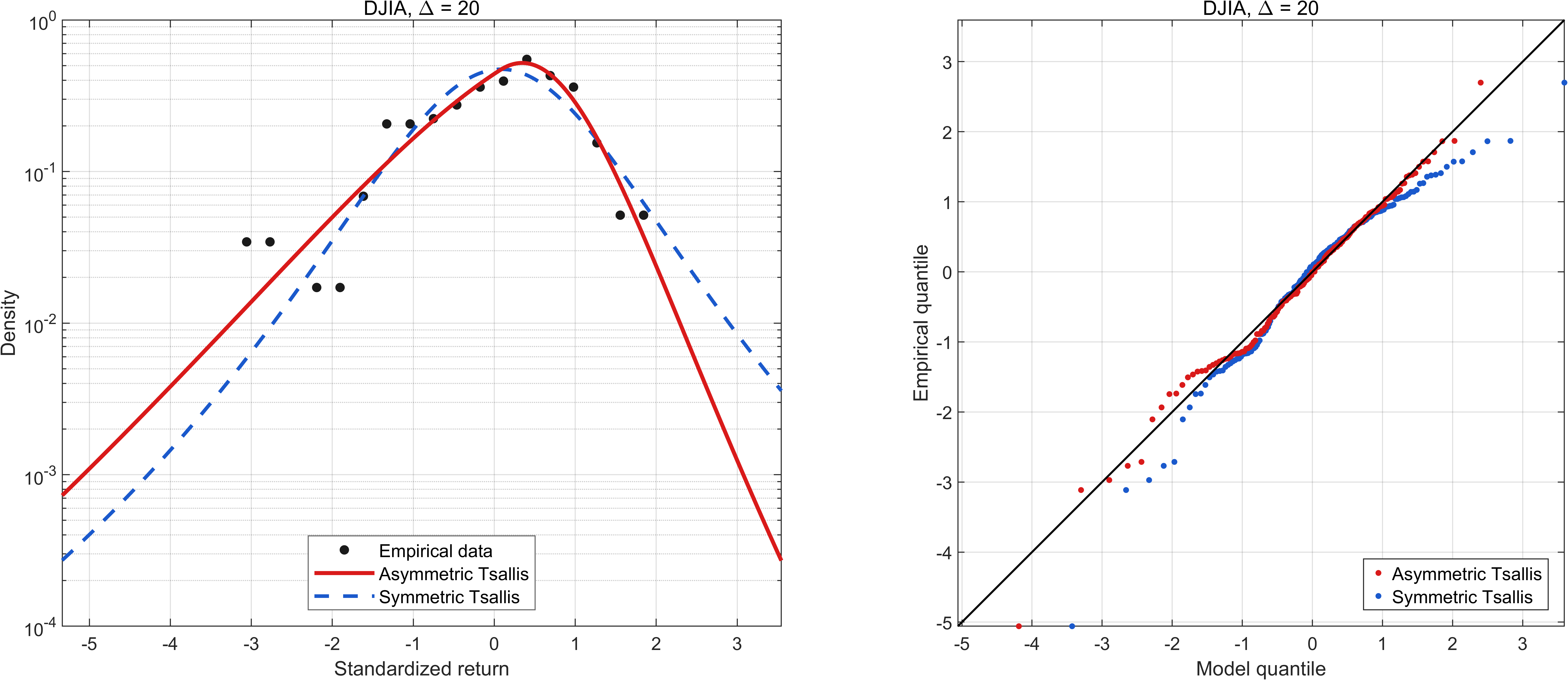}
  \caption{DJIA, $\Delta=20$.}
  \label{fig:pdfqq_DOWJONES_d20}
\end{figure}

\begin{figure}[!h]
  \centering
  \includegraphics[width=\textwidth]{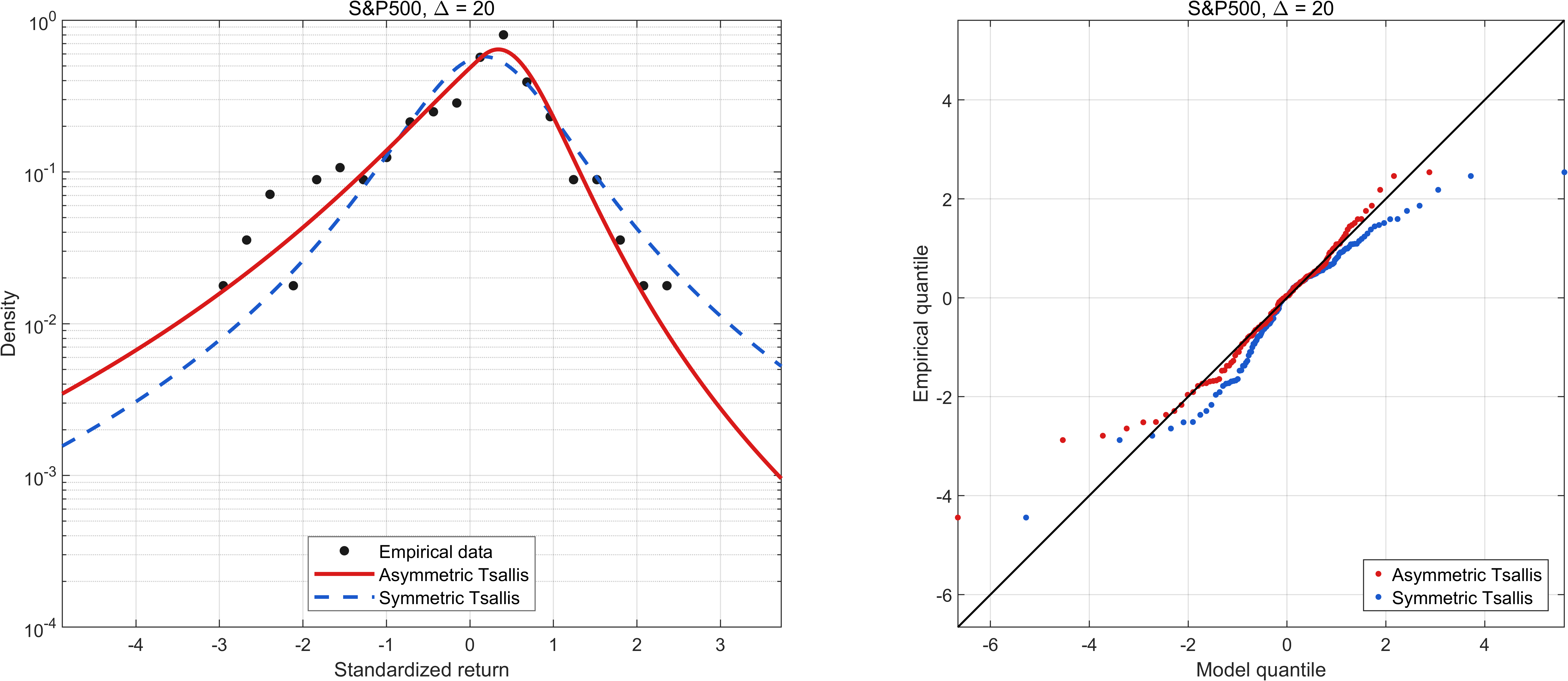}
  \caption{S\&P500, $\Delta=20$.}
  \label{fig:pdfqq_SP500_d20}
\end{figure}

\begin{figure}[!h]
  \centering
  \includegraphics[width=\textwidth]{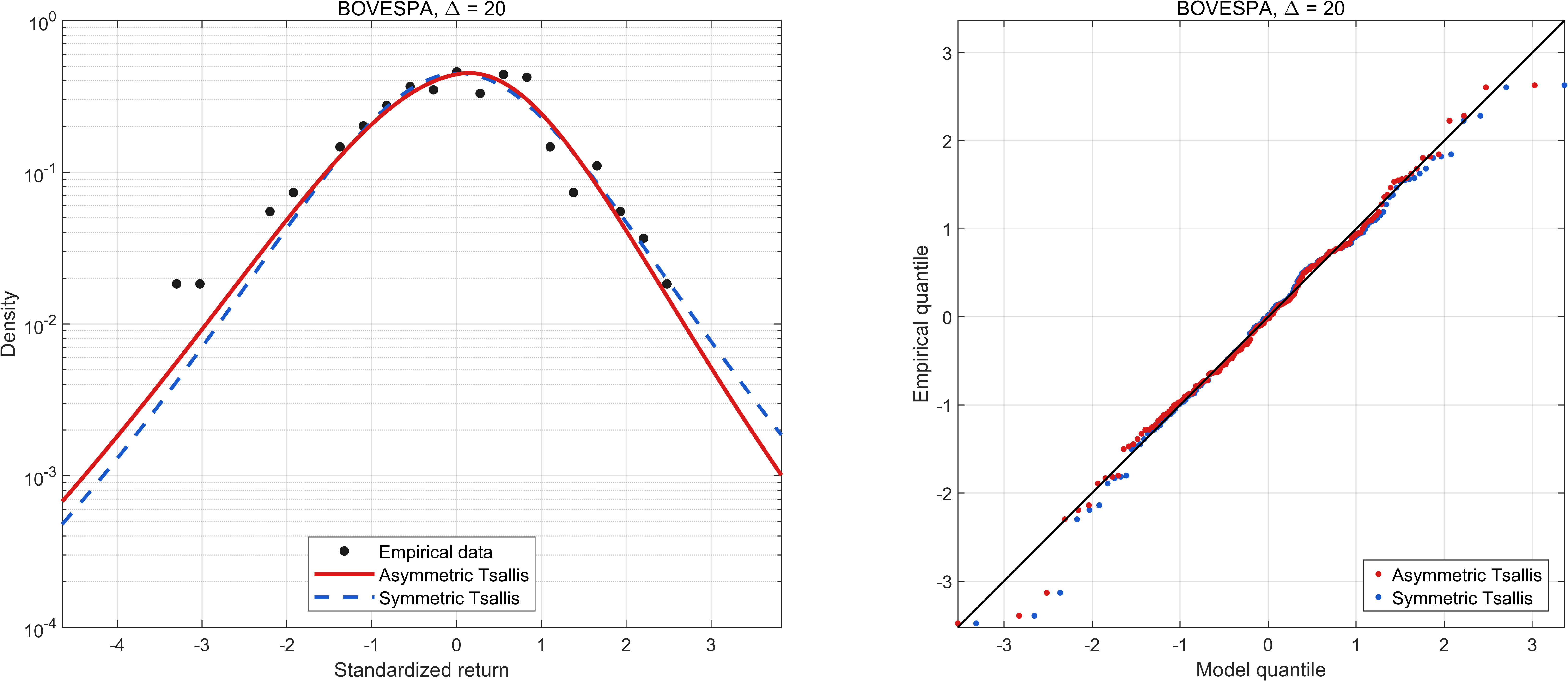}
  \caption{BOVESPA, $\Delta=20$.}
  \label{fig:pdfqq_BOVESPA_d20}
\end{figure}

\begin{figure}[!h]
  \centering
  \includegraphics[width=\textwidth]{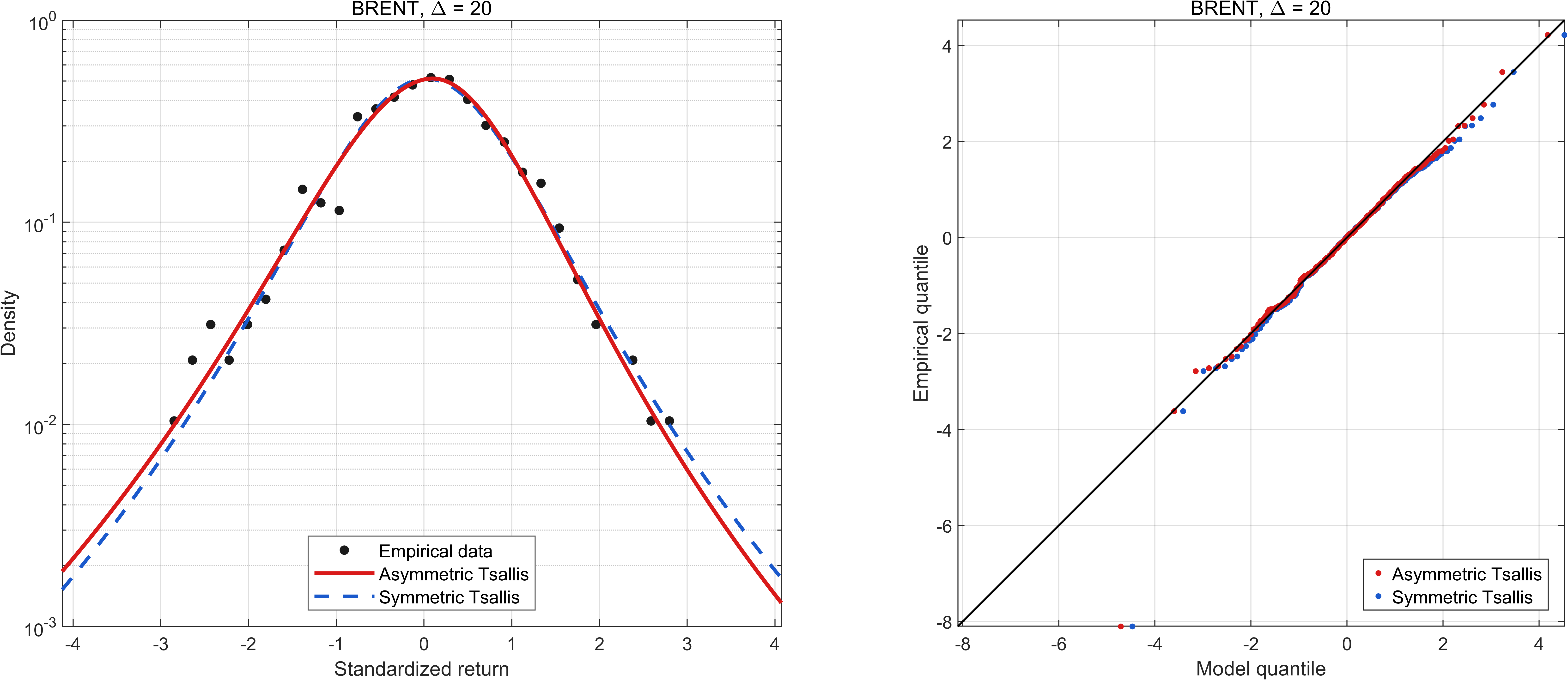}
  \caption{BRENT, $\Delta=20$.}
  \label{fig:pdfqq_BRENT_d20}
\end{figure}

\begin{figure}[!h]
  \centering
  \includegraphics[width=\textwidth]{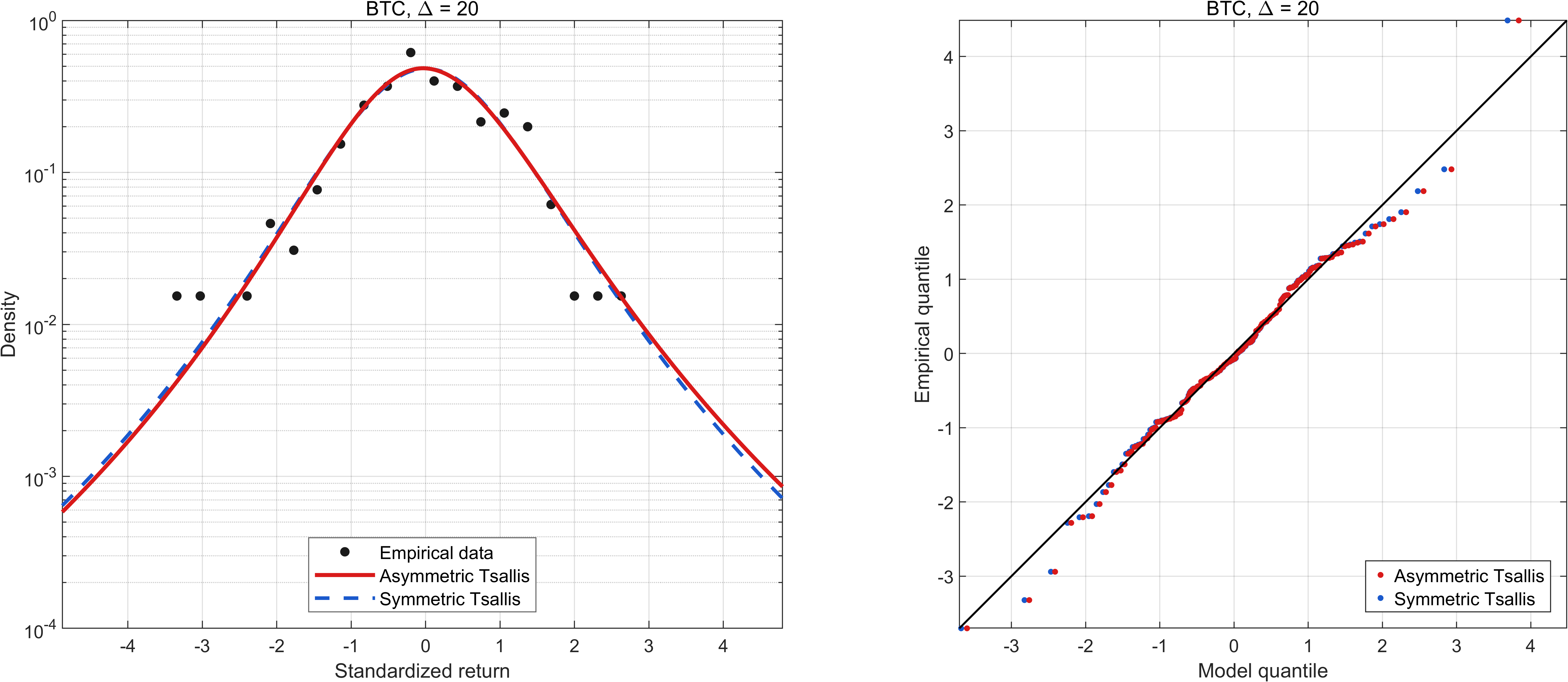}
  \caption{BTC, $\Delta=20$.}
  \label{fig:pdfqq_BITCOIN_d20}
\end{figure}

\begin{figure}[!h]
  \centering
  \includegraphics[width=\textwidth]{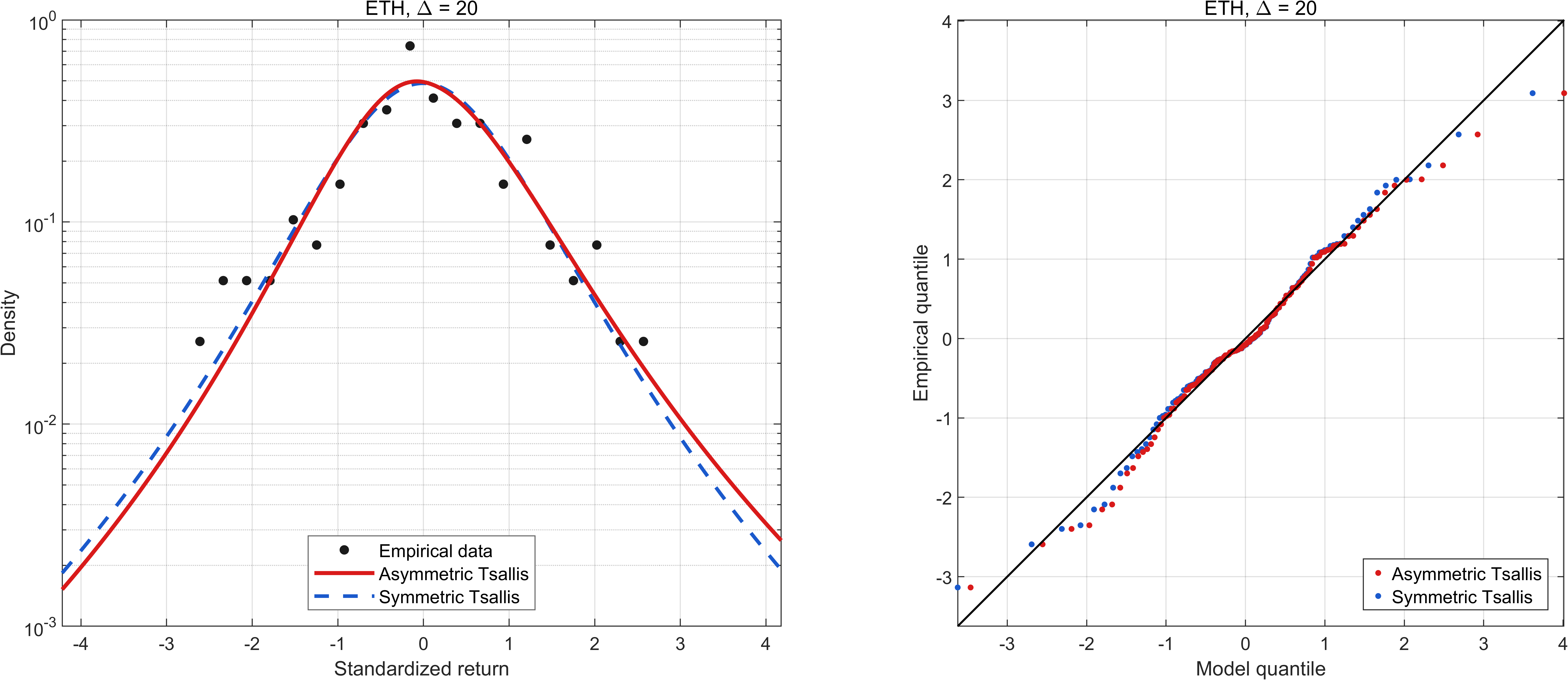}
  \caption{ETH, $\Delta=20$.}
  \label{fig:pdfqq_ETHEREUM_d20}
\end{figure}

\FloatBarrier

\end{document}